\theoremstyle{plain}
\newtheorem{thm}{Theorem}
\newtheorem{defi}{Definition}
\newtheorem{lem}{Lemma}
\def\theequation{\arabic{section}.\arabic{equation}}
\newcommand{\be}{\begin{eqnarray}}
\newcommand{\ee}{\end{eqnarray}}
\newcommand{\nn}{\nonumber \\}
\newcommand{\lb}{\label}
\newcommand{\p}[1]{(\ref{#1})}
\renewcommand{\u}{\underline}
\begin{document}

\begin{titlepage}

\vspace*{0.2cm}

\renewcommand{\thefootnote}{\star}
\begin{center}

{\LARGE\bf  Generic HKT geometries in the harmonic superspace approach  }\\

\vspace{0.5cm}

\vspace{1.5cm}
\renewcommand{\thefootnote}{$\star$}

{\quad \large\bf Sergey~Fedoruk} ${}^\star$,
\quad {\large\bf Evgeny~Ivanov} ${}^\star$,
\quad {\large\bf Andrei~Smilga} ${}^\ast$
 \vspace{0.5cm}

{${}^\star$ \it Bogoliubov Laboratory of Theoretical Physics, JINR,}\\
{\it 141980 Dubna, Moscow region, Russia}  \\
\vspace{0.1cm}

{\tt fedoruk@theor.jinr.ru, eivanov@theor.jinr.ru}\\
\vspace{0.7cm}

{${}^\ast$\it SUBATECH, Universit\'e de Nantes,}\\
{\it 4 rue Alfred Kastler, BP 20722, Nantes 44307, France}\\
\vspace{0.1cm}

{\tt smilga@subatech.in2p3.fr}\\

\end{center}
\vspace{0.2cm} \vskip 0.6truecm \nopagebreak

\begin{abstract}
\noindent We explain how a generic HKT geometry
can be derived using the language of ${\cal N} = 4$ supersymmetric
quantum mechanics. To this end, one should consider a Lagrangian
involving several ({\bf 4}, {\bf 4}, {\bf 0}) multiplets defined in
harmonic superspace and subject to nontrivial harmonic constraints.
Conjecturally, this general construction worked out in \cite{DI}
gives  a complete classification of all HKT geometries. Each such
geometry is generated by two different functions (potentials) of a special
type that depend on harmonic superfields and on harmonics.

  Given these two potentials, one can derive the vielbeins,  metric, connections and curvatures,
  but this is not so simple: one should solve
 rather complicated differential equations.
 We  illustrate the general construction by giving a detailed derivation of the metric for the hyper-K\"ahler
 Taub-NUT manifold. In the generic case, we arrive at an HKT geometry. In this paper, we give a simple proof
 of this assertion.

\end{abstract}

\vspace{1cm}
\bigskip
\noindent PACS: 11.30.Pb, 12.60.Jv, 03.65.-w, 03.70.+k, 04.65.+e

\smallskip
\noindent Keywords: sigma models, supersymmetric quantum  mechanics, HKT geometry, harmonic superspace \\
\phantom{Keywords: }

\newpage

\end{titlepage}

\setcounter{footnote}{0}

\setcounter{equation}0
\section{Introduction}

The language of supersymmetric quantum mechanics (SQM) is most adequate
and convenient for studying problems of differential geometry. This was
understood back in the eighties when E.~Witten showed how supersymmetry allows one to describe
in very simple terms the classical de Rham complex \cite{Witten}
and L.~Alvarez-Gaum\'e, D.~Friedan and P.~Windey gave a simple supersymmetric
proof of the Atiyah-Singer theorem
\cite{ASsusy}.~\footnote{This was done almost for all cases except for the index of the Dolbeault
operator for non-K\"ahler complex manifolds. This ``dark corner'' was recently illuminated in \cite{HRR}.}

The SQM methods not only allowed one to reproduce what was already known, but
 also to derive
many new mathematical results. This particular paper is devoted to the classification of the HKT
geometries.~\footnote{The abbreviation ``HKT'' means ``hyper-K\"ahler with torsion''. This term is somewhat misleading because
the HKT manifolds are not hyper-K\"ahler and not even K\"ahler, but now
it is firmly established in the literature, and we will use it. The HKT
geometries were discovered by physicists as geometries associated with supersymmetric
sigma models of a special kind \cite{GHR,HoPa1,strongHKT-def,GPS} and {\it then} attracted
a considerable attention of mathematicians (see e.g.
\cite{Grant,Verb}). It is worth noting that the first example of models with HKT geometry (as it was realized later) was ${\cal N}=(4, 4)$
supersymmetric extension of $2D$ WZNW $SU(2)$ sigma model \cite{IKr}, \cite{SSTVP}.}
This problem was solved in \cite{DI}. To do this, one should consider a supersymmetric sigma model endowed with the
extended ${\cal N} = 4$ supersymmetry (and having thus two pairs of complex supercharges). The corresponding superfield
Lagrangian depends on several ({\bf 4}, {\bf 4}, {\bf 0}) multiplets.~\footnote{We use the notation suggested in \cite{PT}.
 The first number counts the bosonic dynamical variables, the second number counts the fermions dynamical variables and the
third number the bosonic auxiliary fields.}

To define a ({\bf 4}, {\bf 4}, {\bf 0}) multiplet, we introduce as a starting point an extended ${\cal N} = 4\,,$ $d=1$
superspace
$(t; \theta_i, \bar \theta^i)$, $i=1,2$. Then we ``harmonize'' it \cite{harm1,harm}~\footnote{The technique of harmonic superspace
is explained in the monograph
\cite{harm} and, for the one-dimensional case (where superspace becomes
``supertime''), in Ref. \cite{IvLe}. In our presentation we basically follow the latter paper.} by defining $\theta^\pm = u_i^\pm \theta^i$, where $u_i^+$ is a complex
unitary spinor parameterizing the automorphism group $SU(2)$ of the extended superalgebra, and  $u_i^-$
is its complex conjugate.  A generic superfield depends on $(t; \, \theta^+,  \bar \theta^+, \theta^-, \bar\theta^-)$ and on the
harmonics $u_i^\pm$, but a special role is played by Grassmann-analytic superfields that depend, besides the harmonics, only on $\theta^+, \bar\theta^+$ and
the ``analytic time'' \p{tA}. A G-analytic superfield $\Phi$ satisfies the constraints $D^+ \Phi = \bar D^+ \Phi = 0$, which are quite
analogous to the constraints defining the chiral superfields. A G-analytic superfield is characterized by its harmonic charge --- an integer
eigenvalue of the operator \p{D0}.

Consider a G-analytic superfield $q^{+a}$ carrying unit harmonic charge and an extra doublet index $a$. To describe a {\it linear}
({\bf 4}, {\bf 4}, {\bf 0}) multiplet, we impose the additional harmonic constraint
\be
\lb{D++q=0}
D^{++} q^{+a} \ =\ 0 \, ,
\ee
where the harmonic derivative  $D^{++}$ is defined in \p{D++}, and
also require that
\be
\lb{cons-tilde}
\widetilde{q^+_a}  = \varepsilon^{ab} q^+_b   \equiv q^{+a}\, ,
\ee
where  the ``tilde'' conjugation is the superposition of the standard complex conjugation and the antipodal transformation of the harmonics, $\widetilde {u^{\pm}_j} = \varepsilon^{jk} u^{\pm}_k$ \cite{harm}.
Then the component expansion of  $q^{+a}$ reads~\footnote{Here $t$ is actually
the analytic time $t_A$, but we do not display the index $A$ anymore.}
\be
\lb{q+lin}
q^{+a} \ =\ x^{ja} (t) u_j^+ + \theta^+ \chi^a(t) + \bar \theta^+ \bar \chi^a(t) - 2i\,\theta^+ \bar \theta^+ {\dot x}^{ja} u_j^- \, ,
\ee
where the bosonic field $ x^{ja}$ is pseudoreal,
\be
\lb{pseudoreal}
(x^{ja})^* \ =\  \varepsilon_{jk} \varepsilon_{ab} x^{kb} \equiv x_{ja}\, ,
\ee
and
\be
\lb{conj-chia}
(\chi^a)^* \ =\ \bar\chi_a = \varepsilon_{ab} \bar \chi^b
 \ee
 ($\varepsilon_{12} = 1$).
The superfield $q^{+a}$   includes four real bosonic and four real fermionic component fields.

Now we  take $n$ such multiplets or, which is the same, assume that the index $a$ in
\p{q+lin} runs from 1 to $2n$. The constraints \p{cons-tilde} can then be
rewritten as
\be
\lb{cons-tilde-Omega}
\widetilde{q^+_a}  = \Omega^{ab} q^+_b  \equiv q^{+a} \, ,
\ee
where
\be
\lb{Omega}
\Omega^{ab} \ =\ - \Omega_{ab} \ =\ -{\rm diag} \left( i\sigma_2, \ldots, i\sigma_2 \right)
\ee
defines an antisymmetric symplectic form, $\Omega^{ab} \Omega_{bc} = \delta^a_c$.
A general supersymmetric action reads
\be
\lb{S-lin}
S \ =\ -\frac18 \int dt du d^4\theta \,  {\cal L} (q^{+a}, q^{-b}, u^\pm) \, ,
\ee
where
\be
\lb{q-def}
q^{-a} = D^{--} q^{+a}
\ee
with $D^{--}$ defined in \p{D--} and the numerical normalization factor is chosen to match the notations  of \cite{DI}.
If expanding this Lagrangian into components, one derives
\be
L \ =\ \frac 12 \,g_{ja, kb} \, {\dot x}^{ja}  {\dot x}^{kb} + \ {\rm fermion\ terms} ,
\ee
where the metric $g_{ja, kb}$ is expressed via  double derivatives of ${\cal L}$.

The system of an arbitrary number of linear
({\bf 4}, {\bf 4}, {\bf 0}) multiplets was studied in a different approach in \cite{FS-HKT}. It was shown that a $4n$-dimensional manifold
with the metric $g_{ja, kb}$ is a HKT manifold.
However, such system describes only a rather limited class of
HKT-manifolds: namely, the manifolds with vanishing {\it Obata curvature}
\cite{Obata}.~\footnote{We will discuss in details what are the Obata connection and Obata curvature later.}  In this case, there exist coordinates where three quaternionic  complex structures $(I^p)^{\ N}_M$ are  {\it constant} matrices
\p{I-HK},
in accordance with the general statement of Ref. \cite{HoPa}.

To describe a generic HKT metric, one should generalize the constraints \p{D++q=0}
and write
\be
\lb{D++q=F}
D^{++} q^{+a} \ =\ {\cal L}^{+3a} \, ,
\ee
where ${\cal L}^{+3a}$ is an arbitrary analytic superfield of harmonic charge +3, generically depending on all $q^{+b}$ and on
the harmonics.
The superfields $q^{+a}$ subject to the constraints \p{D++q=F}, \p{cons-tilde-Omega} describe {\it nonlinear}
({\bf 4}, {\bf 4}, {\bf 0}) multiplets.

The plan of the paper is the following. In the next section,
we recall the  basic definitions and properties of hyper-K\"{a}hler and HKT geometries.
In particular, we give a simple ``physical'' proof of the well-known mathematical fact that
the holonomy group of hyper-K\"{a}hler manifolds is $Sp(n)\equiv USp(2n)$
and derive the explicit expressions for  the  Obata connections of the HKT manifolds via  their complex structures.

In Sect. 3, we explain, following \cite{harm}, how  hyper-K\"{a}hler  geometries are described in the language  of  ${\cal N}{=}\,8$
supersymmetric mechanics in  one-dimensional ${\cal N}{=}\,8$ harmonic superspace obtained as $d=4 \rightarrow d=1$ dimensional
reduction of the ${\cal N}=2, d=4$ harmonic superspace formulation of the most general ${\cal N}=2, d=4$ supersymmetric hyper-K\"ahler
sigma model of Ref. \cite{harm}. Any hyper-K\"{a}hler metric
can be derived from the action \p{Sq+N8} involving the harmonic prepotential ${\cal L}^{+4}$.

In Sect. 4, we discuss the properties of generic  ${\cal N}{=}\,4$ supersymmetric $\sigma$ models with equal number
of real bosonic and fermionic dynamic variables. These models are best described in the language
of  ${\cal N}{=}\,1$  superfields ${\cal X}^M = x^M + i\theta \psi^M$. Generically, one obtains
either HKT or bi-HKT geometry \cite{GPS,Hull,biHKT}. We prove, however, an important theorem
that any model of this class with quaternionic complex structures is HKT.

In Sect. 5,  we show how to obtain generic HKT geometries in the language
 of supersymmetric  mechanical system  with the action \p{S-lin} now involving nonlinear
 supermultiplets $q^{+a}$ satisfying the constraints \p{D++q=F}.
 We show how these complicated constraints can be resolved and the metric and other geometric characteristics can be found.
 We also give a simple proof of the fact that the complex structures thus obtained are quaternionic and hence
 the geometry is indeed HKT, confirming the results of explicit calculations in Ref. \cite{DI}.

In Sect. 6, we show how these general methods work by presenting detailed consideration for a particular
example of the  hyper-K\"{a}hler Taub-NUT manifold \cite{CMP,harm}.

In the last section we discuss a general classification of HKT geometries following
from their supersymmetric description. These geometries are grouped in the families characterized
by a given constrained potential ${\cal L}^{+3a}$, but by different potentials ${\cal L}$. The geometries belonging
to a family that involves as a member  also a hyper-K\"{a}hler geometry can be called {\it reducible}
and all other HKT geometries {\it irreducible}. We note that irreducible geometries exist. In particular,
the HKT geometry derived in Ref. \cite{DV} is irreducible \cite{Valent,Papa}.

In Appendix\,A, we present the basics of ${\cal N}{=}\,4,$ $d{=}\,1$ harmonic superspace.

In Appendix\,B, we give a simple proof of the known mathematical fact that a triple
of quaternionic complex structures in tangent space, $(I^p)_{AB} = e_A^M e_B^N \, (I^p)_{MN}$,
can be brought by the tangent space rotations to the simple canonical form \p{I-HK}, \p{IJKcanon}.

In Appendices\,C and D we derive the Obata connection and the nonlinear transformation laws of the central-basis harmonic-independent
fermionic fields of the multiplet $({\bf 4, 4, 0})$  under ${\cal N}=4$ supersymmetry within the setting of Ref. \cite{DI}.

\section{HK and HKT geometries}
\setcounter{equation}0

\begin{defi}
A complex manifold is a manifold of even  real dimension endowed with the complex structure tensor $I^{\ M}_{N}$ satisfying the
conditions
\be
\lb{II=-1}
I_{M}{}^{N} I_{N}{}^{K}\  = \ - \delta_M^K \, ,
\ee
\be
\lb{Nijen}
\partial_{[M} I_{N]}{}^P \ =\ I_M{}^Q I_N{}^S \partial_{[Q} I_{S]}{}^P \, .
\ee
\end{defi}
We also assume that the manifold possesses a metric which is {\it Hermitian} with respect to the complex structure $I_M{}^{N}$
\be
I_{P}{}^{M} g_{MQ}I_N{}^{Q} = g_{PN}\,,\lb{Hermit}
\ee
which evidently amounts to the antisymmetry property
\be
I_{MN} = -I_{NM}\,. \lb{Antysymm}
\ee

The condition \p{Nijen} (it is equivalent to the requirement that the so-called Nijenhuis tensor vanishes)  provides for
{\it integrability} of the complex structure.
For an integrable complex structure,  one can
 introduce holomorphic coordinates, $x^M = \{z^m, \bar z^{\bar m} \}$, such that
the metric is manifestly Hermitian (one can always do it locally, but a nontrivial  property following from \p{Nijen}
is that the manifold can be divided
into a set of overlapping holomorphic charts with holomorphic glue functions),
\be
\lb{def-h}
ds^2 = \ 2h_{m \bar n} dz^m  d{\bar z}^{\bar n}\, .
\ee
In these coordinates,
the tensor $I_M{}^N$ has the following
nonzero components,
\be
\lb{Icompl}
I_m{}^{ n} = -I^n_{\ m} = -i \delta_m^n, \qquad I_{\bar m}{}^{\bar n} = -I^{\bar n}_{\ \bar m} = i
\delta_{\bar m}^{\bar n} \, .
\ee
It follows that $I_{m {\bar n}} = - I_{{\bar n} m} = -ih_{m{\bar n}}$.

\vspace{.2cm}

\begin{defi}
A K\"ahler manifold is a complex manifold  for which $I_{M}^{\ N}$ is covariantly constant,

\be
\label{Lev-Civ}
\nabla_P I_{M}^{\ N} = \ \partial_P I_{M}^{\ N} - \Gamma^Q_{PM} I_{Q}^{\ N} +
\Gamma^N_{PQ} I_{M}^{\ Q} \ =\  0 \, ,
\ee
where $\Gamma^Q_{PM}$ are the standard symmetric Christoffel symbols for the metric $g_{MN}$. \end{defi}

It then
follows that the K\"ahler form $K = I_{MN} \, dx^M \wedge dx^N$ is closed,
$dK = 0$. The existence of such a closed 2-form can be chosen as an alternative definition of K\"ahler manifolds.
Note that the integrability condition \p{Nijen} does not change its form if one replaces the partial derivatives
$\partial_M$ by the covariant ones with an arbitrary symmetric connection. In particular, one can replace $\partial_M$ in \p{Nijen}
by the Levi--Civita covariant derivative $\nabla_M$.  Due to \p{Lev-Civ}, Eq. \p{Nijen} is identically fulfilled.
Thus, \p{Nijen} is automatically satisfied for the covariantly constant complex structures obeying \p{Lev-Civ}.

\vspace{2mm}

{\bf Remark 1.}

For a generic complex manifold, the complex structure is not covariantly constant with respect to the usual Levi-Civita
connection appearing in \p{Lev-Civ}, but one can still define  connections with modified Christoffel symbols, such
that the modified covariant derivative of the complex structure tensor vanishes.
There are infinitely many such connections, but  a special role is played by the
{\it Bismut connection} \cite{Bismut}. This connection  involves nontrivial torsion,
\be
\lb{GamBis}
\hat \Gamma^Q_{PM} \ =\  \Gamma^Q_{PM} + \frac 12\, g^{QS} C_{SPM}
\ee
with completely antisymmetric $C_{SPM}$. Then the condition
\be
\lb{delhatI=0}
\hat {\nabla}_P I_{MN} \ = \ 0
\ee
defines $C_{SPM}$ uniquely. Its expression in holomorphic terms reads \cite{Dirac}
\be
\lb{C-Bis-hol}
C_{sp\bar m} \ =\ \partial_p h_{s \bar m} - \partial_s h_{p \bar m} , \ \ \ \ \ \ \ \
C_{\bar s \bar p  m} \ =\ \partial_{\bar p} h_{m \bar s} - \partial_{\bar s} h_{m \bar p} \, .
\ee
In other words,
\be
\lb{C-form-Bis}
C_{SPM} \, dx^S \wedge dx^P \wedge dx^M \ =\ 6(\bar \partial -  \partial) \omega \, ,
\ee
where $\omega = h_{s \bar m} dz^s \wedge d\bar z^{\bar m}$ and $\partial, \bar\partial$ are holomorphic and antiholomorphic
exterior derivatives.

For K\"ahler manifolds, the Bismut connection
coincides with the Levy-Civita connection.

\vspace{.2cm}

\begin{defi}
A hyper-K\"ahler manifold is a manifold with three different integrable
complex structures $I^p$ that satisfy the quaternion algebra
\be
\label{quatern}
I^p I^q = -\delta^{pq} +  \varepsilon^{pqr} I^r
\ee
and are subject to the covariant constancy  condition \p{Lev-Civ}.
\end{defi}

\vspace{.2cm}

Consider the  complex structures in tangent space $(I^p)_{AB} = e^M_A e^N_B  (I^p)_{MN}$.
In Appendix B we will prove that by appropriate rotations they can be reduced to  the following canonical form:

\be
\label{I-HK}
I^1 \ =\  {\rm diag} (\EuScript{I},\ldots, \EuScript{I}), \ \ \ \
I^2 \ = \ {\rm diag} (\EuScript{J},\ldots, \EuScript{J}), \ \ \ \
I^3 \ =\  {\rm diag} (\EuScript{K},\ldots, \EuScript{K}) \, ,
\ee
where $\EuScript{I}, \EuScript{J}$ and $\EuScript{K}$ are the 4-dimensional matrices related to the `t Hooft symbols $\eta^p_{AB}$
and forming quaternionic algebra:
\be
\lb{IJKcanon}
\!\!
\EuScript{I} \  =  \left( \begin{array}{cccc} 0&0&0&1 \\ 0&0&1&0 \\ 0&-1&0&0 \\ -1&0&0&0 \end{array} \right),\quad
\EuScript{J} \  =  \left( \begin{array}{cccc} 0&0&1&0 \\ 0&0&0&-1 \\ -1&0&0&0 \\ 0&1&0&0 \end{array} \right)
,\quad
\EuScript{K} \  = \left( \begin{array}{cccc} 0&1&0&0 \\ -1&0&0&0 \\ 0&0&0&1 \\ 0&0&-1&0 \end{array} \right).
\ee

Sometimes it is more convenient to represent the $4n$ vector tangent space indices $A, B$ on which the $O(4n)$ tangent space group is realized,
by the pair of indices $A, B \rightarrow (ia), (jb)$, $i, j = 1,2$, $a,b = 1, \ldots 2n$.
The indices are raised and lowered according to $X^i = \varepsilon^{ij}X_j, Y^a = \Omega^{ab} Y_b$ with $\varepsilon_{jk} = -\varepsilon^{jk}$; $\varepsilon_{12} = 1$ and
$\Omega_{ab} = -\Omega^{ab}$ being defined in \p{Omega}. In this notation,  only the subgroup $SU(2) \times Sp(n) \subset O(4
n)$
is manifest, $a, i$ being the indices of the corresponding spinorial representations. To establish the precise relation between the vector and spinor notations,
we introduce $4n$ rectangular matrices $\Sigma_A$:
\be
\label{Sigma}
(\Sigma_{1,2,3,4})^{ja} \ =\ \left( \sigma^\dagger_\mu\,, \, 0\,, \, ... \,,\, 0  \right)^{ja}, \ \ \ \
(\Sigma_{5,6,7,8})^{ja} \ =\ \left( 0\,,\, \sigma^\dagger_\mu\,, \, 0\,, \, ... \,,\, 0  \right)^{ja}, \ \ldots
\ee
with
\be
\lb{sigma-mu}
(\sigma^\dagger_\mu)^{ja} \  = \ \{(\vec{\sigma})^{ja}, -i\delta^{ja} \} \, .
\ee
In \p{sigma-mu} $a = 1,2$ and  $\vec{\sigma}$ are the standard Pauli matrices.

Then for any tensor we have the correspondence
$$
T^{\ldots {ja} \ldots} \ =\  \frac i{\sqrt{2}} \, (\Sigma_A)^{ja}\, T^{\ldots\; A \ldots}
$$
(the dots stand for all other indices).
In these terms, the flat tangent space metric is expressed as
\be
\lb{metr-simpl}
g^{ja,\, kb} \ =\  -\frac 12\, (\Sigma_A)^{ja}  (\Sigma_A)^{kb}  =
\varepsilon^{jk} \Omega^{ab}  \, ,\qquad
g_{ja,\, kb} \ =\  \varepsilon_{jk} \Omega_{ab} \, .
\ee
The symplectic constant matrix  $\Omega^{ab}$ is invariant under  $Sp(n) \subset SO(4n)$.

Note that for a real vector $V^A$, the components $V^{ja}$ obey the pseudoreality condition

\be
\lb{pseudoreal-V}
(V^{ja})^* \ =\ \varepsilon_{jk} \Omega_{ab}  V^{kb} \equiv V_{ja} \, .
\ee

\begin{thm}
The holonomy group of a hyper-K\"ahler manifold of dimension $4n$ is $Sp(n)$.
\end{thm}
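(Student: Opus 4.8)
\emph{Proof proposal.} The idea is to argue directly from the covariant constancy of the three complex structures, bypassing the Berger classification entirely. Parallel transport of tangent vectors around a closed loop is an isometry of the tangent space, so the holonomy group is a priori a subgroup of the structure group $SO(4n)$ preserving the positive-definite metric $g_{AB}$; restricting to contractible loops (or passing to the universal cover) it is connected. Since each $I^p$ is covariantly constant, \p{Lev-Civ}, it is invariant under parallel transport, so every holonomy element commutes, as an endomorphism of the tangent space, with all three of $I^1, I^2, I^3$. The theorem therefore reduces to the purely algebraic assertion that the subgroup of $SO(4n)$ commuting with a quaternionic triple \p{quatern} is $Sp(n)\equiv USp(2n)$.

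To make this transparent I would pass to the canonical tangent frame of Appendix B, in which $I^1, I^2, I^3$ are the constant block-diagonal matrices \p{I-HK}, \p{IJKcanon}, or equivalently split the tangent index $A\to(ia)$, $i=1,2$, $a=1,\dots,2n$, as in \p{pseudoreal-V}, so that $g_{ia,\,jb}=\varepsilon_{ij}\Omega_{ab}$ and the three $I^p$ act on the doublet index $i$ only (as the imaginary unit quaternions), leaving the symplectic index $a$ inert. In this frame $\partial_M I^p = 0$ and the condition \p{Lev-Civ} collapses to the algebraic relation
\be
[\,\omega_M\,,\,I^p\,]\ =\ 0\,,\qquad p=1,2,3\,,
\ee
for the $\mathfrak{so}(4n)$-valued spin connection one-form. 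Decomposing an antisymmetric generator $M_{(ia)(jb)}=-M_{(jb)(ia)}$ of $\mathfrak{so}(4n)$ under $SU(2)\times Sp(n)$ one finds $M=\varepsilon_{ij}\,\mathfrak{sp}(n)\ \oplus\ \Omega_{ab}\,\mathfrak{su}(2)\ \oplus\ (\text{a piece transforming in the vector }\mathbf 3\text{ of the }SU(2)\text{ acting on }i)$, where the $\mathfrak{sp}(n)$ summand is $\Omega$-symmetric in $ab$ and the $\mathfrak{su}(2)$ summand is $\varepsilon$-symmetric in $ij$. Since the $I^p$ span exactly that $\mathfrak{su}(2)$ factor, an element of $\mathfrak{so}(4n)$ commutes with all three of them iff it is an $SU(2)$ singlet, i.e. iff it lies in $\mathfrak{sp}(n)$. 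Hence $\omega_M\in\mathfrak{sp}(n)$, the curvature two-form $R_{MN}$ is $\mathfrak{sp}(n)$-valued as well, and by the Ambrose--Singer theorem the (restricted) holonomy group is contained in $Sp(n)$; for a generic hyper-K\"ahler metric it fills out the whole group (flat space, products, etc. give proper subgroups).

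The one step that is \emph{not} immediate, and is deferred to Appendix B, is the simultaneous normalisation: a priori \p{quatern} and Hermiticity hold only pointwise, and one must check that a single $O(4n)$ rotation brings all three complex structures at once to the constant canonical form \p{I-HK}, \p{IJKcanon} at every point. This is where I expect the only real work to be. Granted this, the content of the theorem is the elementary fact that the centraliser of the quaternion action on $\mathbb{R}^{4n}\cong\mathbb{H}^n$ inside $SO(4n)$ is the compact symplectic group $Sp(n)$ --- which is especially clean in the $(ia)$ notation, where any metric-preserving transformation acting trivially on the index $i$ must preserve $\Omega_{ab}$ and hence belong to $Sp(n)$. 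In physical language, this is just the statement that the fermions $\psi^{ia}$ of the associated $\sigma$-model carry an $SU(2)$ index $i$ that does not feel the Levi-Civita connection, precisely because the complex structures rotating it are covariantly constant.
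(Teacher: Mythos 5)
Your proposal is correct and follows essentially the same route as the paper: after reducing covariant constancy of the constant canonical structures \p{I-HK} to the commutation condition $[\omega_M, I^p]=0$, you decompose the $\mathfrak{so}(4n)$-valued spin connection under $SU(2)\times Sp(n)$ exactly as in \p{2terms-om} and observe that the commutant of the quaternionic triple is the $\varepsilon_{ij}T_{(ab)}$ piece, i.e.\ $\mathfrak{sp}(n)$. The only cosmetic differences are that you phrase the conclusion via parallel transport and Ambrose--Singer where the paper simply notes $R=d\omega+\omega\wedge\omega\in \mathfrak{sp}(n)$, and both arguments defer the simultaneous reduction to canonical form to Appendix B.
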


This statement is very well known to mathematicians. We give  here its detailed proof in explicit ``physical'' terms.

\begin{proof}

In the spinor notation, the canonical flat complex structures \p{I-HK}, \p{IJKcanon} can be  expressed as
\be
\lb{Ip}
(I^p)^{ja, kb} \ =\ -\frac 12 (\Sigma_A)^{ja}  (\Sigma_B)^{kb} (I^p)_{AB} =
-i(\sigma^p )^{jk} \, \Omega^{ab} \, , \qquad
(I^p)_{ja, kb} \ =\ i(\sigma^p )_{jk} \,  \Omega_{ab}\,,
\ee
where
\be
\lb{symm-jk}
(\sigma^p)_{jk} \ =\ (\sigma^p)_{kj} \ =\ \varepsilon_{kl}
(\sigma^p)_{j}{}^{l}\,,\qquad (\sigma^p)^{jk} \ =\ \varepsilon^{jl}
(\sigma^p)_{l}{}^{k}
\ee
and $(\sigma^p)_l{}^{k}$ are the  standard Pauli matrices.
\footnote{Note that the matrices $(\sigma^p)^{ja}$ in \p{sigma-mu} (with both upper indices) and
$(\sigma^p)_l{}^k$ in \p{symm-jk} (with the indices placed at the different levels) coincide.  The difference
in conventions is justified by the fact that the indices in \p{symm-jk} refer to one and
the same subgroup $SU(2) \subset SO(4n)$ whereas the indices $j,a$ in \p{sigma-mu} have different nature.}

The covariant constancy condition \p{Lev-Civ} for the triplet of the complex
structures $I^p_{MN}$, after passing to the tangent space representation, takes the form
\be
\partial_P I^p_{AB}+ \left(\omega_{P, AC} I^p_{CB} - I^p_{AC} \omega_{P,  CB} \right) = 0\,, \lb{vyvod}
\ee
where
\be
\lb{omega}
\omega_{P, AB} \ =\ e_{MA} \left( \partial_P e^M_B + \Gamma^M_{PK} e^K_B \right)
\ee
is the spin connection.
Substituting the constant expression \p{Ip} for $I^p_{AB}$ in \p{vyvod}, we observe that this condition is reduced to
\be
\omega_{P, AC} I^p_{CB} - I^p_{AC} \omega_{P,  CB} =0\,,
\ee
which tells us that the spin connection understood as a matrix in tangent space, whose entries are 1-forms,  commutes with all complex structures.
The same condition in the spinor notation takes the form
\be
\lb{CCC}
(\omega_M)_{ia \, jb}(\sigma^p)^{j}_{\ k} - (\sigma^p)^{\ j}_i  (\omega_M)_{ja \,  kb}= 0\,.
\ee

A generic antisymmetric connection $(\omega_M)_{ia \, jb} = -(\omega_M)_{jb \, ia}$ can be parametrized as
\be
\lb{2terms-om}
(\omega_M)_{ia \, jb} \ =\ \varepsilon_{ij} T_{M\,(ab)} \ + \ B_{M\,[ab]\, (ij)} \ =\
\varepsilon_{ij} T_{M\,(ab)} \ + \ B_{M\,[ab]}^q (\sigma^q)_{ij}
\ee
with arbitrary $ T_{M\,(ab)}$  and $B_{M \,[ab]}^q$. When one substitutes this into \p{CCC}, the first term $\propto \varepsilon_{ij}$ does not contribute and we are led to

\be
\lb{komu-tator}
B_{M \,[ab]}^q \, [\sigma^q, \sigma^p]_{ij} \ =\ 0 \, .
\ee
This holds for any $p$, which implies
\be
B_{M\,[ab]}^q\  = 0\,.
\ee

We thus derived,
\be
\lb{om-sympl}
\omega_{AB} \ =\ \omega_{ia \, jb} \ =\ \varepsilon_{ij} (T)_{(ab)} \, .
\ee
But any symmetric matrix of dimension $2n$ can be presented as
\be
\lb{sympl-alg}
T_{(ab)} \ = \ T_a^{\ c} \, \Omega_{bc} \, ,
\ee
where $T_a^{\ c} \in sp(n)$.~\footnote{Indeed, an element $h$ of $sp(n)$ is a Hermitian $2n$-dimensional
matrix satisfying $h^T \Omega + \Omega h = 0$.} Thus, $\omega_A^{\ B}$ belongs to $sp(n)$. But then
$R_A^{\ B} = d\omega_A^{\ B} + \omega_A^{\ C} \wedge \omega_C^{\ B}$ also belongs to $sp(n)$, and the theorem is proven.
\end{proof}

\vspace{.2cm}

{\bf Remark 2.}

One can also easily prove the inverse theorem: {\it If the holonomy group is $Sp(n)$, i.e. $R_A^{\ B} \in
sp(n)$, then one can choose three quaternionic covariantly constant complex structures and the manifold is hyper-K\"ahler.}

\begin{proof}
Basically, it follows from the following lemma:
{\it Let $\mathfrak{g}$ be a Lie algebra and $\mathfrak{h}$ be its subalgebra. Let $\hat A_M $ and $\hat F_{MN}$ be the gauge potential and the field density for
the algebra $\mathfrak{g}$.  Let $ \hat F \in \mathfrak{h}$. Then one can always choose the gauge where also
$\hat A \in \mathfrak{h}$. }

Such a gauge is well-known, it is the Fock-Schwinger gauge $x^M A_M = 0$ \cite{Fock}. In this gauge the
potential is expressed via the field density,
\be
\lb{A-via-F}
\hat A_M = - \int_0^1 d\alpha \, \alpha x^N \hat F_{MN}(\alpha x) \, .
\ee
In the  case of interest, $\mathfrak{g} = so(4n), \ \mathfrak{h} = sp(n), \omega \equiv A$ and $R \equiv F$. If $R \in sp(n)$ one can choose the coordinates
and vielbeins with $\omega \in sp(n)$.
And once $\omega \in sp(n)$, it commutes with the quaternionic complex structures \p{Ip}. Bearing in mind \p{vyvod}, it follows that
the convolutions of these flat structures with the vielbeins are covariantly constant.
\end{proof}

This means that one can {\it define} a hyper--K\"ahler manifold as a
manifold where the Riemann curvature form $R_A^{\ B}$ lies in the $sp(n)$
algebra. This definition and  Definition 3 are equivalent.

\vspace{.1cm}

\begin{defi}
An HKT manifold is a manifold endowed with three integrable quaternionic
complex structures that are covariantly constant with respect to one and the
same Bismut connection.

\end{defi}

The curvature form $\hat R_A^{\ B}$ of this Bismut connection belongs to
$sp(n)$ --- it is proven in exactly same way as for the
Riemann curvature form for hyper-K\"ahler manifolds. Alternatively,
if there exists a torsionful metric-preserving connection whose
curvature form lies in $sp(n)$, one can find three quaternionic complex
structures that are covariantly constant with respect to this connection, and we
are dealing with an HKT manifold.

Besides the Bismut connection, a distinguished role for HKT manifolds is played  by the Obata connection.

\begin{defi}
The Obata connection is a torsionless connection with respect to which all three quaternionic
complex structures of an HKT manifold are covariantly constant.
\end{defi}

For a hyper-K\"ahler manifold, the Obata connection coincides with the Levy-Civita connection, but it is not
so in a generic HKT case. The essential difference is that the covariant Obata derivative of the metric
tensor does not vanish! This means in particular
that vectors do not only rotate under parallel
transports, but also change their length; the holonomy group is not compact and complicated.

\begin{thm}
Let $I,J,K$ be three integrable quaternionic complex structures,
\be
\lb{quat}
IJ = -JI = K, \ \ \ \ \ JK = -KJ = I, \ \ \ \ \ KI= -IK = J \, .
\ee
Choose the complex coordinates associated with $I\,,$ i.e. assume that $I\,,$ is constant and diagonal as is given in Eq. \p{Icompl}. Then the Obata connection is given by the formula \cite{Michel}
\be
\lb{Obata}
(\Gamma^O)^k_{mn} \ =\  J_n^{\ \bar l} \partial_m J_{\bar l}^{\ k}   \ =\   K_n^{\ \bar l} \partial_m K_{\bar l}^{\ k}  , \ \ \ \
(\Gamma^O)^{\bar k}_{\bar m \bar n} \ =\ [(\Gamma^O)^k_{mn}]^* \, ,
\ee
and all other components vanish.
\end{thm}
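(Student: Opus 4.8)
The plan is to pin down the unique torsion-free connection that annihilates all three structures by a direct computation in the holomorphic frame adapted to $I$, and then to recognize its surviving components as \p{Obata}. Throughout, one uses the constant form \p{Icompl} of $I$ and splits every index into a holomorphic part ($m,n,\ldots$) and an antiholomorphic part ($\bar m,\bar n,\ldots$).

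\emph{Step 1: only the ``pure'' block of the connection survives.} First I would write a general connection $\Gamma^Q_{PM}$, not yet assumed symmetric, and impose $\nabla I=0$ in the form \p{Lev-Civ}. Since $\partial_P I_M{}^N=0$ in these coordinates, each component of this equation is either an identity or forces a component of $\Gamma$ to vanish; one finds $\Gamma^{\bar k}_{PM}=0$ whenever the second subscript $M$ is holomorphic, and $\Gamma^{k}_{PM}=0$ whenever $M$ is antiholomorphic. Adding the torsion-free condition $\Gamma^Q_{PM}=\Gamma^Q_{MP}$ then kills all components with mixed index types, leaving only $\Gamma^k_{mn}$ (all three indices holomorphic) and its complex conjugate $\Gamma^{\bar k}_{\bar m\bar n}$. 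This step uses only torsion-freeness and the (here built-in) integrability of $I$.

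\emph{Step 2: the block is fixed by $J$, which gives uniqueness.} Because $I$ and $J$ anticommute, the same eigenspace argument gives $J_m{}^n=J_{\bar m}{}^{\bar n}=0$, i.e. $J$ exchanges the holomorphic and antiholomorphic subspaces, while $J_M{}^N J_N{}^K=-\delta_M^K$ forces $J_m{}^{\bar l}J_{\bar l}{}^n=-\delta_m^n$, so $J_{\bar n}{}^{k}$ is an invertible matrix. Evaluating $\nabla_m J_{\bar n}{}^k=0$, the only surviving Christoffel term after Step 1 is $\Gamma^k_{mq}J_{\bar n}{}^q$, so $\Gamma^k_{mq}J_{\bar n}{}^q=-\partial_m J_{\bar n}{}^k$; contracting with $J_l{}^{\bar n}$ and using $J_l{}^{\bar n}J_{\bar n}{}^q=-\delta_l^q$ yields $\Gamma^k_{ml}=J_l{}^{\bar n}\partial_m J_{\bar n}{}^k$, which is precisely \p{Obata}, the conjugate equation fixing $\Gamma^{\bar k}_{\bar m\bar n}$. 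Since every component was forced at each stage, at most one torsion-free connection can annihilate both $I$ and $J$.

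\emph{Step 3: the candidate works, and the $K$-formula follows.} It remains to take $\Gamma^k_{mn}:=J_n{}^{\bar l}\partial_m J_{\bar l}{}^k$ with its conjugate (and zero otherwise) and verify everything. Using $J_n{}^{\bar l}J_{\bar l}{}^k=-\delta_n^k$ one rewrites $\Gamma^k_{mn}=-(\partial_m J_n{}^{\bar l})\,J_{\bar l}{}^{k}$; the integrability condition \p{Nijen} for $J$, taken with $M,N$ holomorphic and $P$ antiholomorphic, degenerates (because $J_{\bar m}{}^{\bar n}=0$) to $\partial_{[m}J_{n]}{}^{\bar l}=0$, whence $\Gamma^k_{[mn]}=0$ and the connection is torsion-free. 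Then $\nabla I=0$ is automatic (constant $I$, pure Christoffels); $\nabla J=0$ holds because its $\nabla_m J_{\bar n}{}^k$ component vanishes by construction and the remaining components follow from $J_M{}^N J_N{}^K=-\delta_M^K$ and the reality of $J$ (complex conjugation interchanges barred and unbarred indices, which also gives $\Gamma^{\bar k}_{\bar m\bar n}=[\Gamma^k_{mn}]^\ast$). Finally $\nabla K=0$ follows from $K=IJ$ and the Leibniz rule; since $K$ anticommutes with $I$ it too exchanges the two subspaces, so repeating the computation of Step 2 with $K$ in place of $J$ produces the second equality in \p{Obata}, $\Gamma^k_{mn}=K_n{}^{\bar l}\partial_m K_{\bar l}{}^k$. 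The main obstacle is the torsion-freeness check: it is exactly the vanishing of the Nijenhuis tensor of $J$ that makes the antisymmetric part of $J_n{}^{\bar l}\partial_m J_{\bar l}{}^k$ disappear, and one must pick the right index pattern in \p{Nijen} to see this in one line; everything else is careful bookkeeping of holomorphic versus antiholomorphic indices.
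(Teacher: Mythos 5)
Your proof is correct and follows essentially the same route as the paper's: it rests on the same three ingredients, namely that $J$ and $K$ have only mixed-type components because they anticommute with the diagonal $I$ (the paper's Lemma~1), that the Nijenhuis condition for $J$ forces $\partial_{[m}J_{n]}{}^{\bar l}=0$ and hence the symmetry of $\Gamma^k_{mn}$ (Lemma~2), and the direct check of covariant constancy using $J_M{}^N J_N{}^K=-\delta_M^K$ (Lemma~3). The only organizational difference is that you \emph{derive} the connection from $\nabla I=\nabla J=0$ component by component, which makes uniqueness (the paper's Lemma~4, proved there via the difference of two candidate connections) an automatic byproduct, and you spell out the verification that the paper leaves to the reader.
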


\begin{proof} It consists of four steps

\begin{lem} In the chosen coordinates, the only non-vanishing components of the structure $J$ are
$J_{\bar l}^{\ k}$ and $J_l^{\ \bar k}$. The same is true for $K$.
\end{lem}

\begin{proof}
Introduce the operator
$\iota$ which acts on a generic $n$-form $\omega$ according to the rule:
 \be
\lb{iota}
&&{\rm if} \ \ \ \ \ \ \ \ \ \ \omega \ =\ \frac 1{n!}\, \omega_{M_1
\ldots M_n} dx^{M_1} \wedge \cdots \wedge dx^{M_n}\,, \nn
&&{\rm then} \ \ \ \ \iota \omega \ =\ \frac 1{(n-1)!}\,
\omega_{N [M_2 \ldots M_{n-1}} (I)^{N}_{\ M_1]}\,.
dx^{M_1} \wedge \cdots \wedge dx^{M_n}\,.
\ee
For a form $\omega_{p,q}$ with $p$ holomorphic and $q$
antiholomorphic indices, the action
of $\iota$ is reduced to the multiplication by $i(p-q)$.

There is a 2-form associated with each complex structure. Define
$$ {\cal J} \ =\ J_{MN} dx^M \wedge dx^N, \ \ \ \ \ \ {\cal K} \ =\ K_{MN} dx^M \wedge dx^N $$
and consider the form ${\cal J}+i{\cal K}$. Using the definition \p{iota} and the quaternion algebra \p{quat}, it is straightforward to verify
that $\iota ({\cal J}+i {\cal K}) = 2i({\cal J}+i{\cal K})$. That means that the form ${\cal J} + i {\cal K}$ has type $(2,0)$ with respect to $I$.
Analogously, $\iota ({\cal J}-i{\cal K}) = -2i({\cal J}-i{\cal K})$, so that ${\cal J}-i{\cal K}$ is of
type $(0,2)$. It follows that the only non-vanishing components of $J,K$ have either
both holomorphic or both anti-holomorphic lower indices, and {\it Lemma 1} is proven. \end{proof}

We have introduced the operator  \p{iota} to make contact with \cite{GPS,Grant} and to use it later  in \p{iota-C}. But {\it Lemma 1} can actually be proven without
resorting to the language of forms.  It directly follows from the quaternion algebra \p{quat} with the special choice \p{Icompl} for $I_M^{\ N}$.
Indeed, the relations \p{quat} imply
\be
\lb{quat+-}
I(J+iK) = -i(J+iK), \ \ \ \ \ \ \ (J+iK)I = i(J+iK), \nn
I(J-iK) = i(J-iK), \ \ \ \ \ \ \ (J-iK)I = -i(J+iK) \, .
\ee
Bearing in mind  \p{Icompl}, we derive that the only non-vanishing
components of the
tensor $J+iK$ are $(J + iK)_m^{\ \bar n}$ and the only non-vanishing
components of the  tensor $J-iK$ are $(J - iK)_{\bar m}^{\  n}\,$. From this {\it Lemma 1} immediately follows. It follows in addition that
\be \lb{JKcomp}
K_m^{\ \bar n} = -i J_m^{\ \bar n}\,, \quad
K_{\bar m}^{\ n} = i J_{\bar m}^{\ n}\,.
\ee
Then the second relation in \p{quat} amounts to the basic property \p{II=-1} of the complex structures $J,K$. 

\begin{lem}
The expression \p{Obata} is symmetric under permutation $m \leftrightarrow n$.
\end{lem}

\begin{proof}
This follows from integrability. Indeed, the condition \p{Nijen} for the structure $J$ implies
$$
J_S{}^{M} (\partial_M J_N{}^{K} - \partial_N J_M{}^{K} ) \ =\
J_N^{\ Q} (\partial_Q J_S{}^{K} - \partial_S J_Q{}^{K} ) \, .
$$
Choose $S=s, N=n, K = k$. Then, bearing in mind {\it Lemma 1},
$$
J_s{}^{\bar m} \partial_n J_{\bar m}{}^{k} \ =\ J_n{}^{\bar m} \partial_s J_{\bar m}{}^k \, .
$$
\end{proof}

Using $J^2 = -1$ to flip the derivatives, we may derive
\be
\lb{sym-Obata}
J_{\bar m}{}^k \partial_n J_s{}^{\bar m} \ =\ J_{\bar m}{}^k \partial_s J_n{}^{\bar m} \, .
\ee
It amounts to the simple relation
\be
\lb{SimpleRel}
\partial_n J_s{}^{\bar m} - \partial_s J_n{}^{\bar m} = 0\, ,
\ee
and the same holds for $K$.
In the language of forms, this means that the exterior holomorphic derivative of the
$(2,0)$-form ${\cal J} + i {\cal K}$ vanishes.

\vspace{2mm}

{\bf  Remark 3.}

In fact, the existence of such a closed
holomorphic $(2,0)$-form may be taken as an {\it alternative definition} of an HKT-manifold \cite{Grant,Verb}.
The existence of a universal Bismut covariant derivative, as is spelled out in  Definition 4, can be derived from that.

\begin{lem}
The Obata covariant derivatives of all complex structures vanish.
\end{lem}
\begin{proof} It can be checked rather directly using Eqs.~\p{Obata}, \p{Icompl},  and {\it Lemma 1}. We leave it to the reader.
\end{proof}

\begin{lem}
The Obata connection is unique.
\end{lem}

\begin{proof}
Suppose there are two different Obata connections. Let their difference be
$\Delta^N_{PM}$. The identities
\be
\Delta^Q_{PM} I_Q^{\ N} - \Delta^N_{PQ} I_M^{\ Q}  \ =\ 0 \, , \lb{Del-I} \\
\Delta^Q_{PM} (J\pm iK)_Q^{\ N} - \Delta^N_{PQ} (J \pm iK)_M^{\ Q}  \ =\ 0 \lb{Del-JK}
\ee
should hold.

It follows from \p{Del-I}, \p{Icompl} and from the symmetry $\Delta^N_{PM} = \Delta^N_{MP}$
that all the components except $\Delta^n_{pm}$ and $\Delta^{\bar n}_{\bar p \bar m}$ vanish.
And the latter vanish due to \p{Del-JK} and {\it Lemma 1}.

\end{proof}

\end{proof}

\section{Classification of hyper-K\"ahler manifolds with harmonic tools}
\setcounter{equation}0

The standard de Rham complex is characterized by a nilpotent exterior derivative operator $d$ and its Hermitian
conjugate $d^\dagger$. In supersymmetric approach, $d$ and $d^\dagger$ are mapped into a complex supercharge $Q$ and its conjugate.
The ${\cal N}=2,\, d=1$ superspace description involves several real $({\bf 1}, {\bf 2}, {\bf 1})$ superfields,
\be
\lb{121superfield}
X^M = x^M + \theta \psi^M + \bar \psi^M \bar \theta + F^M \theta \bar \theta \, ,
\ee
where $x^M$ are the bosonic dynamical variables (the coordinates on the manifold), $\psi^M$ are complex dynamical fermionic
variables and $F^M$ are the bosonic auxiliary fields (in the component expansion of the action
\p{act-Rham} they enter without time derivatives and can be integrated over). The action has the form
\be
\lb{act-Rham}
S \ =\ \frac 12 \int dt d\theta d\bar\theta \, g_{MN}(X) D X^M \bar D X^N \, ,
\ee
where $D$ and $\bar D$ are the covariant supersymmetric derivatives. Going down into components, one obtains the standard bosonic
kinetic part
of the Lagrangian,
\be
\lb{Lkin-bos}
L_{\rm bos} \ =\ \frac 12 \, g_{MN}(x) {\dot x}^M {\dot x}^N
\ee
describing the motion of a particle along a curved manifold. The action \p{act-Rham} can be written for any manifold.

In special cases, in addition to the manifest ${\cal N} = 2$ supersymmetry that the action \p{act-Rham} exhibits,
one can observe the presence of extra ``hidden'' supersymmetries. Thus, if the manifold is K\"ahler (and, hence, even-dimensional), the action \p{act-Rham}
is invariant under the extra supersymmetry,
\be
\lb{delX-K}
\delta X^M \ =\ I_{N}{}^M(X) \,  \left(\epsilon D X^N - \bar \epsilon \bar D X^N \right) \, ,
\ee
where $I_{N}{}^M$ is the complex structure and $\epsilon$ is a complex Grassmann transformation parameter. The total
supersymmetry is then ${\cal N} = 4$. The relevant superalgebra  closes off shell, which can be directly checked by evaluating Lie brackets of the superfield
transformations \p{delX-K}.

Besides the formulation in terms of $2n$ $({\bf 1, 2, 1})$ multiplets, two other off-shell formulations of this model are possible. It can
be formulated in terms of $n$ pairs of complex chiral ${\cal N}=2$ superfields $({\bf 2, 2, 0})$ and  $({\bf 0, 2, 2})$ \cite{Dirac}.
The same model can also be formulated in extended ${\cal N}=4, d=1$ superspace in terms of  $n$ chiral
$({\bf 2, 4, 2})$ superfields \cite{Zumino}.

For hyper-K\"ahler manifolds, the action is invariant, besides the manifest ${\cal N} = 2$ supersymmetry, with respect
to three different extra supersymmetries \p{delX-K} involving three
quaternionic complex structures $I^p$ and three different complex Grassmann parameters $\epsilon^p$. One can show that the generators
of different supersymmetries anticommute. The total supersymmetry of the model is thus ${\cal N} = 8$. One can further
prove that a hyper-K\"ahler metric is not only a sufficient, but also necessary condition to have ${\cal N} = 8$ supersymmetry
in the action \p{act-Rham} \cite{AG-F}.

The observation that the action \p{act-Rham} with three extra ${\cal N}=2$ supersymmetries picks up the hyper-K\"ahler
manifolds as the bosonic targets, does not yet give any tool of how to {\it explicitly construct} hyper-K\"ahler metrics. The latter can only
be achieved in the harmonic superspace approach, where {\it all} eight one-dimensional supersymmetries are manifest
and off-shell.
\footnote{Note that the harmonic superspace description implies the presence of an infinite number of auxiliary fields.
This is a crucial distinction of hyper-K\"ahler models from the  simple K\"ahler models discussed above:  the latter
can be described by  ${\cal N}=2$  or ${\cal N}=4$ superfields that live in ordinary superspace and
involve a finite number of auxiliary fields.}

Consider an extended  ${\cal N} {=}\, 8,$ $d{=}\,1$ superspace $(t, \theta_{i\alpha}, \bar \theta^{i\alpha})$,
$\bar \theta^{i\alpha}=(\theta_{i\alpha})^\ast$ with $\alpha = 1,2$
and the G-analytic superspace $(\zeta, u) \equiv (t_A, \theta^{+}_{\alpha}, \bar \theta^{+\alpha} , u)$.
The latter superspace is obtained from  the  ${\cal N}{=}\,2$, $d{=}\,4$ harmonic analytic superspace \cite{harm} by dimensional
reduction. It represents a
direct generalization of the ${\cal N} {=}\, 4,$ $d{=}\,1$ harmonic analytic superspace \cite{IvLe} briefly
described  in Appendix A (we only endow the odd coordinates with the extra
index $\alpha$). Consider a G-analytic superfield
\footnote{We have not displayed here the index $A$ for $t$, as we did not do so  in Eq.\p{q+lin}.}
 \be
\lb{q+N8}
Q^+(\zeta, u) = F^+(t, u)  +  \theta^{+}_{\alpha} \chi^\alpha(t,u) + \bar\theta^{+\alpha}  \kappa_\alpha(t,u)
+ \theta^{+}_{\alpha} \bar \theta^{+\alpha} A^-(t, u) + (\theta^{+}_{\alpha} \bar \theta^{+\alpha})^2D^{-3}(t,u) +\ldots
\ee
(only {\it some}  terms relevant for us in what follows  are displayed in the expansion).
Take $2n$ such superfields $Q^{+a}$ subject to the constraint
\p{cons-tilde-Omega}.
Consider the action (see \cite{CMP} and Chapter 5 of Ref.~\cite{harm}\footnote{The coefficients in \p{q+N8} and \p{Sq+N8} display
some deviations
from those in \cite{harm}. The measure of Grassmann integration over the analytic superspace is defined in Appendix A.} )
\be
\lb{Sq+N8}
S \ =\ \int dt \, du \, d^2 \theta^+ d^2 \bar \theta^+ \left[
\,\frac 12 \
Q^+_a D^{++} Q^{+a}  +  {\cal L}^{+4}(Q^+, u) \right] \, ,
\ee
with
\be
\lb{D++N8}
D^{++} = \ \partial^{++} + 2i\theta^{+}_{\alpha} \bar \theta^{+\alpha} \frac {\partial}{\partial t}\, ,
\ee
$\partial^{++}$ being defined in Eq. \p{part-pmpm}.
Here  ${\cal L}^{+4}$ is an arbitrary function of $Q^{+a}$ and $u^\pm$, such that it carries the harmonic charge +4. By construction,
this action has a manifest ${\cal N} = 8$ supersymmetry (there are four complex transformation
parameters associated with the shifts of $\theta^{+}_{\alpha}$ and   $\bar \theta^{+\alpha}$:
$\delta \theta^{+}_{\alpha} = \epsilon_{\alpha}^{i}u^+_i$,
$\delta \bar \theta^{+\alpha} = \bar{\epsilon}^{\alpha}_{i}u^{+ i}$).

The superfield equation of motion following from \p{Sq+N8} reads
\be
\lb{eqmotL+4}
D^{++} Q^{+a} \ =\  \Omega^{ab} \frac {\partial {\cal L}^{+4}}{\partial Q^{+b}} \, .
\ee

Substituting there the expansion \p{q+N8}, we obtain a set of the equations for the components. In particular, we derive
\be
\lb{eqmot-F}
\partial^{++} F^{+a} \ =\ \Omega^{ab} \frac {\partial {\cal L}^{+4}}{\partial F^{+b}}
\ee
and
\be
\lb{eqmot-A}
\partial^{++} A^{-a}  + 2i\dot{F}^{+ a} -   \Omega^{ab} \frac {\partial^2 {\cal L}^{+4}}{\partial F^{+b} \partial F^{+c}} A^{-c}
\ \ + \ \ \ \ \mbox{(terms  with  fermionic  fields)} =0\,.
\ee

Consider equation \p{eqmot-F}. If ${\cal L}^{+4}$ were absent, the linear equation $\partial^{++} F^{+a} = 0$ would have a simple solution $F^{+a} \equiv F^{+a}_0 = x^{ja} u^+_j $, with $x^{ja}$ obeying the pseudoreality condition
\p{pseudoreal-V}  following from the constraint \p{cons-tilde-Omega}.
When ${\cal L}^{+4} \neq 0$, it is rather difficult task to find the solution to  \p{eqmot-F}. To date, it was found in a closed form
only for a few particular choices of ${\cal L}^{+4} \neq 0$ \cite{CMP,GIOT,harm} including the choice corresponding
to the Taub-NUT manifold discussed below.  In the general case, the solution to  \p{eqmot-F} can be found by iterations:
attribute a factor $\lambda$ to ${\cal L}^{+4}$ and represent  the solution as a formal series
\be
\lb{series-lambda}
F^{+a} \ =\ F^{+a}_0 + \lambda F^{+a}_1 +  \lambda^2 F^{+a}_2 + \cdots\,.
\ee
We obtain the chain of equations
\be
\lb{chain}
\partial^{++}  F^{+a}_1  &=& \Omega^{ab} \, \frac {\partial {\cal L}^{+4}} {\partial F_0^{+b}}\, , \nn
\partial^{++}  F^{+a}_2  &=& \Omega^{ab} \, \frac {\partial^2 {\cal L}^{+4}} {\partial F_0^{+b}  \partial F_0^{+c}} F^{+c}_1 \, , \nn
\ldots &=& \ldots\,.
\ee
These are in fact algebraic equations, as becomes clear if one expands their left-hand and right-hand sides in a proper harmonic basis.
For example, we represent
\be
F_1^{+a} \ =\ A^{(ijk)a} u^+_i u^+_j u^-_k +  B^{(ijklp)a} u^+_i u^+_j u^+_k  u_l^- u_p^- + \cdots
\ee
(the linear term $\propto u^+_j$ does not contribute in the left-hand sides of \p{chain}; it is attributed to
$F_0^+$) and
\be
\Omega^{ab}\, \frac {\partial {\cal L}^{+4}}{\partial F^{+b}}  \ =\  C^{(ijk)a} u^+_i u^+_j u^+_k
+ D^{(ijklp)a} u^+_i u^+_j u^+_k  u_l^+ u_p^-   +  \cdots\,.
\ee
Then the first equation implies
$$
A \ =\ C, \qquad B \ =\ \frac 12\, D \, , \qquad {\rm etc.}
$$
We first solve the equation for $F_1^{+a}$, then we substitute its solution to the equation for $F_2^{+a}$, solve it, substitute into the equation for $F_3^{+a}$, etc. As a result, $F^{+a}(t, u)$ is expressed via the harmonic-independent coefficients $x^{ja}(t)$, which have the meaning
of the coordinates on the hyper-K\"ahler  manifolds that we are set to describe. Note that the pseudoreality
conditions \p{pseudoreal-V} for $x^{ja}$ imply that the vectors $x^M \ =\  i(\Sigma^M)^{ja} x_{ja}/\sqrt{2}$
[with constant matrices $\Sigma^M \equiv \Sigma_A$
defined in Eq.\p{Sigma}] are real. But any other choice of $4n$ real coordinates $x^M$ is possible.

If we suppress the fermion dependence, the
equation \p{eqmot-A} can be solved in a similar way. The solution
${\tilde A}^{-a}(t, u)$ of this truncated equation is also expressed via $x^{ja}(t)$. It follows from \p{eqmot-A}  that
\be
\tilde{A}^{-a}\ =\ -2i \dot{x}^{ia} u^-_i  + {\rm nonlinear \ terms}\,. \lb{A-1}
\ee

To find the metric of the manifolds of interest, we substitute the solutions thus obtained for $F^{+a}(t, u)$ and
${\tilde A}^{-a}(t, u)$ into the bosonic part of the action \p{Sq+N8}. Indeed, if one expressed the latter via the components, one obtains a
very simple expression
\be
\lb{S-comp-Q+}
S \ =\ \frac{i}{2}\,\int \,dt \,du \,  {\tilde A}^-_a \dot{F}^{+a} \,,
\ee
where the other bosonic components of the superfield \p{q+N8} do not contribute! After expressing ${\tilde A}^-_a$ and ${F}^{+a}$
through $x^{ia}(t)$, the action \p{S-comp-Q+} takes the generic form
\be
S \ =\ \int \, dt\, \frac12\, g_{ia ,\,kb}\, \dot{x}^{ia}\dot{x}^{kb}\,, \quad  g_{ia ,\,kb} = \varepsilon_{ik}\Omega_{ab} + O(\lambda)\,. \lb{bosN8}
\ee

A similar program can be carried out for the fermionic components. Everything can be expressed through the lowest component
$\psi^{\alpha a}(t)$ in the harmonic expansion of  $\chi^{\alpha a}(t,u)$.
[The variables $\kappa^{a}_{\alpha}$ are not independent,
but can be expressed via $\chi^{\alpha a}$ in virtue of \p{cons-tilde}].
We have altogether $4n$ complex dynamic fermionic variables ---
one complex fermionic variable for each real bosonic coordinate.
Their bilinear contribution to the Lagrangian
has the structure $\propto \bar \psi \dot{\psi}$.
The variables can be chosen such that the coefficients in the fermion kinetic term
and in the bosonic kinetic term are given by the same metric tensor. There is also a four-fermionic term, with
the coefficient proportional to the Riemann tensor.

One thus obtains the action of a supersymmetric $\sigma$ model, a particular case of the generic action \p{act-Rham} (with the auxiliary fields being eliminated).
In view of the theorem proven in \cite{AG-F}, the presence of ${\cal N} = 8$ supersymmetry dictates
the metric to be hyper-K\"ahler. In particular, if ${\cal L}^{+4} = 0$, the metric is flat.

However, the road from the superspace action \p{Sq+N8} to the metric is long and stony. We have outlined above the exact
regular procedure to derive the metric and express it as an infinite series over the formal expansion parameter $\lambda$.
The existence of such a procedure implies that a unique solution exists. As was already mentioned, a closed analytic solution to the
equations \p{eqmotL+4} was  obtained so far only in a few particular cases.
In Sect. 6, we will show how the explicit solution can be found for the Taub-NUT manifold.

We explained how to construct a hyper-K\"ahler metric, based on an arbitrary function
${\cal L}^{+4}(Q^+, u)$
 of harmonic charge +4. A legitimate question is whether {\it any} hyper-K\"ahler metric can be derived in this way?

The answer to this question is positive. In the paper \cite{OgIv}
(see also Chapter 11 of the book \cite{harm}), the problem
was solved in a different way --- not invoking supersymmetry, but solving instead the constraint $R_A^{\ B} \in sp(n)$ (see Theorem 1 and the remark after it)
It was shown that a {\it general} solution to this constraint depends on an arbitrary harmonic function
${\cal L}^{+4}(Q^+, u)$
and that this solution coincides with the solution following from \p{Sq+N8}.

The last remark of this Section yet concerns the superfield equation \p{eqmotL+4}.  Besides the kinematical
equations \p{eqmot-F}, \p{eqmot-A}  and similar equations for fermionic fields, it encompasses as well the dynamical equations
for fields $F^+, \xi^\alpha$ and $\kappa_\alpha$.
In particular, it contains the equation (with fermionic fields suppressed)
\be
\partial^{++}D^{-3} + 2i \dot{\tilde{A}}^{-a} + {\rm nonlinear \ terms} = 0\,,
\ee
which, in virtue of  the expressions \p{A-} and $F^{+ a} = x^{i a} u^+_i +\ldots$, implies
\be
\ddot{x}^{ja} + {\rm nonlinear \ terms} = 0\,.
\ee
In the ${\cal N}=4\,, d=1$ supersymmetric description of hyper-K\"ahler sigma models which we will discuss below, the equation
like \p{eqmotL+4} becomes a harmonic constraint which does not impose any dynamical restrictions
on the involved fields.

\section{HKT and bi-HKT supersymmetric $\sigma$ models}
\setcounter{equation}0

Supersymmetric $\sigma$ models considered in the previous section involved a complex fermionic field for
each real bosonic coordinate. There is another class of models with half as much fermionic degrees of
freedom; they contain a real fermion for each real bosonic coordinate. These models can be described in terms of
$({\bf 1}, {\bf 1}, {\bf 0})$ superfields living in ${\cal N} = 1$ superspace with only one real $\theta$ coordinate,
\be
\lb{N1field}
{\cal X}^M = x^M + i\theta \psi^M \, .
\ee
A generic action bringing about the structure $\sim g {\dot x}^2$ in the bosonic sector reads
\be
\lb{SgenN1}
S \ =\ \frac i2 \int dt d\theta \ g_{MN}({\cal X}) {\dot {\cal X}}^M D {\cal X}^N - \ \frac 1{12}
\int dt d\theta \ C_{SPM} D {\cal X}^S D {\cal X}^P D {\cal X}^M \, ,
\ee
where
\be
\lb{DN1}
D \ =\ \frac {\partial}{\partial \theta} - i\theta \frac {\partial}{\partial t}
\ee
is the ${\cal N} = 1$ supersymmetric covariant derivative; $D^2 = -i\partial_t$.
 The symmetric tensor $g_{MN}$ gives the metric and the
antisymmetric $C_{SPM}$ gives the torsion.

The corresponding component Lagrangian is
\be
\lb{L(N=1)comp}
L \ =\ \frac 12\, g_{MN} {\dot x}^M {\dot x}^N + \frac i2\, g_{MN}
\psi^M \nabla\psi^N - \frac{1}{12}\, \partial_K C_{SPM}
\psi^K \psi^S \psi^P \psi^M \, ,
\ee
where
\be
\lb{nabla(C)}
\nabla\psi^M={\dot \psi}^M + \hat \Gamma^M_{PS} {\dot x}^P \psi^S \, ,
\ee
\be
\lb{Gamma(C)}
\hat \Gamma_{N, PS} = g_{MN}\hat \Gamma^M_{PS} = \Gamma_{N, PS} + \frac 12 \, C_{NPS} \, .
\ee

The  ${\cal N} = 1$ supersymmetry of the action \p{SgenN1} is manifest, the components of ${\cal X}^M$ transform
as
\be
\delta x^M = i\epsilon_0 \psi^M\,, \quad  \delta \psi^M = -\epsilon_0 \dot{x}^M\,.\lb{epsilon0comp}
\ee
However, we are interested in the models including at least
two real supercharges --- their presence is necessary for a model to
enjoy nontrivial dynamical constraints including double degeneracy in the spectrum of the
Hamiltonian.

Thus, we require the action to be invariant under the following extra supersymmetry transformations:
\be
\lb{delXN2}
\delta {\cal X}^M \ =\ \epsilon I_{N}^{\ M} D{\cal X}^N \,,
\ee
where $I_N^{\ M}({\cal X}^M)$ is some tensor to be specified below. The components in \p{N1field} are transformed as
\be
\lb{delXN2-comp}
\delta {x}^M &=& i\epsilon I_{N}^{\ M} \psi^N \, , \nn
\delta \psi^M & =& \epsilon \left( I_{N}^{\ M} \dot{x}^N  - i \partial_S  I_{N}^{\ M} \psi^S \psi^N \right)\, .
\ee
We also require that the commutator of two such transformations boils down to the time translation. Then the square of the generator of the transformations \p{delXN2-comp}
coincides with the Hamiltonian, and we obtain the minimal ${\cal N} = 2$ supersymmetry algebra:
\be
\label{SUSY-alg}
(Q_1)^2 = (Q_2)^2 = H, \ \ \ \ \ \{Q_1, Q_2\} = 0 \, .
\ee

It remains to figure out the restrictions on $I_N^{\ M}$ which ensure the fulfillment of these requirements.
One can make the following statement \cite{GPS}:

\begin{thm}
The action \p{SgenN1} is invariant under \p{delXN2} and the algebra \p{SUSY-alg} holds
if the following set of conditions is satisfied:

\begin{enumerate}

\item  $I^2 = - \mathbb{1}$ as in \p{II=-1}.

\item
 $I_N{}^{M}$ is integrable and satisfies \p{Nijen}.

\item The matrix $I_{MN}=g_{NK}I_M{}^{K}$ is skew-symmetric:
\be
\lb{asym-I}
I_{MN}=-I_{NM}= I_{[MN]}\, .
\ee

\item $I_N^{\ M}$  satisfies the condition
\be
\lb{sym-nabla-I}
\hat \nabla_L I_N^{\ M} +  \hat \nabla_N I_L^{\ M} \ =\ 0 \, ,
\ee
where $\hat \nabla_L$ is the covariant derivative with the torsionful affine connection \p{Gamma(C)}.

\item There is an extra condition on the torsion tensor $C$ that can be represented in the  language of forms as
\be
\lb{iota-C}
\iota d C= \frac{2}{3}\,d (\iota C)\, ,
\ee
where the operator $\iota$ was defined in \p{iota}.

\end{enumerate}
\end{thm}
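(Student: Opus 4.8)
Since the theorem claims only sufficiency, the plan is a direct verification: assuming conditions 1--5, show that $\delta S=0$ under \p{delXN2} and that the component transformations \p{delXN2-comp} close into the algebra \p{SUSY-alg}. I would carry the computation out entirely in ${\cal N}=1$ superspace; the only tools needed are the Leibniz rule for $D$, the identity $D^2=-i\partial_t$ following from \p{DN1}, and integration by parts under $\int dt\,d\theta$. Because ${\cal X}^M$ is an unconstrained superfield, the monomials built from the odd objects $D{\cal X}^M$ and from $\dot{\cal X}^M$ are functionally independent, so $\delta S=0$ is equivalent to the vanishing of the coefficient of each such monomial, and conditions 1--5 are precisely what those coefficients require. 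The basic input is $D(\delta{\cal X}^M)=-\epsilon\big[(\partial_S I_N{}^{M})\,D{\cal X}^S D{\cal X}^N-i\,I_N{}^{M}\dot{\cal X}^N\big]$.

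First I would vary the metric term $\tfrac i2\int dt\,d\theta\,g_{MN}\dot{\cal X}^M D{\cal X}^N$, getting three contributions, from $\delta g_{MN}$, from $\delta\dot{\cal X}^M$, and from $\delta D{\cal X}^N$. After integration by parts these collect into a piece bilinear in $D{\cal X}$ whose cancellation forces $I_{MN}=g_{NK}I_M{}^{K}$ to be antisymmetric --- this is condition 3, equivalently the Hermiticity \p{Hermit} and the property \p{Antysymm} --- and a piece of the schematic form $(\nabla I)\,\dot{\cal X}\,D{\cal X}\,D{\cal X}$, which is not by itself a total derivative and must be combined with a contribution from the torsion term.

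Next I would vary the torsion term $-\tfrac1{12}\int dt\,d\theta\,C_{SPM}D{\cal X}^S D{\cal X}^P D{\cal X}^M$. Varying $C$ gives a term quartic in $D{\cal X}$; varying one factor $D{\cal X}$ gives both a further quartic term (from the $\partial I$ part) and a term $\sim(IC)\,\dot{\cal X}\,D{\cal X}\,D{\cal X}$ (from the $-iI\dot{\cal X}$ part). The latter joins the left-over $(\nabla I)\,\dot{\cal X}\,D{\cal X}\,D{\cal X}$ from the metric variation, and the Levi--Civita part built from the metric term together with the $\tfrac12 C$ part from the torsion term assemble exactly into the connection $\hat\Gamma$ of \p{Gamma(C)}; so the vanishing of this sector is the covariant-constancy condition 4, $\hat\nabla_{(L}I_{N)}{}^{M}=0$. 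Finally the two quartic-in-$D{\cal X}$ terms --- one with $\partial C$, one with $(\partial I)C$ --- become, after one integration by parts moving a single $D$ and a relabelling using the total antisymmetry of $C$, exactly the combination $\iota\,dC-\tfrac23\,d(\iota C)$ with $\iota$ as in \p{iota}; their cancellation is condition 5, Eq.~\p{iota-C}.

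Lastly I would check the algebra. Computing $[\delta_{\epsilon_1},\delta_{\epsilon_2}]{\cal X}^M$ with the same Leibniz formula, one finds a term $\propto(I^2)_N{}^{M}\dot{\cal X}^N$ and a term proportional to the Nijenhuis combination of \p{Nijen} contracted with $D{\cal X}\,D{\cal X}$: condition 1, $I^2=-\mathbb{1}$, turns the first into the required time translation (so $(Q_2)^2=H$) and condition 2, integrability \p{Nijen}, kills the second, while $\{Q_1,Q_2\}=0$ is automatic since the manifest transformation \p{epsilon0comp} acts as a geometric $\theta$-shift under which $I_N{}^{M}({\cal X})D{\cal X}^N$ transforms covariantly. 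I expect the torsion sector to be the main obstacle: one must track the signs from the anticommuting $D{\cal X}^M$, perform the integrations by parts correctly, verify that the $\Gamma$-contributions from the $g$-term and the $\tfrac12 C$-contribution from the $C$-term genuinely combine into the single $\hat\Gamma$ of \p{Gamma(C)}, and --- the one non-mechanical step --- recognize that the residual quartic terms organize into $\iota\,dC-\tfrac23\,d(\iota C)$ and not into some inequivalent $4$-tensor.
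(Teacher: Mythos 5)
Your overall strategy coincides with the paper's (vary the action in ${\cal N}=1$ superspace, sort $\delta S$ into structures graded by the number of $D{\cal X}$'s and $\dot{\cal X}$'s, and get conditions 1--2 from closure of the algebra), but there is one genuine gap that would derail the argument as you have set it up. Your claim that ``the monomials built from $D{\cal X}^M$ and from $\dot{\cal X}^M$ are functionally independent, so $\delta S=0$ is equivalent to the vanishing of the coefficient of each such monomial'' is false under the integral sign: because $\dot{\cal X}^M=iD^2{\cal X}^M$, the cubic structure $\dot{\cal X}^M D{\cal X}^S D{\cal X}^N$ with a \emph{totally antisymmetric} coefficient is, after one integration by parts, the same thing as a quartic structure --- this is precisely Eq.~\p{po-chastjam} of the paper. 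If you ignore this and demand that the full coefficient $P_{M,[SN]}$ of the cubic sector and the full coefficient $T_{RSNM}$ of the quartic sector vanish separately, you obtain conditions strictly stronger than 4 and 5: condition \p{sym-nabla-I} constrains only the combination \p{P-sym} and says nothing about the totally antisymmetric part $\hat\nabla_{[M}I_{SN]}$ (antisymmetrizing \p{Equivii} over $M,S,N$ gives $0=0$ identically), and condition \p{iota-C} is $T_{RMSN}=\partial_{[R}P_{MSN]}$, not $T_{RMSN}=0$. So with your organization, hypotheses 1--5 would \emph{not} suffice to kill each coefficient, and the sufficiency proof would fail. The missing step is the paper's decomposition $P_{M,[SN]}=P_{[MSN]}+\frac13\left(2P_{M,[SN]}+P_{S,[MN]}-P_{N,[MS]}\right)$ followed by trading the $P_{[MSN]}$ piece for a quartic contribution, after which the three surviving structures in \p{var-sterms-1} really are independent and their coefficients reproduce exactly conditions 3, 4 and 5.

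Two smaller points. First, condition 3 is forced by the term bilinear in $\dot{\cal X}$, namely $I_{(MN)}\dot{\cal X}^M\dot{\cal X}^N$ [the first term of \p{var-sterms}], not by ``a piece bilinear in $D{\cal X}$'': a term $A_{MN}D{\cal X}^M D{\cal X}^N$ only sees the antisymmetric part of $A$ and could never impose $I_{(MN)}=0$. Second, your treatment of the algebra (conditions 1--2 from the $(I^2)\dot{\cal X}$ and Nijenhuis pieces of the commutator, $\{Q,Q^{\rm extra}\}=0$ being automatic from the manifestly ${\cal N}=1$-covariant superfield form of \p{delXN2}) is correct and is exactly what the paper does.
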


In the original paper \cite{GPS} this theorem was proved by explicit component calculations.
We give here a somewhat simpler proof based on the language of ${\cal N} = 1$ superfields.

\begin{proof}

The conditions 1 and 2 follow from the algebra \p{SUSY-alg}. Note first that
\be \delta (D {\cal X}^N) \ =\ -\epsilon D(I_L^{\ N} D {\cal X}^L ) \ =\
-\epsilon (\partial_K I_L{}^{N} ) D {\cal X}^K D {\cal X}^L +
i \epsilon I_L{}^{N} {\dot {\cal X}}^L  \, .
 \ee
The commutator of two supersymmetry transformations \p{delXN2} is then derived to be
\begin{eqnarray}
\lb{com-s-X}
\left(\delta_2\delta_1-\delta_1\delta_2 \right)\delta {\cal X}^M &=&
2i\epsilon_1\epsilon_2 (I^2)_K{}^{M} \dot{\cal X}^K \\
&&
+2\epsilon_1\epsilon_2\Big[I_K{}^{L}\left(\partial_L I_N{}^{M}\right) +\left(\partial_N I_K{}^{L}\right)I_L{}^{M}\Big]\,D{\cal X}^K D{\cal X}^N \, . \nonumber
\end{eqnarray}
If we want it to coincide with $-2i\epsilon_1\epsilon_2\, \partial_t {\cal X}^M$  [as is dictated by Eq.\p{SUSY-alg} )] the conditions \p{II=-1}, as well as
\begin{equation}
\lb{eq-intr}
\left(\partial_L I_{[N}{}^{M}\right)I_{K]}{}^{L} +\left(\partial_{[N} I_{K]}{}^{L}\right)I_L{}^{M}=0
\end{equation}
follow. Using \p{II=-1}, the condition \p{eq-intr} can be brought into the form
\p{Nijen}.

The conditions {\it 3-5} follow from the vanishing of the variation of the action under \p{delXN2}. The calculation gives
\begin{eqnarray}
\lb{var-sterms}
\delta S & =& \epsilon \int dt d\theta \, I_{(MN)} \, {\dot {\cal X}}^M  {\dot {\cal X}}^N
-\frac{i\epsilon}{4} \int dt d\theta \, P_{M, SN} \, {\dot {\cal X}}^M  D {\cal X}^S  D {\cal X}^N  \nn
&&
+  \frac {\varepsilon}{12} \int dt d\theta \, T_{RSNM}  D {\cal X}^R  D {\cal X}^S  D {\cal X}^N  D {\cal X}^M\, ,
\end{eqnarray}
where
\be
\lb{PandT}
&& P_{M, SN} = P_{M, [SN]} = 2\nabla_M I_{[SN]} - C^P{}_{SN} I_{ML}  - 2\nabla_S I_{(MN)} + 2\nabla_N I_{(MS)}\,, \nonumber \\
&& T_{RSNM} = T_{[RSNM]} = \left(\partial_L C_{[SNM}\right) I_{R]}{}^{L} -3 C_{L[SN} \left(\partial_R I_{M]}{}^{L}\right).
\ee
Note that $\nabla_M$ entering \p{PandT} are the ordinary Levy-Civita covariant derivatives.

Let us concentrate on the second term in \p{var-sterms}. We represent
\be
P_{M,[SN]} \ =\ P_{[MSN]} + \frac 13 \left(2P_{(M,[SN]}  + P_{S,[MN]} -   P_{N,[MS]}\right)
\ee
and note  the identity
\be
\lb{po-chastjam}
\int dt d\theta\, P_{[MSN]} {\dot {\cal X}}^M  D {\cal X}^S  D {\cal X}^N \ =\ - \frac i3
\int dt d\theta\,  \partial_R P_{[MSN]}  D {\cal X}^R D {\cal X}^M D {\cal X}^S  D {\cal X}^N \, .
\ee
To  derive \p{po-chastjam} , one has to trade
$\dot{{\cal X}}^M$ for $i D^2{{\cal X}}^M$ and integrate by parts.).

We thus present the variation as a sum of three linearly independent structures
\be
\lb{var-sterms-1}
\delta S & =& \epsilon \int dt d\theta \, I_{(MN)} \, {\dot {\cal X}}^M  {\dot {\cal X}}^N
-\frac{i\varepsilon}{12} \int dt d\theta \, (2 P_{M, [SN]}+ P_{S, [MN]} - P_{N, [MS]} ) \, {\dot {\cal X}}^M  D {\cal X}^S  D {\cal X}^N  \nn
&&
+  \frac {\varepsilon}{12} \int dt d\theta \, \left(T_{RMSN}  - \partial_{[R} P_{MNS]}\right) D {\cal X}^R  D {\cal X}^M  D {\cal X}^S  D {\cal X}^N\, .
\ee

It vanishes provided
\begin{equation}
\lb{I-sym}
I_{(MN)} = 0\,,
\end{equation}
\begin{equation}
\lb{P-sym}
2 P_{M, [SN]}+ P_{S, [MN]} - P_{N, [MS]} = 0
\end{equation}
and
\begin{equation}
\lb{TP-cond}
T_{RMSN}= \partial_{[R} P_{MSN]}\,.
\end{equation}

Eq. \p{I-sym} gives the requirement 3 in the list above. The condition \p{P-sym}, after taking account of \p{I-sym} and after substituting  $P_{M, [SN]}$ from Eq. \p{PandT}, yields
\be
2 \hat{\nabla}_{M} I_{[SN]}+ \hat{\nabla}_{S} I_{[MN]} - \hat{\nabla}_{N} I_{[MS]} = 0\,. \lb{Equivii}
\ee
By symmetrizing over $M \leftrightarrow S$ and raising the index $N$ (we are allowed to do so, bearing in mind that $\hat{\nabla}_{S}\, g^{MN} = 0$), we arrive at \p{sym-nabla-I}.

Finally,
substituting into \p{TP-cond} the expression for $T_{RMSN}$  from  \p{PandT} and
\be
P_{[MSN]} &=& \frac23\left(\partial_M I_{[SN]} - \frac12 C^L_{\; MS}I_{[NL]} + {\rm cycle}\,(M,S,N)\right) \, , \lb{SymmP}
\ee
we derive
\be
\big( \partial_L C_{[MNS} \big) I^{\;L}_{R]} + I^{\;L}_{[S} \big( \partial_{R} C_{MN]L} \big)  - 2 C_{L[MN}\partial_R I^{\;L}_{S]} = 0\, , \lb{iota-C2}
\ee
which coincides with \p{iota-C}.

\end{proof}

The conditions \p{II=-1}, \p{Nijen} and \p{I-sym} imply that the tensor $I_{MN}$ has all the properties of the
complex structure and can be interpreted as such. It is natural then to expect that the geometry thus obtained is
a complex geometry and the algebra \p{SUSY-alg} maps into the classical Dolbeault complex, with the complex supercharges $Q_1 \pm i Q_2$
being mapped into the holomorphic exterior derivative operator $\partial$ and its
Hermitian conjugate $\partial^\dagger$.

This guess is {\it almost} correct. In fact, a generic action \p{SgenN1} defines a twisted Dolbeault complex involving extra
{\it holomorphic} torsions \cite{twistDolb}. When such torsions are present, the Hamiltonian does not commute anymore with the
fermion charge operator. Such systems have been studied in \cite{Hull,real} in the language of ${\cal N} = 2$ superfields.

To understand how the holomorphic torsions appear in the ${\cal N} = 1$ language used in this paper, we prove the following
theorem\footnote{The {\it inverse} statement was proven in Sect. 5.1 of Ref. \cite{CKT-OKT}.}

\begin{thm}
Let $(g,I,C)$ satisfy  conditions 1--5 above. Consider the holomorphic decomposition of the torsion form with respect to the complex structure $I$,
\be
\lb{hol-decomp}
C \ =\ C_{3,0} + C_{2,1} + C_{1,2} + C_{0,3} \, .
\ee
Then the mixed part $C_{2,1} + C_{1,2}$ is the Bismut torsion \p{C-form-Bis} for the complex structure $I$.
The holomorphic and antiholomorphic parts are closed,
\be
\lb{zamkn}
\partial C_{3,0} = \bar \partial C_{(0,3)} \ =\ 0 \, .
\ee
\end{thm}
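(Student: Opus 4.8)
The plan is to read off the content of the theorem from the conditions already collected in Theorem 3 (conditions 1--5), translating everything into the holomorphic splitting determined by $I$. I would start from the observation that, having chosen the complex coordinates adapted to $I$ as in \p{Icompl}, the whole geometric data $(g,I,C)$ decomposes into holomorphic and antiholomorphic pieces, and the five conditions become purely algebraic/differential statements about these pieces. In particular, the torsion $3$-form splits as in \p{hol-decomp}, $C = C_{3,0}+C_{2,1}+C_{1,2}+C_{0,3}$, with $C_{0,3}=\overline{C_{3,0}}$ and $C_{1,2}=\overline{C_{2,1}}$ by reality of $C$.

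First I would identify $C_{2,1}+C_{1,2}$ with the Bismut torsion. The cleanest route is via condition 4, Eq.\ \p{sym-nabla-I}: the symmetrized Bismut-covariant derivative of $I$ vanishes. Combined with condition 3 ($I_{MN}$ antisymmetric) and the fact that $I^2=-\mathbb 1$, this is equivalent to $\hat\nabla_P I_{MN}=0$ \emph{once one also has} $\hat\nabla_{[P}I_{MN]}=0$; and the latter is precisely \p{SimpleRel}-type integrability, i.e.\ it follows from condition 2 in the adapted coordinates (the exterior holomorphic derivative of the $(2,0)$-part of the associated $2$-form is closed, cf.\ Remark 3). So conditions 2--4 together force $I$ to be covariantly constant with respect to \emph{the} connection \p{Gamma(C)} built from the \emph{mixed} part of $C$; but the Bismut connection is characterized uniquely by \p{delhatI=0} with totally antisymmetric torsion, and its torsion in holomorphic terms is exactly the $(2,1)+(1,2)$ expression \p{C-Bis-hol}. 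Hence the mixed part of $C$ must coincide with the Bismut torsion, and in particular $C_{3,0}$ and $C_{0,3}$ drop out of the covariant-constancy condition for $I$ entirely — consistent with their being the ``holomorphic torsions'' mentioned in the text.

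Next I would establish \p{zamkn}. This is where condition 5, Eq.\ \p{iota-C} (equivalently \p{iota-C2}), does the work. Applying $\iota$ to the holomorphic splitting of a $3$-form multiplies $C_{p,q}$ by $i(p-q)$, so $\iota C = 3i\,C_{3,0} + i\,C_{2,1} - i\,C_{1,2} - 3i\,C_{0,3}$, while $dC = \partial C + \bar\partial C$ with $\partial$ raising the holomorphic degree and $\bar\partial$ the antiholomorphic one; the eigenvalues of $\iota$ on the resulting $4$-form pieces are again $i(p-q)$ for the bidegrees $(4,0),(3,1),\dots,(0,4)$. Plugging these into \p{iota-C}, $\iota\,dC = \tfrac23\,d(\iota C)$, and matching bidegrees piece by piece: the $(4,0)$ component gives $4i\,\partial C_{3,0} = \tfrac23\cdot 3i\cdot 2\,\partial C_{3,0}$... wait — more carefully, the $(4,0)$ part of $dC$ is $\partial C_{3,0}$, on which $\iota$ acts by $4i$, while the $(4,0)$ part of $d(\iota C)$ is $\partial(3i\,C_{3,0})=3i\,\partial C_{3,0}$, on which $\iota$ also acts by... the point is that the left side carries the factor $4i\cdot$ and the right side $\tfrac23\cdot(3i)\cdot$, i.e.\ $4i\,\partial C_{3,0} = 2i\,\partial C_{3,0}$, forcing $\partial C_{3,0}=0$; complex conjugation gives $\bar\partial C_{0,3}=0$. (The intermediate bidegrees $(3,1),(2,2),(1,3)$ will just reproduce relations already implied by conditions 2--4 and the Bismut identification, so they need no separate treatment.)

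The step I expect to be the main obstacle is the first one — pinning down precisely which combination of conditions 2, 3, 4 yields full covariant constancy $\hat\nabla_P I_{MN}=0$ for the connection with only the mixed torsion, and hence the clean identification with the Bismut torsion. Condition 4 only gives the \emph{symmetric} part in two of the three indices, so one must feed in integrability in its ``$d$-closed $(2,0)$-form'' guise \p{SimpleRel} to get the antisymmetric part, and keep careful track of the fact that $C_{3,0}+C_{0,3}$ is \emph{invisible} to this condition (which is why it is left free, to be constrained only by \p{zamkn}). Once the bookkeeping of bidegrees under $\iota$, $\partial$, $\bar\partial$ is set up, the rest is a short matching of eigenvalues.
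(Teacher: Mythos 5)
Your treatment of the closedness statement \p{zamkn} is correct and is essentially the paper's own argument recast in the language of bidegrees: the paper chooses all free indices in \p{iota-C2} holomorphic and reads off $\partial_{[r}C_{mns]}=0$, which is exactly your $(4,0)$ eigenvalue matching $4i\,\partial C_{3,0}=\tfrac23(3i)\,\partial C_{3,0}$, and your remark that the intermediate bidegrees carry no new information coincides with the comment the paper makes right after its proof.

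The gap is in the first half. You want to identify $C_{2,1}+C_{1,2}$ with the Bismut torsion by showing that $I$ is covariantly constant with respect to the connection whose torsion is only the mixed part and then invoking uniqueness of the Bismut connection. That strategy can be completed, but the step you yourself flag as ``the main obstacle'' is left unresolved and is misdiagnosed. The tool you reach for --- integrability in its closed-$(2,0)$-form guise \p{SimpleRel} --- is a statement about $J$ and $K$ in $I$-adapted coordinates and plays no role here; moreover, the claim that conditions 2--4 yield full covariant constancy $\hat\nabla_P I_{MN}=0$ for the connection \p{Gamma(C)} built from the \emph{full} $C$ is actually false whenever $C_{3,0}\neq 0$, the obstruction being precisely $\hat\Gamma^{\bar m}_{[ln]}=\tfrac12\, g^{\bar m s}C_{sln}$. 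What closes the gap is more elementary than you fear: in the coordinates \p{Icompl} the only components of $\hat\nabla I$ that can be nonzero are proportional to $\hat\Gamma^{\bar m}_{Ln}$ and $\hat\Gamma^{m}_{L\bar n}$; condition 4 with the mixed index choices $L=l,\,N=\bar n,\,M=\bar m$ and $L=l,\,N=\bar n,\,M=m$ gives $\hat\Gamma^{\bar m}_{\bar n l}=\hat\Gamma^{m}_{l\bar n}=0$ directly, while the remaining components $\hat\Gamma^{\bar m}_{[ln]}$ and $\hat\Gamma^{m}_{[\bar l\bar n]}$ vanish for the mixed-torsion connection by type counting together with Hermiticity of $g$ (conditions 1 and 3), not by integrability. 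The paper short-circuits all of this: it evaluates \p{sym-nabla-I} at those index choices, obtains $\hat\Gamma_{s,\bar n l}=0$, and solves for $C_{sl\bar n}$ to land on \p{C-Bis-hol} in two lines, with no appeal to uniqueness of the Bismut connection. You should either carry out that direct computation or supply the type-counting argument above; as written, the identification of the mixed part with the Bismut torsion is asserted rather than proved.
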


\begin{proof}
Let us express \p{sym-nabla-I} in complex coordinates. The complex structure acquires then a simple form \p{Icompl}. Choose
$L=l, N = \bar n, M = \bar m$. Only the second term in \p{sym-nabla-I} is left, and we obtain
$$ \hat \Gamma^{\bar m}_{\bar n l}  = 0 \ \Longrightarrow \ \hat \Gamma_{s, \bar n l} =  0 \ \Longrightarrow
\ C_{sl\bar n} = \partial_l h_{s \bar n} - \partial_s h_{l \bar n} \, , $$
which gives $C_{2,1} = -6\partial \omega$ as in \p{C-form-Bis}, indeed. To derive $C_{1,2} = 6\bar \partial \omega$, one should
choose $L=l, N= \bar n, M=m$, which leads to the condition $\hat \Gamma^m_{n \bar l} = \hat \Gamma_{\bar m, n\bar l} = 0$.
\footnote{When all three indices in \p{sym-nabla-I} are holomorphic or all of them are antiholomorphic, the equality
is fulfilled identically. The choice $L=p, N=n, M = \bar m$ does not bring about new information.}

Consider now the condition  \p{iota-C2} and choose all the free indices holomorphic: $L = l, M = m, N =n, S=s$. Bearing in mind \p{Icompl}, we immediately derive $\partial_{[r} C_{mns]} = 0$, which means
that $C_{(3,0)}$ is closed. Choosing all the indices antiholomorphic, we derive the closedness of
$C_{(0,3)}$.
\end{proof}

It is worth noting that the condition \p{iota-C2} involving both holomorphic and antiholomorphic indices does not yield any new information.
 For the mixed torsion components  \p{C-Bis-hol} [and such form follows, as we have just seen, from  \p{sym-nabla-I}],
 it is identically satisfied.

Note also that in the ${\cal N} = 2$ language used in \cite{Hull,real},
the (anti)holomorphic
components $C_{3,0}$ and $C_{0,3}$ in the torsion are associated with the presence of extra (anti)holomorphic terms in the action:
\be
\lb{B-hol}
\Delta S =  \int dt d\theta d\bar\theta \, {\cal B}_{nm} DZ^n DZ^m \ + \ {\rm c.c.}
\ee
with antisymmetric ${\cal B}_{mn}$. Then $C_{3,0} \propto  \partial {\cal B}$, $C_{0,3} \propto \bar \partial \bar {\cal B}$,
from which \p{zamkn} follows.

\vspace{.2cm}

We go over now to ${\cal N} = 4$ models. These models should possess {\it three} extra supersymmetries of the type
\p{delXN2}. Each of the complex structures $I^{p=1,2,3}$ should be integrable, and the constraints \p{sym-nabla-I},
\p{iota-C} should be satisfied.
There are two extra constraints following from the requirement that three new
supercharges together with the supercharge $Q$ associated with the
explicit ${\cal N} = 1$ supersymmetry of the action \p{SgenN1} satisfy the standard  ${\cal N} = 4$ superalgebra,
\be
\lb{SUSY-algN4}
Q^2 = H, \qquad \{Q^p, Q^q \} =  2 \delta^{pq} H, \qquad \{Q, Q^p \} = 0 \, .
\ee
The new constraints are
\be
\lb{Clifford}
I^p I^q + I^q I^p \ =\ -2 \delta^{pq}\, ,
\ee
\be
\lb{concom}
(I^p)_{[M}^{\ \ S} \partial_S (I^q)_{N]}^{\ \ L} -
\partial_{[M} (I^q)_{N]}^{\ \ S}  (I^p)_S^{\ \ L}   \ + (p \leftrightarrow q) \ =\ 0 \, .
\ee
The first condition says that the complex structures $I^p$ satisfy the Clifford algebra.
The second condition is the vanishing
of the so-called {\it Nijenhuis concomitant}.

Clifford complex structures in \p{Clifford} are not necessarily quaternionic,
$I^1 I^2 \neq I^3$ etc. As was noticed in \cite{biHKT} (see {\bf Proposition 6} there), the closure
of the multiplication algebra represents in this case a direct sum ${\mathcal H}_+ +
{\mathcal H}_-$  of two quaternion algebras. Indeed, define
\be
J^p = \frac 12\, \varepsilon^{pqr} I^q I^r, \qquad \Delta = - I^1 J^1 = -I^2 J^2 = -I^3 J^3 \, .
\ee
One can then observe that the both algebras ${\mathcal H}_\pm$ involving the generators
\be
I^p_\pm \ =\ \frac 12 (I^p \pm J^p), \qquad \Delta_\pm \ =\ \frac 12 (1 \pm \Delta)
\ee
are closed and quaternionic. The operators $\Delta_{\pm}$ play the role of the corresponding quaternion unities.

Bearing that in mind, one can show that a generic triple of Clifford complex structures projected in tangent space can be  chosen in the form
\be
\label{I-biHKT}
I^1 &=& {\rm diag} (\underbrace{\EuScript{I},\ldots, \EuScript{I}}_{n^*},\ \underbrace{-\EuScript{I}, \ldots,
-\EuScript{I}}_{m^*}), \qquad
I^2 \ = \ {\rm diag} (\underbrace{\EuScript{J},\ldots, \EuScript{J}}_{n^*}, \
\underbrace{-\EuScript{J}, \ldots, -\EuScript{J}}_{m^*}), \nn
I^3 &=& {\rm diag} (\underbrace{\EuScript{K},\ldots, \EuScript{K}}_{n^*},
\underbrace{ -\EuScript{K} , \ldots, -\EuScript{K}}_{m^*}) \, ,
\ee
with $\EuScript{I}, \EuScript{J}$ and $\EuScript{K}$ written in \p{IJKcanon}.

We see that a generic ${\cal N} = 4$ model involves two sectors associated with the subspaces of dimension $4n^*$ and $4m^*$.
Suppose that either $n^*$ or $m^*$ vanish and there is only one such sector with quaternionic complex structures.
We can prove an important theorem:
\begin{thm}
If the complex structures entering the laws of transformation \p{delXN2} in a generic
${\cal N} =4$ model with the action \p{SgenN1} satisfying the conditions of Theorem 3 are quaternionic,
these structures are covariantly constant with respect to the universal
Bismut connection and the manifold is HKT.
\end{thm}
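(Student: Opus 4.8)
The plan is to prove that the \emph{single} metric-preserving connection $\hat\Gamma^Q_{PM}=\Gamma^Q_{PM}+\tfrac12\, g^{QS}C_{SPM}$ of the action \p{SgenN1} annihilates all three complex structures, $\hat\nabla_P (I^q)_{MN}=0$ for $q=1,2,3$; by Definition 4 this says exactly that the manifold is HKT, and the curvature form of $\hat\Gamma$ then lies in $sp(n)$ by the argument given right after that definition.

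First I would use Theorem 4 to reduce the problem. Since the $I^q$ are quaternionic, each is integrable and the metric is Hermitian with respect to each of them (conditions 2 and 3 of Theorem 3). Passing to complex coordinates adapted to $I^1$, Theorem 4 tells us that the $(2,1)+(1,2)$ part of $C$ is the Bismut torsion of $I^1$; in these coordinates the only components of $\hat\Gamma$ besides the holomorphic and antiholomorphic Christoffels (which are metric-determined) are $\hat\Gamma^{\bar q}_{pm}=\tfrac12\, g^{\bar q r}(C^{(1)}_{3,0})_{rpm}$ and its conjugate $\hat\Gamma^q_{\bar p\bar m}$, while $\hat\Gamma^q_{p\bar m}=\hat\Gamma^{\bar q}_{\bar p m}=0$. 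A direct computation then gives $\hat\nabla_l (I^1)_{mn}=-i\,(C^{(1)}_{3,0})_{lmn}$ and its conjugate, all other components of $\hat\nabla I^1$ vanishing. Hence the assertion is \emph{equivalent} to $C^{(q)}_{3,0}=0$ for $q=1,2,3$ --- equivalently, to the coincidence of the three Bismut connections.

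Next I would show $C^{(1)}_{3,0}=0$. By \emph{Lemma 1} of Theorem 2 (with $I\equiv I^1$, $J\equiv I^2$, $K\equiv I^3$), which follows from the quaternion algebra alone, the fundamental $2$-forms $(I^2)_{MN}$ and $(I^3)_{MN}$ are of pure type $(2,0)+(0,2)$ with respect to $I^1$, and the combination $\Omega_{MN}:=(I^2)_{MN}+i(I^3)_{MN}$ is a genuine $(2,0)$-form, with $\Omega_{mn}=\Psi_m{}^{\bar p}g_{\bar p n}$ and $\Psi_m{}^{\bar p}$ invertible. Condition 4 of Theorem 3 for $I^2$ and $I^3$ with respect to the \emph{same} $\hat\Gamma$ says $\hat\nabla_L(I^2)_{NM}+\hat\nabla_N(I^2)_{LM}=0$ and likewise for $I^3$, hence $\hat\nabla\Omega$ is totally antisymmetric, a $3$-form. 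A one-line computation shows $\hat\nabla_l\Omega_{m\bar n}=0$: indeed $\Omega_{m\bar n}=0$, $\Omega_{Q\bar n}=0$, and the only potentially surviving term $-\hat\Gamma^q_{l\bar n}\Omega_{mq}$ vanishes since $\hat\Gamma^q_{l\bar n}=0$. Total antisymmetry of $\hat\nabla\Omega$ then gives $\hat\nabla_{\bar n}\Omega_{lm}=0$, i.e. $\hat\nabla_{\bar n}(I^2)_{lm}=0$. Plugging this into condition 4 for $I^2$ with $(L,N,M)=(l,\bar n,m)$, i.e. $\hat\nabla_l(I^2)_{\bar n m}+\hat\nabla_{\bar n}(I^2)_{lm}=0$, and evaluating the first term (once more using $\hat\Gamma^q_{l\bar n}=0$), one is left with $\hat\Gamma^{\bar q}_{lm}\,(I^2)_{\bar n\bar q}=0$; since the matrix $(I^2)_{\bar n\bar q}$ is invertible, $\hat\Gamma^{\bar q}_{lm}=0$, i.e. $C^{(1)}_{3,0}=0$. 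Repeating with $I^2$ (or $I^3$) playing the role of $I^1$ yields $C^{(2)}_{3,0}=C^{(3)}_{3,0}=0$. Therefore $\hat\nabla I^q=0$ for $q=1,2,3$ and the manifold is HKT.

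The heart of the argument --- and the place where the quaternionic (rather than merely Clifford) nature of the $I^q$ is indispensable --- is the fact that $\Omega=(I^2)+i(I^3)$ is of \emph{pure} type $(2,0)$ with respect to $I^1$. It is this purity, which by \emph{Lemma 1} is a direct consequence of $I^1I^2=I^3$, that forces $\hat\nabla_l\Omega_{m\bar n}$ to vanish identically and hence leaves no room for a non-vanishing holomorphic torsion $C^{(1)}_{3,0}$. In the Clifford case the target splits into the two sectors \p{I-biHKT}, $\Omega$ ceases to be of pure type, and a non-zero holomorphic torsion does survive --- the bi-HKT situation of \cite{biHKT}. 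The only genuinely technical part of the proof is the bookkeeping of the $(p,q)$-type decompositions and the verification, via Theorem 4, of the listed values of the connection coefficients $\hat\Gamma^{\bar q}_{pm}$, $\hat\Gamma^q_{p\bar m}$ in the $I^1$-adapted coordinates.
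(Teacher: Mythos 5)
Your proof is correct and follows essentially the same route as the paper's: both pass to $I^1$-adapted holomorphic coordinates, invoke Lemma~1 (purity of $I^2\pm iI^3$ with respect to $I^1$) together with the vanishing mixed connection components $\hat\Gamma^m_{l\bar s}=\hat\Gamma^{\bar m}_{\bar l s}=0$ already established in Theorem~4, and then use condition~4 applied to $I^2,I^3$ to kill the remaining purely (anti)holomorphic torsion, whence $\hat\nabla I^p=0$. The paper packages that last step as $\hat\Gamma^m_{N\bar s}(J+iK)_l^{\ \bar s}=0$ for the mixed-index tensor (inverting with $J-iK$) rather than via total antisymmetry of $\hat\nabla\Omega$ for the $(2,0)$-form, but the content is identical.
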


\begin{proof}

The integrability allows us to choose complex coordinates associated with any
of the complex structures. Let us do so for $I$. Then the only
nonzero components of the tensor $I_M^{\ N}$ are displayed in Eq.\p{Icompl}.
Consider the tensors $J \pm iK$. As follows from Lemma 1 for
Theorem 2 [specifically, from the relations \p{JKcomp}], their only nonzero components are $(J+iK)_m^{\ \bar n} = 2J_m^{\ \bar n}$
and $(J-iK)_{\bar m}^{\ n} = 2J_{\bar m}^{\ n}$.

Note that the quaternionic algebra \p{quat} and the properties $J^2 = K^2 = - \mathbb{1}$ imply
\be
\lb{J+J-}
(J - iK)_{\bar m}^{\ n} (J+iK)_n^{\ \bar l} = -4 \delta_{\bar m}^{\bar l} \, ,
\nn [5pt]
(J + iK)_{n}^{\ \bar l} (J-iK)_{\bar l}^{\ m} = -4 \delta_n^m \, .
\ee
Considering the constraint \p{sym-nabla-I} for the complex structure $I$, we derived earlier
(see the proof of Theorem
4) the properties
\be
\lb{Gam-p-qbar}
\hat \Gamma^n_{l \bar s} = \hat \Gamma^{\bar n}_{\bar l s} \ =\ 0 \, .
\ee

Consider now the constraint
\be
\lb{sym-nabla-JK}
\hat \nabla_L (J+iK)_N^{\ M} +  \hat \nabla_N (J+iK)_L^{\ M} \ =\ 0 \, .
\ee

We choose $L=l$ and $M=m$, but do not specify the holomorphicity of $N$. Using the fact
that the only nonzero components of $J+iK$ have an antiholomorphic upper index, we see that in this case
most
of the terms in \p{sym-nabla-JK} vanish
and we derive
\be
\hat \Gamma^m_{N \bar s} (J+ iK)_l^{\ \bar s} + \hat \Gamma^m_{l \bar s} (J+ iK)_N^{\ \bar s} \ =\ 0 \, .
\ee
However, the second term in the left-hand-side vanishes due to \p{Gam-p-qbar}, and the constraint
boils down to $ \hat \Gamma^m_{N \bar s}
(J+ iK)_l^{\ \bar s}  = 0 $. Multiplying this by $(J- iK)_{\bar r}^{\ l}$ and using
the first relation in \p{J+J-}, we arrive at the constraint
\be
\lb{Gam-M-qbar}
\hat \Gamma^m_{N \bar r} \ =\ 0
\ee
for all $N$, holomorphic and antiholomorphic. The constraint
$\hat \Gamma^{\bar m}_{s N}  = 0$ can be derived in a similar way.

Consider now the covariant derivative without symmetrization,
\be
\lb{nabla-Gamma}
\hat\nabla_L I_N^{\ M} \ =\
\hat \Gamma_{LN}^Q I_Q^{\ M} - \hat \Gamma_{LQ}^M I_N^{\ Q}
\ee
and choose again $M=m$ (the case $M = \bar m$ is treated analogously).
We note that, if $N=n$, the right-hand-side of
\p{nabla-Gamma} vanishes identically due to the cancelation of the two terms and if $N= \bar n$, it boils down to
$-2i \hat \Gamma^m_{L \bar n}$, which vanishes due to \p{Gam-M-qbar}.

We have proven that $\hat\nabla_L I_M^{\ N} = 0$. Choosing the complex coordinates
associated with $J$ or with $K$, we can repeat all the arguments and prove that
$\hat\nabla_L J_M^{\ N} = \hat\nabla_L K_M^{\ N} = 0\,$.

\end{proof}

{\bf Remark}. We have proved this theorem for a generic action \p{SgenN1} including the second torsion term.
If we started from the  action that satisfied the conditions of Theorem 3, but
 included only the first term in \p{SgenN1}, the complex structures would be covariant with respect to the
ordinary Levi-Civita connection and we would arrive to hyper-K\"ahler geometry.

\vspace{1mm}

Going back to generic models with non-vanishing $n^*$ and $m^*$, it is natural to call them
{\it bi-HKT models}. Another nomenclature for the same class of models is ``Clifford-K\"ahler-with-Torsion'' (CKT) sigma
models \cite{Hull}. In Ref.\cite{biHKT}, a restricted class of such models involving ordinary and ``mirror''
linear ${\cal N}=4$  multiplets $({\bf 4, 4, 0})$
was studied in detail. The explicit expressions
for the Lagrangian, Hamiltonian and supercharges were derived.
If one suppresses there the dynamic variables in one of the sectors, e.g. the coordinates in the
subspace of dimension $4m^*$ and their fermionic superpartners, the reduced
 Lagrangian describes an HKT model associated
with linear $({\bf 4, 4, 0})$ multiplets. But there is also a nontrivial interaction between the sectors, and the
Lagrangian is not just the sum of two independent HKT Lagrangians. The simplest such model has a 8-dimensional target space.
It was first discussed in \cite{IvLeSu} and studied in detail in \cite{CKT-OKT}.
 The models of this type can also be described in terms of
${\cal N}=2, d=1$ superfields, as it was done in Sect. 7 of Ref.
\cite{DI}. In this case, the action is expressed in terms of ordinary and mirror (or twisted) chiral superfields.

\section{Harmonic description of HKT models}
\setcounter{equation}0

\subsection{Generalities}

In Sect. 3, we outlined how hyper-K\"ahler geometries can be described
in terms of ${\cal N} = 8$, $d=1$ supersymmetric $\sigma$ models.
The  superfields naturally
realizing  ${\cal N} = 8\,, \,d=1$ supersymmetry off shell are harmonic
superfields.

On the other hand,  ${\cal N} = 4$ off-shell superfields with an equal number
of real dynamical bosonic and fermionic degrees of freedom can be defined both
in the conventional superspace and harmonic superspace. For {\it linear}
$({\bf 4}, {\bf 4}, {\bf 0})$ multiplets, the conventional description works quite
well, it allows one to derive the metric and all other geometric
characteristics of interest \cite{IvLeSu,CKT-OKT,FS-HKT,biHKT}. But for {\it nonlinear} multiplets,
the harmonic description is much more convenient.

As was already mentioned in the Introduction, we start from the  ${\cal N} = 4$ superspace $(t, \theta_i , \bar \theta^i)$
and harmonize it as outlined in Appendix A. We consider then a set of $2n$ G-analytic superfields
$q^{+a}(t, \theta^+, \bar \theta^+, u) \equiv q^{+a}(\zeta, u)$. Their component expansion reads
\be
\lb{Expq+}
q^{+ a}(\zeta, u) = f^{+ a}(t,u) + \theta^+ \chi^a(t, u) + \bar\theta^+ \kappa^a (t, u)
+ \theta^+\bar\theta^+ A^{- a}(t,u)\,.
\ee
It is much shorter than the component expansion of a  ${\cal N} = 8$ superfield $Q^+$ in Eq.~\p{q+N8}.

We impose then the pseudoreality constraint \p{cons-tilde}, which implies in particular
$$
\kappa^a(t, u^\pm_j) = \bar \chi^a (t, \widetilde{u^\pm_j}) = \bar \chi^a (t, u^{\pm j}) \, ,
$$
and the harmonic constraint \p{D++q=F}. Note the essential difference between  \p{D++q=F} and \p{eqmotL+4}.
The latter is a superfield equation of motion following from the action \p{Sq+N8}. But in the  ${\cal N} = 4$ case, the
analytic superspace includes only two odd variables, $\theta^+$ and $\bar \theta^+$, and the structure
${\displaystyle \int} d^4\theta^+ {\cal L}^{+4} $ does not exist.~\footnote{One can add to the action the term
$\sim {\displaystyle \int} d^2\theta^+ {\cal L}^{++}$,
but this amounts to the inclusion of the gauge fields living on the manifold \cite{IvLe,FIL,Maxim,Maxim1}.
In this paper, such an option will not be considered. }
Thus, Eq.\p{D++q=F} is an {\it external} nonlinear constraint.

The superfield constraint \p{D++q=F} amounts to the following constraints on
the components:
\be
\lb{Bos}
\partial^{++} f^{+ a} &=& {\cal L}^{+3 a}(f^+, u^\pm) \,,  \\
{\cal D}^{++} \chi^a &=& {\cal D}^{++} \bar\chi^a = 0\,, \lb{ferm} \\
{\cal D}^{++} A^{- a} &=&  -2i \dot{f}^{+ a} + \ \frac
{\partial^2 {\cal L}^{+3 a}}{\partial f^{+b} \partial f^{+c} }\,\bar\chi^b\chi^c\,,
\lb{A-}
\ee
where the action of the covariant harmonic
derivative ${\cal D}^{++}$ on any contravariant symplectic vector $G^a$  is defined as
\be
\lb{D++contra}
{\cal D}^{++} G^a = \partial^{++}G^a -   E^{+ 2 a}_{b} G^b \, , \quad E^{+ 2 a}_{b} := \frac{\partial{\cal L}^{+3 a}}{\partial f^{+ b}}.
\ee

We also need for further uses to define the action of  ${\cal D}^{++}$ on  covariant symplectic vectors $H_a$.
\footnote{The use of the terms
 ``covariant derivative'' for the operator ${\cal D}^{++}$,
``covariant and contravariant symplectic vectors'' is justified bearing in mind the covariance of the
constraints \p{Bos}--\p{A-} under certain {\it analytic diffeomorphisms}
  \be
\lb{anal-diff}
\delta f^{+ a} = \lambda^{+a} (f^+,u)\,, \ \ \delta \chi^a =
(\partial_{+b} \lambda^{+a}) \chi^b\, , \ \  \delta A^{-a} = \cdots\,, \ \
\delta {\cal L}^{+3a} = \cdots \, , \ \
 \ee
 (see Ref. \cite{DI}
for  details).}  We define
\be
\lb{D++co}
{\cal D}^{++} H_a = \partial^{++} H_a + E^{+ 2 b}_{a} H_b\,.
\ee
Then
\be
\lb{GaHa}
{\cal D}^{++} (G^a H_a) = \ \partial^{++} (G^a H_a) \, .
\ee

Note that
generically the contravariant and covariant vectors are not obtained from one another
by  multiplying by $\Omega^{ab}$ or $\Omega_{ab}$. The relation
\be
\lb{G=Om-G}
G^a \ =\ \Omega^{ab} G_b
\ee
is not always compatible with \p{D++contra} and \p{D++co}.
However, for hyper-K\"ahler manifolds, where ${\cal L}^{+3a}$ satisfies
\p{L3=dL4} below,  it {\it is} compatible and we can assume it to
hold. We will do so in the next section.
We also note the useful identity
\be
\lb{D++dot}
{\cal D}^{++} {\dot f}^{+ a} = 0 \, .
\ee
It is simply the time derivative of the constraint \p{Bos}.

The constraints \p{Bos}--\p{A-} are rather complicated, but in some simple cases they can be
resolved analytically. We will do so in Sect.~6 for the Taub-NUT metric, but in this section
we concentrate on the general structure of the model. We note that, after the
constraints are resolved, the harmonic dependence of all component fields is
fixed and everything is expressed via the pseudoreal dynamical bosonic fields $x^{ia}(t)$ [or real $x^M(t)$]
and their fermionic superpartners $\psi^{ia}(t)$ [or $\psi^M(t)$]. For a linear multiplet, this relationship is very simple [see
\p{q+lin}], but in the nonlinear case the expressions are more complicated.

We write the action in the form \p{S-lin} with an arbitrary ${\cal L}$.
Substituting there the expansion \p{Expq+}, imposing the constraints
\p{Bos}-\p{A-} and solving them, we finally obtain a ${\cal N} = 4$
supersymmetric model expressed via the dynamical variables
$x^{ia}(t) \equiv x^M(t)$ and $\psi^{ia}
(t)  \equiv \psi^M(t)$.

The models of this kind were discussed in the previous section.
We saw that they admit either an HKT or a bi-HKT geometry. Our case is
more restrictive, however. One can prove  the following important theorem:

\begin{thm}
The superfield action \p{S-lin} with the constraints \p{Bos}--\p{A-}
describes an HKT geometry, which is reduced to hyper-K\"ahler geometry in some special cases.
\end{thm}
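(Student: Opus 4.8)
The plan is to reduce Theorem~6 to Theorem~4, which has already established that an ${\cal N}=4$ sigma model of the form \p{SgenN1} with \emph{quaternionic} complex structures is automatically HKT. Thus the whole task amounts to showing two things: first, that the component action obtained from \p{S-lin} after solving the constraints \p{Bos}--\p{A-} falls into the class \p{SgenN1} (i.e. that it is an ${\cal N}=4$ model of the ${\cal N}=1$ superfield type, with three extra supersymmetries of the form \p{delXN2} satisfying the conditions of Theorem~3); and second, that the three complex structures $I^p$ that emerge are \emph{quaternionic} rather than merely Clifford. Once these two points are in hand, Theorem~4 (together with its Remark about the hyper-K\"ahler specialization when the second torsion term is absent) gives the statement immediately.

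For the first point I would argue structurally. The action \p{S-lin} manifestly carries ${\cal N}=4$ supersymmetry realized linearly on the analytic superfields $q^{+a}$; after passing to components and eliminating/solving the auxiliary content through \p{Bos}--\p{A-}, one is left with $4n$ real bosons $x^M(t)$ and $4n$ real fermions $\psi^M(t)$, exactly the field content of $4n$ $({\bf 1},{\bf 1},{\bf 0})$ multiplets. The manifest ${\cal N}=4$ of the harmonic formulation descends to four supercharges on these components, one of which can be taken as the canonical $Q$ of \p{epsilon0comp} and the other three as transformations of the form \p{delXN2-comp} with some triple $(I^p)_M{}^N(x)$ read off from the explicit $d=1$ supersymmetry transformations of $q^{+a}$ (the relevant transformation laws are exactly what Appendix~D of \cite{DI} works out). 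Closure of the ${\cal N}=4$ algebra on these components then forces the conditions enumerated in Theorem~3 and in Eqs.~\p{Clifford}, \p{concom}; equivalently, one may simply invoke the general analysis of Sect.~4, which says that \emph{any} ${\cal N}=4$ model of this field content is either HKT or bi-HKT, so the geometry is governed by a triple $I^p$ of Clifford complex structures of the canonical form \p{I-biHKT}.

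The heart of the proof — and the step I expect to be the main obstacle — is ruling out the genuinely bi-HKT case, i.e. showing $I^1I^2=I^3$ (up to orientation) rather than the mixed-sign structure \p{I-biHKT} with both $n^*$ and $m^*$ nonzero. My plan here is to exhibit the complex structures explicitly in terms of the harmonic data. The bosonic target coordinates sit inside $f^{+a}(t,u)$ through the solution of \p{Bos}, with $f^{+a}=x^{ia}u_i^+ + (\text{nonlinear in }{\cal L}^{+3})$, and the $SU(2)$ automorphism group of the harmonics acts on the doublet index $i$; this residual $SU(2)$ rotation of the $u_i^\pm$, together with the $\widetilde{\phantom{q}}$-conjugation \p{cons-tilde-Omega}, supplies the three hidden supersymmetries and hence the three structures $I^p$. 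Because these all originate from one and the same $SU(2)$, acting in the \emph{same} way across all $n$ blocks, the tangent-space form of the triple is necessarily \p{I-HK} — the uniform-sign canonical form — with no room for the opposite-sign blocks of \p{I-biHKT}. Concretely, I would compute $(I^p)_M{}^N$ by differentiating the $\epsilon^p$-transformation of $x^M$ (obtained by projecting the $\theta^+$-transformation of $q^{+a}$ onto its lowest component and re-expressing everything through $x^{ia}$), then check directly that $(\sigma^p)$-matrix algebra of the harmonic $SU(2)$ carries over to the $I^p$, giving $I^pI^q=-\delta^{pq}+\varepsilon^{pqr}I^r$. This is the computational core; the rest is assembly.

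Finally, having established that the $I^p$ are quaternionic and integrable (integrability of each, and the vanishing of the Nijenhuis concomitant, follow from the $q^{+a}$ being genuine ${\cal N}=4$ superfields, as in Sect.~4), Theorem~4 applies verbatim: the three structures are covariantly constant with respect to the common Bismut connection, so the geometry is HKT. If in addition the potential ${\cal L}$ is such that the induced torsion form $C$ has no $(2,1)+(1,2)$ mixed part — which, in the harmonic language, is the statement that ${\cal L}^{+3a}$ degenerates to $\Omega^{ab}\partial{\cal L}^{+4}/\partial f^{+b}$ as in \p{L3=dL4} and the construction collapses to the hyper-K\"ahler one of Sect.~3 — then $C$ reduces to a closed $(3,0)+(0,3)$ piece which can be removed, the $I^p$ become Levi-Civita-covariantly constant, and the manifold is hyper-K\"ahler. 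This is the ``special cases'' clause of the theorem.
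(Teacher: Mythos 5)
Your proposal is correct and follows essentially the same route as the paper: the paper's proof of Theorem 6 likewise reads off the three complex structures from the hidden supersymmetry transformations, arriving at $\delta x^{kb}=e^{kb}_{\u{kb}}\,\epsilon_{\u{l}}{}^{\u{k}}\,\psi^{\u{lb}}$ via the bridges, observes that in tangent-space form the structures are the constant matrices $(I^p)_{\u{lc}}{}^{\u{kb}}=-i(\sigma^p)_{\u{l}}{}^{\u{k}}\delta^{\u{b}}_{\u{c}}$ inherited from the single harmonic $SU(2)$ (so the mixed-sign bi-HKT option \p{I-biHKT} is excluded, exactly as you anticipated), and then invokes the result that quaternionic structures imply HKT. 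Note only that this last result is Theorem 5 of the paper rather than Theorem 4, and that the hyper-K\"ahler ``special cases'' clause is established separately in Theorem 8, where both \p{L3=dL4} and a quadratic ${\cal L}$ are required.
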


\begin{proof}
It is sufficient to prove that the complex structures in this model are
quaternionic and then use the result of Theorem 5.
Indeed, the component action respects ${\cal N}=4$ supersymmetry by
construction because it is obtained from the superfield action.
Hence  all the conditions needed for the ${\cal N}=1$ action \p{SgenN1}
to be ${\cal N}=4$  supersymmetric and for Theorem 5
to be applicable are satisfied.

The complex structures can be obtained by deriving the law of supertransformations for the variables $x^{ja}$ and comparing it with
\be
\lb{trans-x-psi}
\delta x^{ja} \ =\ i\epsilon_p \, (I^p)_{kb}{}^{ja} \, \psi^{kb} \, ,
\ee
which is the first line in \p{delXN2-comp}
for three complex structures with the indices in spinor notation.

Supertransformations of the dynamical variables are generated by the shifts of the odd superspace
coordinates $\theta$. Consider the G-analytic superfield \p{Expq+}.
The shifts  $\delta \theta^+ = \epsilon^+, \, \delta \bar\theta^+ = \bar\epsilon^+$  [where
$\epsilon^+ = \epsilon^{{k}} u^+_{{k}}, \, \bar\epsilon^+ = \bar \epsilon^{{k}} u^+_{{k}}$;  $(\epsilon_{{k}})^* =
\bar \epsilon^{\u{k}}$]  yield \footnote{The transformation rules of the component fields
are found from the generic superfield transformation law $\Phi'(Z) \simeq \Phi( Z +\delta Z) = \Phi(Z)
+ \delta Z\,\partial_Z \,\Phi(Z) + \ldots \,$.}
\be
\lb{trans-f+}
\delta f^{+a}(t,u) \ =\ \epsilon^+ \chi^a(t,u) +\bar \epsilon\bar \chi^a (t, \tilde{u}) \, .
\ee
We need, however, to derive the transformation law for harmonic-independent fields.
The field $f^{+a}(t,u)$ can, indeed, be expressed in terms of a
harmonic-independent ``central basis'' field $x^{ja}(t)$ after
solving equation \p{Bos}, as was explained in detail in Sect. 3.
 Equations \p{ferm} can be resolved in a similar way.
We may represent their solution as
\be
\lb{psi-most}
\chi^a(t, u) \ =\ (M^{-1})^a_{\u{b}} \, \psi^{\u{b}}(t) \, , \qquad  \bar\chi^a(t, \tilde{u})
\ =\ (M^{-1})^a_{\u{b}} \, \bar\psi^{\u{b}}(t)  \, ,
\ee
where $M^{\u{b}}_a$ and its inverse $(M^{-1})^a_{\u{b}} $ are very important objects called the {\it bridges}.
They relate the fields carrying the {\it world} symplectic index $a$
to those carrying the {\it tangent space} symplectic index
$\u{a}$.
\footnote{From now on we will distinguish the ordinary
world spinorial and symplectic  indices $j, a$
from the  tangent space underlined indices $\u{j}, \u{a}$.
The operator ${\cal D}^{++}$ acts,
according to \p{D++contra} and \p{D++co}, only on the world symplectic indices.}
The bridges play the role of vielbeins for the
analytic diffeomorphisms \p{anal-diff}.
 They satisfy the equations
\be
\lb{D++most}
{\cal D}^{++}  \,    M_a^{\u{b}}  = 0   \qquad \Longleftrightarrow \qquad  {\cal D}^{++}  \,
  (M^{-1})^a_{\u{b}}  = 0 \, .
\ee

  The harmonic-independent fields $\psi^{\u{b}}$ and $\bar\psi^{\u{b}}$
can be joined into the  quartet
$\psi^{\u{kb}}$ ,
\be
\lb{combining-psi}
\psi^{\u{1b}} \equiv \psi^{\u{b}}, \ \ \ \ \ \ \  \psi^{\u{2b}}  \equiv   \bar  \psi^{\u{b}}
\, ,
\ee
which satisfies the same pseudoreality conditions \p{pseudoreal}
as the bosonic coordinates, bearing in mind the convention \p{conj-chia}.
It is natural to interpret  $\psi^{\u{kb}}$ as the fermion field
carrying the tangent space index $A$ and expressed in spinor notation.
Note, however, that there are many solutions to \p{D++most}, interrelated by the right multiplications.
\footnote{This is the so-called ``$\tau-gauge$ freedom'' discussed in \cite{DI}:
\be
\lb{M->RM}
M^{\u{b}}_a \ \rightarrow \ M^{\u{c}}_a \, R_{\u{c}}^{\ \u{b}} \, ,
 \qquad (M^{-1})^a_{\u{b}} \ \rightarrow \ (R^{-1})_{\u{b}}^{\u{c}}
(M^{-1})^a_{\u{c}}.
\ee
 For the ``analytic diffeomorphism vielbeins'' $M$ and $M^{-1}$, the transformations \p{M->RM} play
 the same role as the orthogonal tangent space rotations
for the ordinary vielbeins $e^M_A$.
Note that we changed the notation $L^{\u{b}}_{\u{a}} \rightarrow R^{\u{b}}_{\u{a}}$, compared to \cite{DI},
 not to mix up this matrix with various Lagrangians.}
and the definition of  $\psi^{\u{kb}}$ depends on the choice of the matrix $R$
We will see below that
one can always choose the matrix  $R$ in such a way that  $\psi^{\u{kb}}$ defined above exactly coincides with
$i (\Sigma_A)^{\u{kb}}\psi^A/\sqrt{2}$, where $\psi^A = e^A_M\, \psi^M$.

Bearing \p{psi-most} in mind, the transformation law \p{trans-f+} can be rewritten as
\be
\lb{Mdelx=psi}
M_a^{\u{b}}\, (\partial_{kb} f^{+a})\, \delta x^{kb}(t) \ =\ \epsilon^+ \psi^{\u{b}}(t) + \bar \epsilon^+  \bar
\psi^{\u{b}}(t) \, ,
\ee
where $\partial_{kb} = \partial/\partial x^{kb}$, $(\psi^{\u{a}})^* = \bar \psi_{\u{a}}$ and
$\epsilon^+ = \epsilon^{\u{k}} u^+_{\u{k}}, \, \bar\epsilon^+ = \bar \epsilon^{\u{k}} u^+_{\u{k}}$.

The right-hand side of \p{Mdelx=psi} is proportional to $u_{\u{j}}^+$
while more complicated terms in the harmonic expansion are absent.
The same should be true for the left-hand side. Indeed, differentiating \p{Bos}
over $x^{kb}$, we obtain
$$
{\cal D}^{++} (\partial_{kb} f^{+a}) = 0 \, .
$$
[Cf. Eq.\p{D++dot}; note that in the equation above the operator
 ${\cal D}^{++}$ does not ``feel'' the presence of the world index $kb$.]
Using \p{GaHa} and
\p{D++most}, we derive
$$
\partial^{++} \left[   M_a^{\u{b}} \, (\partial_{kb} f^{+a})  \right] \ =\ 0 \, ,
$$
and hence the object $E^{+\u{b}}_{kb} :=  M_a^{\u{b}} (\partial_{kb} f^{+a}) $
 can be represented as \footnote{For convenience, we define
the vielbeins with the opposite sign compared to those in \cite{DI}.}
\be
\lb{E+propu+}
E^{+\u{b}}_{kb} \ =\ e^{\u{kb}}_{kb} (t) u^+_{\u{k}\,.}
\ee
The transformation law \p{Mdelx=psi} acquires the form
\be
\lb{delx=epsi}
\delta x^{kb} \ =\ e^{kb}_{\u{kb}} (\epsilon^{\u{k}} \psi^{\u{b}} +
\bar \epsilon^{\u{k}} \bar \psi^{\u{b}}) \, ,
\ee
where $e^{kb}_{\u{kb}}$ is the inverse of $e^{\u{kb}}_{kb}$.
We will see by the end of this section that these matrices are
nothing but the ordinary vielbeins under a particular choice of
the matrix $R$ in \p{M->RM}.

We now define the matrix $\epsilon_{\u{l}}{}^{\u{k}}$  according to
\be
\lb{combining}
\epsilon_{\u{1}}{}^{\u{k}}  \equiv  \epsilon^{\u{k}} , \ \ \ \ \ \ \epsilon_{\u{2}}{}^{\u{k}} \equiv
\bar \epsilon^{\u{k}}
\ee
and represent it as
\be
\lb{eps-kl}
\epsilon_{\u{l}}{}^{\u{k}}  \ =\ i\epsilon_0 \delta^{\u{k}}_{\u{l}} + \epsilon_p  (\sigma^p)_{\u{l}}{}^{\u{k}}
\ee
with real $\epsilon_0, \epsilon_p$. The matrix \p{eps-kl} satisfies the identity
$( \epsilon_{\u{l}}{}^{\u{k}})^* = \varepsilon_{\u{ki}} \, \varepsilon^{\u{lj}} \,   \epsilon_{\u{j}}{}^{\u{i}}$\, .
In this notation, the transformation law \p{delx=epsi} is expressed as
\be
\lb{delx=epsi-matr}
\delta x^{kb} \ =\ e^{kb}_{\u{kb}} \, \epsilon_{\u{l}}{}^{\u{k}} \, \psi^{\u{lb}} \ =\
i\epsilon_0 \, e^{kb}_{\u{kb}} \, \psi^{\u{kb}} \, +  \, \epsilon_p  \,
 e^{kb}_{\u{kb}} \, (\sigma^p)_{\u{l}}{}^{\u{k}}  \psi^{\u{lb}}
\, .
\ee
If we define
\be
\psi^{kb} \equiv e^{kb}_{\u{kb}} \psi^{\u{kb}}\,, \lb{Defpsi}
\ee
the first term in \p{delx=epsi-matr} reads
simply as $ \delta x^{kb} =  i\epsilon_0 \psi^{kb}$. This is a ${\cal N} = 1$
 supersymmetry transformation
indicating that $x^{kb}$ and $\psi^{kb}$ are superpartners: they
 represent the components of the ${\cal N} = 1$ superfield \p{N1field} [cf.
\p{epsilon0comp}].

The second term in \p{delx=epsi-matr} can be represented as
\be
\lb{trans-tripl}
\delta x^{kb} \ =\  i\,\epsilon_p \, e^{\u{lc}}_{lc}\,(I^p)_{\u{lc}}{}^{\u{kb}} \, e^{kb}_{\u{kb}}\psi^{lc} \,
\equiv  i\,\epsilon_p \,(I^p)_{lc}{}^{kb}\psi^{lc} \, ,
\ee
where

\be
\lb{I-updown}
(I^p)_{\u{lc}}{}^{\u{kb}} \, \ =\ -i \,(\sigma^p)_{\u{l}}{}^{\u{k}} \, \delta_{\u{c}}^{\u{b}} \, .
\ee
Lowering the indices in \p{I-updown}, we reproduce the flat quaternionic complex structures \p{Ip}.
The complex structures $(I^p)^{\ M}_N = e_{NA} (I^p)_{AB} \, e^M_B$, or those in the spinor notation,
\be
(I^p)_{lc}{}^{kb} = -i e^{\u{lc}}_{lc} (\sigma^p)_{\u{l}}{}^{\u{k}} e^{kb}_{\u{kc}}\,, \lb{StrSpin}
\ee
which enter \p{trans-tripl}, are also quaternionic. Hence, according to  Theorem 5,  our manifold is HKT.

\end{proof}

This result was earlier achieved in  Ref.\cite{DI}  in another way, by the explicit calculations
of the  connections and the torsions. As we have now seen, one can arrive
at this conclusion in a simpler way,  merely by inspecting the
laws of supersymmetry transformations and
the corresponding complex structures.

\subsection{The metric}

We now describe, following \cite{DI}, in more technical detail how the metric of the HKT manifold corresponding to the action \p{S-lin}
with the constraint \p{D++q=F} is derived. To this end,
we have to express the action (it is sufficient to look at the bosonic action) in terms
of the harmonic-independent coordinates $x^{ja}(t)$. This is achieved in several steps.

At the first step, we express the bosonic part of the action \p{S-lin} in harmonic components $f^{+a}(t, u)$ and
$\tilde{A}^{-a}(t,u)$ --- the component $A^{-a}$ with suppressed fermion dependence. Using the expansion \p{Expq+} and the expansion
\be
\lb{q-a}
q^-_a &=& D^{--} q^+_a  \nn
&=&  \partial^{--} f^+_a + 2i \theta^- \bar \theta^- {\dot f}^+_a + \theta^+ \bar \theta^+ \partial^{--} A^-_a
+ (\theta^- \bar \theta^+ + \theta^+ \bar \theta^-) A^-_a   -2i \theta^+ \theta^- \bar \theta^+ \bar \theta^- \dot{A}^-_a \nn
&& +\, \theta^- \chi_a + \bar \theta^- \bar \chi_a + \theta^+ \partial^{--} \chi_a + \bar \theta^+ \partial^{--} \bar \chi_a
+ 2i \theta^- \bar \theta^- \theta^+ {\dot \chi_a} + 2i \theta^- \bar \theta^-\bar \theta^+ \dot {\bar\chi}_a\,,
\ee
we derive
\be
\lb{act-via-fA}
S^{\rm bos} = \frac{i}4 \int du dt \Big[ 2{\dot f}^{+a} \tilde{A}^{-b} \, \partial_{+[a} \partial_{-b]} {\cal L}
+  \left({\dot f}^{-a} \tilde{A}^{-b} - {\dot f}^{+a} \partial^{--} \tilde{A}^{-b} + \frac12\,\tilde{A}^{-a} \tilde{A}^{-b}
\right)  \partial_{-a} \partial_{-b} {\cal L} \Big].
\ee

At the second step, we resolve the constraint \p{Bos} [or \p{D++dot}, which is more convenient] and
the constraint
\be
\lb{tildeA-}
{\cal D}^{++} \tilde{A}^{-a} = -  2i \dot{f}^{+a}
\ee
and express ${\dot f}^{+a}(t,u)$ and $\tilde{A}^{-a}(t,u)$
via ${\dot x}^{ja}$.
Using the same arguments as that we used when deriving \p{E+propu+}, we obtain
\be
\lb{f+sol}
{\dot f}^{+a} \ =\ (M^{-1})^a_{\u{b}} \, e^{\u{kb}}_{kb} \, {\dot x}^{kb} \, u^+_{\u{k}} \, .
\ee
Here $e^{\u{kb}}_{kb}$ are the same vielbeins as in \p{E+propu+}, as is clear if we represent
$$
{\dot f}^{+a} \ =\ (\partial_{kb}  f^{+a}) {\dot x}^{kb} \, .
$$

The solution to the constraint \p{tildeA-} reads
\be
\lb{A-sol}
\tilde{A}^{-a} \ =\  -2i(M^{-1})^a_{\u{c}} \, e^{\u{lc}}_{lc} \, {\dot x}^{lc} \, u^-_{\u{l}} \, .
\ee
One can explicitly check it by acting on \p{A-sol} with the operator ${\cal D}^{++}$ and using \p{D++most} and the identity $\partial^{++} u^-_{\u{l}} =   u^+_{\u{l}}$.

Now note that $\tilde{A}^{-a}$ can also be presented as
\be
\lb{A-|and|E-2}
\tilde{A}^{-a} \ =\ -2i ({\dot f}^{-a} - E^{-2 a}_b {\dot f}^{+b} ) \, ,
\ee
where $f^{-a} = \partial^{--} f^{+a}$ and
\be
\lb{E-2}
E^{-2 a}_b \ =\ M^{\u{a}}_{b} \, \partial^{--} (M^{-1})^a_{\u{a}}\,.
\ee
This can be verified by acting with the operator ${\cal D}^{++}$ on the right-hand side of \p{A-|and|E-2}
and using the definitions \p{D++contra}, \p{D++co}, \p{E-2}, \p{D++most} together with the useful identity following from \p{E-2}
$$
{\cal D}^{++} E^{-2 a}_b = \partial^{--}E^{+ 2 a}_b \,,
$$
where $E^{+ 2 a}_b$ was defined in \p{D++contra}. Also note the relation
\be
{\cal D}^{--}\tilde{A}^{-a} = \partial^{--}\tilde{A}^{-a} - E^{-2 a}_b\tilde{A}^{-b} = 0\,. \lb{D--A}
\ee

Substituting $\tilde{A}^{-a}$ in the form \p{A-|and|E-2} into the
action \p{act-via-fA} and using \p{D--A}, we express
\p{act-via-fA} as
\be
\lb{act-via-fA-F}
S^{\rm bos} \ =\ \frac{i}{2}\int \, du dt  {\dot f}^{+a} \tilde{A}^{-b} \,   {\cal F}_{ab}   \, ,
\ee
where
\be
\lb{Fab}
{\cal F}_{ab}  \ = \ \left(\partial_{+[a} \partial_{-b]}
+ E^{-2c}_{[a} \partial_{-b]} \partial_{-c} \right) {\cal L}
\ee
is antisymmetric in  $a \leftrightarrow b$.

We are ready now to perform the third step and to plug in \p{act-via-fA-F} the
 solutions \p{f+sol} and \p{A-sol} derived above.
We obtain the Lagrangian
\be
\lb{L-harmint}
L^{\rm bos} \ =\ {\dot x}^{kb} {\dot x}^{lc} e^{\u{ja}}_{kb} e^{\u{ib}}_{lc}\, \int du \, u^+_{\u{j}} u^-_{\u{i}}
{\cal F}_{ab} (M^{-1})^a_{\u{a}}  (M^{-1})^b_{\u{b}} \, .
\ee
By using symmetry considerations, we can observe now that the whole expression  that multiplies
the structure $u^+_{\u{j}} u^-_{\u{i}}$ in \p{L-harmint} is antisymmetric
 over $\u{j} \leftrightarrow \u{i}$. Thus,
$u^+_{\u{j}} u^-_{\u{i}}$ may be replaced by $\varepsilon_{\u{ji}}/2$ and we finally derive
\be
L^{\rm bos} \ =\ \frac12\, g_{ia, jb}{\dot x}^{ia} {\dot x}^{jb}\,,
\ee
where
\be
\lb{metric}
g_{ia, jb} \ =\  e^{\u{ia}}_{ia} e^{\u{jb}}_{jb}\, \varepsilon_{\u{ij}} G_{\u{ab}} =
\varepsilon_{ij}\Omega_{ab} + {\rm nonlinear \,terms} \, ,
\ee
and
\be
\lb{Gab}
G_{\u{ab}} \ =\ \int du   \,  {\cal F}_{ab} (M^{-1})^a_{\u{a}}
(M^{-1})^b_{\u{b}}  =  \Omega_{\u{ab}} + {\rm nonlinear \,terms}\, .
\ee
We see that, generically, the metric is not just a convolution
of the matrices $e^{\u{ia}}_{ia}$, but involve an extra factor ---
the antisymmetric matrix $G_{\u{ab}}$ carrying the symplectic indices. However,
one can bring $G_{\u{ab}}$ to
the form $\Omega_{\u{ab}}$ using the   gauge freedom \p{M->RM} with a harmonic independent matrix $R$.
Indeed, consider $R_{\u{a}}^{\ \u{b}} = \delta_{\u{a}}^{\u{b}} +
\lambda_{\u{a}}^{\ \u{b}}$ with $\lambda \ll 1$. We can then write
\be
R_{\u{a}}^{\ \u{c}}  R_{\u{b}}^{\ \u{d}} \Omega_{\u{cd}} \ =\
 \Omega_{\u{ab}} - 2 \lambda_{\u{[ab]}} + o(\lambda) \, .
\ee
It is clear that a finite version of this transformation
can produce an arbitrary antisymmetric matrix $G_{\u{ab}}$.

Note that the metric \p{metric} as a whole is invariant
under the gauge transformations with matrix  $R_{\u{a}}^{\ \u{b}}$ defined
in \p{M->RM}. It is immediately seen from the definition \p{E+propu+}:
 a transformation of $M_a^{\u{b}}$ brings about the same
transformation of $e^{\u{jb}}_{jb}$, while the factors $\sim M^{-1}$
 in \p{Gab} are multiplied by the inverse matrices
$(R^{-1})_{\u{b}}^{\ \u{a}}$. If the gauge   $G_{\u{ab}} = \Omega_{\u{ab}}$ is chosen,
the metric is given by the standard expression
\be
\lb{metric-stand}
g_{ia, jb} \ =\  e^{\u{ia}}_{ia} e^{\u{jb}}_{jb}\, \varepsilon_{\u{ij}} \Omega_{\u{ab}} \, .
\ee
It is simply the familiar identity $g_{MN} = e_M^A e_N^A$ re-expressed in spinorial notation.

Consider now a set of models characterized by the same potential ${\cal L}^{+3a}$ and hence the same
harmonic constraints  \p{D++most}, but different Lagrangians ${\cal L}$ in the actions \p{S-lin}.
Different ${\cal L}$'s result in different ${\cal F}_{ab}$'s in \p{Fab}, which affect $G_{\u{ab}}$.
As we have just seen, this modification
 may be compensated by the appropriate gauge rotations of the bridges and,
correspondingly, of the vielbeins. We thus obtain a family of models whose vielbeins are
interrelated by the transformations
\be
\lb{e->eR}
\tilde{e}^{\u{ja}}_{ja} \ =\ e^{\u{jb}}_{ja} \, R_{\u{b}}^{\ \u{a}} \, ,
\ee
where all the dependence on ${\cal L}$ is encoded in the matrix $R$.
To keep $e^A_M$ real, the  matrix $R_{\u{b}}^{\ \u{a}}$ should
satisfy pseudoreality conditions, like in \p{pseudoreal}.
 Then it involves $4n^2$ real parameters.
 $2n^2 + n$ of them correspond to the action
 of $Sp(n)$ group,
  which does not affect the metric. So we are left with
 $2n^2 -n$ ``physical'' parameters. For $n=1$, only one parameter is left,
  which corresponds to multiplying the metric by a conformal factor.

An important observation is that the complex structures
in all such models coincide. Indeed, looking at the expression \p{StrSpin},
we observe that it is invariant under any $R$-transformations
 because the upper index $\underline{c}$ in  \p{StrSpin}
is rotated by the matrix $R$, while the contracted lower-case
index $\underline{c}$ by the inverse matrix $R^{-1}$. So we have
\be
\lb{tildeI=I}
(\tilde{I}^p)_{lc}{}^{kb} \ =\   (I^p)_{lc}{}^{kb} \, .
\ee
In particular, in all the models expressed in terms of linear
 ({\bf 4}, {\bf 4}, {\bf 0}) multiplets with vanishing
${\cal L}^{+3a}$,
the complex structures keep their flat form \p{Ip}.

An important corollary of this observation is
\begin{thm}
The Obata curvature invariants for a family of the HKT metrics,
characterized by a particular nonlinear constraint \p{D++q=F}
but having different Lagrangians ${\cal L}$ in \p{S-lin},  coincide.
\end{thm}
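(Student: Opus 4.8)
The plan is to deduce Theorem~9 from two ingredients that are already in place: first, that the Obata connection of an HKT manifold is built \emph{solely} out of the triple of complex structures $(I,J,K)$ — this is the content of Theorem~2, formula \p{Obata}, together with the uniqueness of the Obata connection established inside that proof (\emph{Lemma~4}); and second, that within one family — i.e. for all Lagrangians $\mathcal{L}$ in \p{S-lin} sharing a fixed constrained potential $\mathcal{L}^{+3a}$ in \p{D++q=F} — all the models carry the \emph{same} complex structures in the \emph{same} coordinates, which is exactly \p{tildeI=I}. Since the Obata curvature is produced from the Obata connection by ordinary differentiation and multiplication, it will then follow at once that the curvature, and hence any scalar assembled from it, is a family invariant.

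More concretely I would proceed as follows. First, fix the holomorphic frame adapted to $I$, in which $I$ is the constant matrix \p{Icompl}; by \p{tildeI=I} this frame is legitimate for \emph{every} member of the family simultaneously, since they all share the same $I^p(x)$ (and, as \p{StrSpin} shows, the change of $\mathcal{L}$ is absorbed entirely into the harmonic‑independent rotation \p{e->eR}, which cancels in the complex structures). In this frame \p{Obata} expresses $(\Gamma^{O})^k_{mn}$ purely through $J$ (equivalently $K$), with no reference to $g_{MN}$; since $J$ is common to the whole family, the Obata Christoffel symbols $(\Gamma^{O})^P_{MN}(x)$ coincide for all such $\mathcal{L}$. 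Second, the Obata curvature is the standard torsion‑free expression $R^{O}{}_{MN\,P}{}^{Q}=2\partial_{[M}(\Gamma^{O})^{Q}_{N]P}+2(\Gamma^{O})^{Q}_{[M|R|}(\Gamma^{O})^{R}_{N]P}$, a polynomial in the Obata symbols and their first derivatives, so it is the \emph{same} tensor in the common coordinates for every member of the family. Third, any invariant is a contraction of several copies of $R^{O}$ — possibly together with the shared complex structures $I^p$ — using Kronecker deltas, hence it too is family independent, which is the assertion of the theorem.

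The one point that genuinely needs care, and which I regard as the only real substance of the argument, is the meaning of ``curvature invariant'' here: the Obata connection does \emph{not} preserve any metric (as emphasised right after Definition~7), so one may not use $g_{MN}$ to raise and lower indices when forming scalars, and different members of the family indeed have different metrics (for $n=1$, differing by a conformal factor). The invariants in question must therefore be the metric‑free ones — traces of the $(1,3)$ Obata curvature and of its powers, contractions of several copies of $R^{O}$ with $\delta^{M}_{N}$ and with the $I^p$'s, and so on — and for precisely these the argument above applies verbatim. If instead one contracts with one chosen metric of the family the resulting scalar is still well defined for that member, but it is the metric‑independent part, just described, that is universal across the family and that the theorem is really about.
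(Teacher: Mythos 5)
Your proposal is correct and follows essentially the same route as the paper: invoke Theorem~2 to express the Obata connection purely through the complex structures in the frame adapted to $I$, use \p{tildeI=I} to conclude that the connection, hence the curvature and its invariants, are independent of ${\cal L}$. Your added remark that the invariants must be the metric-free contractions (since the Obata connection does not preserve any metric of the family) is a sensible clarification but does not change the argument.
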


\begin{proof} This statement follows from Theorem 2, which says that
there exists a frame
 where the components of the Obata connection are expressed via the complex
structures, and the fact
 that the latter do not depend on ${\cal L}$. Then this connection also cannot depend on ${\cal L}$  and the same
 is true  for the Obata curvature invariants.
\end{proof}

It follows, in particular, that all  models based on linear $({\bf 4, 4, 0})$
multiplets are Obata-flat.

\section{HKT $\to$ HK $\scriptstyle{\Box}$ Taub-NUT}
\setcounter{equation}0

\begin{thm}
Consider a limited class of models where the function ${\cal L}^{+3a}$ in \p{D++q=F}
represents a gradient,
\be
\lb{L3=dL4}
{\cal L}^{+3a} \ =\ \Omega^{ab} \frac {\partial {\cal L}^{+4}}{\partial q^{+b}}
\ee
and the function ${\cal L}$ in \p{S-lin} is quadratic,
${\cal L} \ =\   q^{+a} q^-_a$.
In this case,  the metric is hyper-K\"ahler.
\end{thm}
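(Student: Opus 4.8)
The plan is to show that under the two simplifying assumptions --- the gradient form \p{L3=dL4} of ${\cal L}^{+3a}$ and the quadratic choice ${\cal L} = q^{+a}q^-_a$ --- the whole harmonic construction of Sect.~5 degenerates into the ${\cal N}=8$ hyper-K\"ahler construction of Sect.~3, for which the metric is known to be hyper-K\"ahler. Concretely, I would first observe that when ${\cal L}^{+3a}$ is a gradient, the compatibility obstruction discussed around \p{G=Om-G} disappears: with ${\cal L}^{+3a} = \Omega^{ab}\,\partial {\cal L}^{+4}/\partial q^{+b}$ one has $E^{+2a}_b = \Omega^{ac}\,\partial^2{\cal L}^{+4}/\partial f^{+c}\partial f^{+b}$, and lowering the index with $\Omega$ gives a \emph{symmetric} matrix $E^{+2}_{ab}$. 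Hence the relation $G^a = \Omega^{ab}G_b$ \emph{is} consistent with \p{D++contra}, \p{D++co}, so that covariant and contravariant symplectic vectors can be identified. This is the step that makes everything collapse, and it is exactly the situation flagged in the paragraph before \p{D++dot}.

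Next I would match the two formalisms at the level of the defining equations. The ${\cal N}=4$ harmonic constraint \p{Bos}, \p{D++q=F} with \p{L3=dL4} reads $\partial^{++}f^{+a} = \Omega^{ab}\,\partial{\cal L}^{+4}/\partial f^{+b}$, which is precisely the bosonic equation \p{eqmot-F} coming from the ${\cal N}=8$ action \p{Sq+N8}; similarly the fermion-suppressed equation \p{tildeA-} for $\tilde A^{-a}$ coincides with the truncated \p{eqmot-A}, and the bridges $M^{\u b}_a$ of Sect.~5 become the harmonic matrices that resolve \p{eqmot-F} in Sect.~3. Therefore the solutions $f^{+a}(t,u)$, $\tilde A^{-a}(t,u)$, and hence the vielbeins $e^{\u{jb}}_{jb}$ and the bridges, are literally the same objects in both pictures. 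Then I would feed the quadratic ${\cal L} = q^{+a}q^-_a$ into the master formula \p{Fab}: since $\partial_{+a}\partial_{-b}{\cal L} = \Omega_{ab}$ (constant, antisymmetric) and $\partial_{-a}\partial_{-b}{\cal L}=0$, one gets ${\cal F}_{ab} = \Omega_{ab}$ (the $E^{-2}$ term drops out because $\partial_{-b}\partial_{-c}{\cal L}=0$). Plugging ${\cal F}_{ab}=\Omega_{ab}$ and $M^{-1}$ into \p{Gab}, using the symplectic covariant constancy \p{D++most} of the bridges together with the newly established symmetry $E^{+2}_{ab}=E^{+2}_{ba}$, one checks that $G_{\u{ab}} = \int du\,(M^{-1})^a_{\u a}(M^{-1})^b_{\u b}\,\Omega_{ab}$ is harmonic-independent and equals $\Omega_{\u{ab}}$ exactly (no nonlinear corrections). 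The bosonic action \p{act-via-fA-F} reduces to $\tfrac{i}{2}\int du\,dt\,\dot f^{+a}\tilde A^{-b}\Omega_{ab}$, which is the ${\cal N}=8$ expression \p{S-comp-Q+}, and the metric \p{metric-stand} $g_{ia,jb}=e^{\u{ia}}_{ia}e^{\u{jb}}_{jb}\,\varepsilon_{\u{ij}}\Omega_{\u{ab}}$ is the hyper-K\"ahler metric of Sect.~3.

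Finally, invoking the theorem of \cite{AG-F} recalled in Sect.~3 --- that the resulting ${\cal N}=8$ sigma model necessarily has a hyper-K\"ahler target --- closes the argument; alternatively, one can note that with the identification of covariant and contravariant vectors the bridges now carry a genuine $Sp(n)$ structure, so the spin connection built from the vielbeins \p{metric-stand} lands in $sp(n)$ and Theorem~1 (together with Remark~2) applies directly. The main obstacle I anticipate is the bookkeeping in the second paragraph: one must verify carefully that the vielbeins/bridges produced by the ${\cal N}=4$ resolution of \p{Bos}, \p{tildeA-} coincide with those of the ${\cal N}=8$ resolution of \p{eqmot-F}, \p{eqmot-A} --- i.e.\ that the extra component structure of the ${\cal N}=8$ superfield \p{q+N8} beyond \p{Expq+} does not contribute to the bosonic metric, which is exactly the content of the remark after \p{S-comp-Q+}. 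Everything else is routine substitution into \p{Fab}--\p{metric-stand}.
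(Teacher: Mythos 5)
Your proposal is correct and follows essentially the same route as the paper: show that with ${\cal L}=q^{+a}q^-_a$ one gets ${\cal F}_{ab}=\Omega_{ab}$, so the bosonic action collapses to $\frac{i}{2}\int dt\,du\,\dot f^{+a}\tilde A^-_a$ with constraints identical to \p{eqmot-F}, \p{eqmot-A} of the ${\cal N}=8$ model \p{Sq+N8}, whence the metrics coincide and the theorem of \cite{AG-F} gives hyper-K\"ahlerity. The extra observations you add (symmetry of $E^{+2}_{ab}$ making \p{G=Om-G} consistent, and the $Sp(n)$ gauge in which $G_{\u{ab}}=\Omega_{\u{ab}}$) are correct and appear in the paper's Sect.~6 discussion rather than in the proof itself.
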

\begin{proof}
The point is that the bosonic metric following from the quadratic
\p{S-lin} with the
constraints \p{D++q=F}, \p{L3=dL4} {\it exactly coincides} with the
metric derived from the
equation of motion \p{eqmotL+4} for the model \p{Sq+N8}.
 The latter involves the ``large'' multiplet \p{q+N8},
 it has ${\cal N} = 8$ supersymmetry and gives rise to a hyper-K\"ahler metric,
as follows from the theorem
of \cite{AG-F}. Thus, to prove our theorem, we only have to show that the
 metrics of the
two models are, indeed, the same.

Consider first the ${\cal N} = 4$ model. The bosonic action has the form
 \p{act-via-fA-F}, where the factor
\p{Fab} acquires now a very simple form ${\cal F}_{ab} = \Omega_{ab}$.
We obtain
\be
\lb{S-bosharm}
S \ =\ \frac{i}{2}\, \int dt du\, {\dot f}^{+a} \tilde{A}^-_a \, ,
\ee
where $f^{+a}$ and $\tilde{A}^-_a$  satisfy the constraints
\be
\partial^{++} f^{+a} &=& \Omega^{ab} \frac {\partial
 {\cal L}^{+4}}{\partial f^{+b}}\,,  \lb{cons-f} \\
{\cal D}^{++} \tilde{A}^-_a &=& -2i {\dot f}^+_a \, . \lb{cons-A}
\ee

We consider now the ${\cal N} = 8$ model \p{Sq+N8} and observe that
its bosonic action \p{S-comp-Q+} has exactly the
same form as \p{S-bosharm}   (one should only replace $f^{+a} \to F^{+a}$)
and the fields $f^{+a}$,
$\tilde{A}^-_a$ satisfy there exactly the same constraints \p{eqmot-F},
\p{eqmot-A} as for the
${\cal N} = 8$ model. This means that the metrics in these two models
 coincide.
As the metric of the  ${\cal N} = 8$ model is hyper-K\"ahler,
the same is true for the ${\cal N} = 4$ model.

\end{proof}

The hyper-K\"ahler nature of the metric follows also from the
explicit expressions for the Bismut connection and its torsion
derived
in \cite{DI} (see also Appendix\,C). If the condition \p{L3=dL4} is
fulfilled and ${\cal L}$ is quadratic, the torsion
vanishes and the Bismut connection is reduced to the Levi-Civita one.

Note that Theorem 8 only gives {\it sufficient}
conditions for the metric to be hyper-K\"ahler, but not
the necessary ones. In particular, if ${\cal L}$ is not
 quadratic and \p{L3=dL4} is not fulfilled,
the metric can still be hyper-K\"ahler [see Eq.~\p{harm-diff}
below and discussion thereof]. On the other hand,
for the quadratic ${\cal L}$, the condition \p{L3=dL4} is  also {\it necessary}.
 Indeed, let ${\cal L} = q^{+a} q^-_a$
and suppose that  \p{L3=dL4} is not fulfilled, but the metric is still HK.
 We know, however, from
the results of \cite{OgIv} discussed above that {\it any} HK metric is derived
from some prepotential
${\cal K}^{+4}$ which enters the ${\cal N} =8$ Lagrangian \p{Sq+N8}.
As we have just seen, the same metric
can be derived from the ${\cal N} = 4$ model \p{S-lin} with
 the quadratic ${\cal L}$ and the constraint involving
     \be
\lb{K3=dK4}
{\cal K}^{+3a} \ =\ \Omega^{ab} \frac {\partial {\cal K}^{+4}}{\partial q^{+b}} \, .
     \ee
But it is not possible that one and the same metric follows from two different
prepotentials ${\cal L}^{+3a} \neq {\cal K}^{+3a}$. It is clear from the derivation
outlined in Sect. 5.2 that the bridges and the vielbeins depend on the prepotential
in an essential way.

\vspace{1mm}

To understand how the procedure described in the preceding section
is actually carried through,
consider a simplest nontrivial hyper-K\"ahler example with $a=1,2$
(so that the manifold is 4-dimensional) and
\be
\lb{L+4TN}
{\cal L}^{+4} \ =\ -\frac 12\, (q^{+1})^2 (q^{+2})^2 \, .
\ee
In this case, the constraints \p{cons-f}, \p{cons-A}
can be solved explicitly, and the metric thus obtained is
the Taub-NUT metric (see \cite{CMP} and \cite{harm},
Chapter 5). Let us do here this calculation.
We start with the equation \p{cons-f} for
$f^{+a}$. In our case, this boils down to
\be
\lb{eq-f+}
\partial^{++} f^{+1} \ =\ (f^{+1})^2 f^{+2}, \qquad  \partial^{++} f^{+2} \ =\ -f^{+1} ( f^{+2})^2 \, .
\ee
The equations \p{eq-f+} can be easily solved. The solution reads
\be
\lb{sol-f+}
f^{+1} \ =\ \exp\{J\} x^{+1} \, , \qquad  f^{+2} \ =\ \exp\{ -J\}  x^{+2} \, ,
\ee
where $x^{\pm a} = x^{ia} u_i^\pm $ with harmonic-independent $x^{ia}$ and
\be
\lb{J}
J \ =\ \frac 12\left(x^{+1} x^{-2} + x^{-1} x^{+2} \right).
\ee

The solution to the constraint \p{cons-A} [which is the same
as in \p{tildeA-}] for $\tilde{A}^{-a}$ was given in \p{A-sol}.
Note that in the hyper-K\"ahler case the equation \p{D++most} for the bridge acquires the form
\be
\lb{eq-most-HK}
\partial^{++}  M_a^{\u{b}} + \Omega^{bc} \frac {\partial^2 {\cal L}^{+4}}{\partial f^{+a} \partial f^{+c}}
M_b^{\u{b}} \ =\ 0 \, .
\ee
The factor $ \Omega^{bc} \, \partial^2 {\cal L}^{+4}/ \partial f^{+a} \partial f^{+c}$
 belongs to the algebra $sp(n)$
[cf. \p{sympl-alg}] and that means that there exist solutions to the
equation \p{eq-most-HK} that belong to the group
$Sp(n)$. Such solutions satisfy the condition
\be
\lb{M-sympl}
     (M^{-1})_{\u{b}}^{c} \ =\  \Omega_{\u{ba}} \,
 M_a^{\u{a}}  \, \Omega^{ac}\, .
\ee
Bearing this in mind and the fact that ${\cal F}_{ab} = \Omega_{ab}$, the matrix \p{Gab} is also reduced to
$$
G_{\u{ab}} \ = \ \Omega_{\u{ab}} \, .
$$

Even after we impose the condition for the bridge $M$ to belong
 to $Sp(n)$, there is a freedom associated with multiplication by
constant $Sp(n)$ matrices.  We choose a solution that is reduced to the
unit matrix in the limit when the
nonlinearity associated  with ${\cal L}^{+4}$ is switched off. It reads
\be
\lb{most}
M_a^{\u{b}} \ =\ \frac 1 {\sqrt{1 + x^1 x^2}}
\left( \begin{array}{cc} e^{-J} (1 - x^{-1} x^{+2} ) &  e^{-J} x^{-2} x^{+2}  \\ [6pt]
-e^J x^{-1} x^{+1}  & e^J (1 + x^{-2} x^{+1} ) \end{array} \right),
\ee
where $x^1x^2 =-x^2 x^1 = x^{j1}x^2_j$. The matrix \p{most} belongs to $SU(2)$.

The vielbeins can be found from the solutions \p{sol-f+}, \p{most} and
from the definition \p{E+propu+}. They are
\be
\lb{TN-vierbeins}
e^{\u{ka}}_{ia}   \ =\ \frac 1{\sqrt{1 + x^1 x^2}}
\left( \begin{array}{cc}
- \frac 12 x_i^{\u{1}} x^{\u{k2}} + \delta^{\u{k}}_i \left( 1 + \frac 12 x^1 x^2 \right) &
- \frac 12 x_i^{\u{1}} x^{\u{k1}} \\  [6pt]
\frac 12 x_i^{\u{2}} x^{\u{k2}} &  \frac 12 x_i^{\u{2}} x^{\u{k1}} + \delta^{\u{k}}_i \left( 1 + \frac 12 x^1 x^2 \right)
\end{array} \right).
\ee

And the metric  \p{metric-stand} is
\be
\lb{TN-metric}
g_{ia, jb}  =  \frac 1{1 + x^1 x^2}  \! \left(\!\!\! \begin{array}{cc}
x^2_i x^2_j \left(1 + \frac 12 x^1 x^2 \right) & \!
\varepsilon_{ij}(1 + x^1x^2)^2 + x^2_i x^1_j \left(1 + \frac 12 x^1 x^2\right) \\ [6pt]
- \varepsilon_{ij}(1 + x^1x^2)^2 + x^1_i x^2_j\left(1 + \frac 12 x^1 x^2 \right) \! &
x^1_i x^1_j\left(1 + \frac 12 x^1 x^2 \right)
\end{array} \!\!\!\right).
\ee

It is equivalent to the Taub-NUT metric written in a familiar form
\begin{equation}\label{ds-TN}
ds^2= V_{\mathrm{TN}}^{-1}\,  \left(d\Psi+ \vec{A}d\vec{X}\right)^2+ V_{\mathrm{TN}} \, d\vec{X}d\vec{X} \,,
\end{equation}
where
\begin{equation}\label{V-TN}
V_{\mathrm{TN}}(r) \ = \ \frac{1}{r} +\lambda
\end{equation}
($r = |\vec{X}|$) and
\be
\label{U-TN}
A^1 \ =\  \frac{X^2}{r\left(r+X^3 \right)}\, , \ \ \
A^2 \ = \  -\frac{X^1}{r\left(r+X^3 \right)}\, , \ \ \
A^3 \ =\ 0
\ee
is the vector potential of a magnetic monopole.
The correspondence is established by setting in \p{V-TN} $\lambda = 1$ and changing the variables as
\be
\lb{def-3X}
X^p \ = \  (\sigma^p)_{ij}\,x^{i1}x^{j2}\,,  \ \ \ \ \ \
\Psi \ =\  -i\,\ln\left( \frac{x^{11}}{x^{22}}\right).
\ee
The relations
\be
r \ =\   x^1 x^2, \ \ \ \ \ \ \ \ \ \ d\Psi + \vec{A}d\vec{X}=\frac{i}{r} \left(x^2 dx^1 + x^1 dx^2 \right)
\ee
hold.
Inserting \p{V-TN}, \p{U-TN}  and \p{def-3X} in \p{ds-TN} we obtain
 \be
ds^2&=&\frac{1+\frac12\, x^1 x^2}{1+ x^1 x^2} \,
\Big( x^2_i x^2_j\, dx^{i1}dx^{j1} + x^1_i x^1_j\, dx^{i2}dx^{j2} + 2 x^2_i x^1_j\, dx^{i1}dx^{j2}\Big)  \nonumber\\
&&   +\,2\left(1+  x^1 x^2\right)dx^{i1}dx^{2}_i \,,\label{ds-TN1}
 \ee
which coincides with \p{TN-metric}. Note that the Taub-NUT metric
 possesses the  isometry $SU(2)\times U(1)$, with $SU(2)$ acting
as rotations of the doublet indices $i, j$ and $U(1)$ realized as
$\delta x^1_i = i\gamma\, x^1_i\,,  \;\delta x^2_i = -i\gamma\, x^2_i$.
This isometry is a corollary of the explicit $SU(2)\times U(1)$ invariance
of the hyper-K\"ahler potential \p{L+4TN}.

Another example where the constraints can be explicitly solved and the
 metric explicitly found is the model with
\be
\lb{Eg-Han-L4}
{\cal L}^{+4} \ =\ \frac {(\xi^{jk} u_j^+ u_k^+)^2}{(q^{+a} u_a^-)^2}
\ee
with an arbitrary symmetric $\xi^{jk}$. After performing the program outlined
above and choosing the coordinates
$x^\mu$ in appropriate way, one arrives \cite{harm} (see also \cite{GIOT})
at the hyper-K\"ahler Eguchi-Hanson metric
\be
\lb{Eg-Han}
ds^2 \ =\ \frac {dr^2}{1 - (a/r)^4} + r^2 \left\{\sigma_1^2 + \sigma_2^2 +
\left[1 -  \left( \frac a r\right)^4 \right] \sigma_3^2 \right\} \, ,
\ee
where $r = \sqrt{x_\mu^2}$ and
$\sigma_p$ are not the Pauli matrices, but   the Maurer--Cartan forms
$\sigma_p = \eta_{p\mu\nu} x^\mu d x^\nu$  (and $\eta_{p\mu\nu}$ are `t Hooft's symbols).
 The metric \p{Eg-Han} involves
 an axial $U(1)$ symmetry, which is also seen in the prepotential \p{Eg-Han-L4}.
It also possesses a $SU(2)$ isometry,
 the explicit realization of which can be found in \cite{harm}.

\section{Discussion}
\setcounter{equation}0

We have proven that a pair of functions --- ${\cal L}^{+3a}(q^{+b}, u)$
entering the constraint \p{D++q=F}  and ${\cal L}(q^{+b}, u)$
entering the action \p{S-lin} --- describes a HKT geometry.  But one
should understand that this description has a high degree of redundancy:
 by simply performing a  variable change, one can arrive from a given pair
 $({\cal L}^{+3}, {\cal L})$ to a pair that looks completely different,
whereas geometry is, of course, the same. It is the same redundancy which is
 incorporated in the description of the Riemannian geometry by the metric tensor.

As a simple example, consider a flat metric described by the linear multiplets $q^{+a}$
 satisfying the constraint
$D^{++} q^{+a} = 0$ and a quadratic ${\cal L}$. Introduce new variables
\be
\lb{harm-diff}
q'^{+a} \ = \ q^{+a} + C (q^{+b} u^-_b)^2 u^{+a} \, .
\ee
It is evident that after such a change ${\cal L}$ is not quadratic any more and
a nontrivial ${\cal L}'^{+3a}$ appears.

The geometries are distinguished by the curvature invariants. For the manifold
of interest,
a particular convenient tool is the Obata curvature, which coincides with the
ordinary Riemann curvature
for hyper-K\"ahler manifolds, but differs from it in  a general HKT case.
The convenience of Obata connections
and Obata curvatures stems from the fact that a frame exists where the former
are explicitly expressed
via the components of the complex structures (see Theorem 2), which are the same
for the whole {\it family} of metrics characterized by a particular ${\cal L}^{+3 a}$ and different ${\cal L}$.
That means that the Obata curvature invariants are also the same for all members of
this family.

Any such family of metrics has a distinguished  representative with ${\cal L} \propto q^{+a} q^-_a$.
Its geometry is simpler that the geometry of the other family members.
In particular, one can prove the following simple theorem:

\begin{thm}
The torsion form for an HKT model with
${\cal L} \propto  \, q^{+a} q^-_a$ is closed, $dC = 0$.
\end{thm}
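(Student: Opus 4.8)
The plan is to translate the closedness of $C$ into the statement that the component Lagrangian of the model carries \emph{no term quartic in the fermions}. Indeed, the general ${\cal N}{=}1$ action \p{SgenN1} produces the component Lagrangian \p{L(N=1)comp}, whose purely four-fermion piece is $-\tfrac1{12}\,\partial_K C_{SPM}\,\psi^K\psi^S\psi^P\psi^M$; since the $\psi^M$ are anticommuting and $C_{SPM}$ is already totally antisymmetric, this contracts only the fully antisymmetric part of $\partial_K C_{SPM}$ and is therefore proportional to $(dC)_{KSPM}\,\psi^K\psi^S\psi^P\psi^M$. Hence $dC=0$ is equivalent to the absence of a four-fermion term in the component Lagrangian, and it is this equivalent assertion that I would establish for ${\cal L}\propto q^{+a}q^-_a$.

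First I would record the fermionic content of the components once the harmonic constraints \p{Bos}--\p{A-} are solved. The bosonic $f^{+a}$ is a function of the coordinates alone, because the source in \p{Bos} is fermion-free; $\chi^a$ and $\bar\chi^a$ are \emph{linear} in the physical fermions through the bridges, as in \p{psi-most}; and $A^{-a}=\tilde A^{-a}+(\text{terms bilinear in }\chi,\bar\chi)$, the bilinear coming from the $\bar\chi\chi$ source in \p{A-}. Passing afterwards to the standard variables $x^M$ and $\psi^M$ involves only a change of bosonic coordinates and fermion redefinitions that are linear in $\psi$ — the vielbeins \p{Defpsi} and the regrouping \p{combining-psi} — none of which can create a quartic-fermion term. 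So it suffices to count fermions directly in the superspace integrand $q^{+a}q^-_a=q^{+a}D^{--}q^+_a$.

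Then I would perform the Grassmann integration, i.e.\ pick out the coefficient of $\theta^+\bar\theta^+\theta^-\bar\theta^-$ in the product. Because $q^{+a}$ is G-analytic, its four components sit on the monomials $1,\theta^+,\bar\theta^+,\theta^+\bar\theta^+$, whereas the components of $q^-_a$ are those listed in \p{q-a}. Matching $\theta$-weights leaves exactly four surviving pairings — $f^{+a}\dot A^-_a$, $\chi^a\dot{\bar\chi}_a$, $\bar\chi^a\dot\chi_a$ and $A^{-a}\dot f^+_a$ — every other weight assignment requiring $q^{+a}$ to supply a $\theta^-$ or a $\bar\theta^-$, which it cannot. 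Each of these four contributions carries at most two fermion fields; in particular the one combination that could yield four, namely $A^{-a}$ times $A^-_a$, never appears, since it would need $A^-_a$ at weight $\theta^-\bar\theta^-$ inside $q^-_a$, whereas \p{q-a} puts the fermion-free $2i\dot f^+_a$ there. Therefore the component Lagrangian is at most bilinear in the fermions, its four-fermion term vanishes and, by the first paragraph, $dC=0$. As a sanity check, for ${\cal L}^{+3a}$ a gradient this has to collapse to Theorem~8, where the torsion vanishes identically.

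I do not expect a genuine obstacle — the argument is essentially bookkeeping — but the delicate points are in the last two steps: one must make sure that the fermion bilinears hidden in $A^{-a}$ and in the time derivatives $\dot A^-,\dot\chi,\dot{\bar\chi}$ are never multiplied by a further fermionic factor, and that the whole passage to $(x^M,\psi^M)$ stays strictly linear in the fermions. A completely independent route would be to take the explicit Bismut torsion of the ${\cal L}\propto q^{+a}q^-_a$ model obtained in \cite{DI} (see also Appendix\,C), where ${\cal F}_{ab}=\Omega_{ab}$, and to check $\partial_{[R}C_{SPM]}=0$ by a direct computation; the fermion-counting argument is preferable because it renders the closedness manifest with no computation at all.
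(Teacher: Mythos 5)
Your proposal is correct and follows essentially the same route as the paper: both reduce $dC=0$ to the absence of a four-fermion term in the component Lagrangian (since that term contracts only the totally antisymmetric part of $\partial_K C_{SPM}$), and both then observe that the Grassmann integration of $q^{+a}q^-_a$ leaves only the pairings $\dot f^{+a}A^-_a$ and $\chi^a\dot{\bar\chi}_a$ (plus total derivatives), each at most bilinear in the fermions even after accounting for the $\bar\chi\chi$ bilinear hidden in $A^{-a}$ via the constraint \p{A-}. Your additional checks — that the quartic candidate $A^{-a}A^-_a$ cannot arise because $q^-_a$ carries $2i\dot f^+_a$ at weight $\theta^-\bar\theta^-$, and that the passage to $(x^M,\psi^M)$ is linear in the fermions — only make explicit what the paper leaves implicit.
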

In other words, we are dealing in this case with the so-called {\it strong} HKT geometry
\cite{strongHKT-def,GPS}.

\begin{proof}
A generic ${\cal N} = 4$ component Lagrangian \p{L(N=1)comp} includes a 4-fermion term.
The latter vanishes iff the torsion form is closed. Thus, it is sufficient to prove
that the full component action of the HKT model with quadratic ${\cal L}$ does not
involve such term.

Let us substitute in   ${\cal L} \ = \ q^{+a} q^-_a$ the expansions
\p{Expq+} and \p{q-a}. We derive
\be
\lb{Lcomp-ferm}
L \ =\ -\frac{1}{8}\,\int d^4\theta \, du \,{\cal L} \ =\
\frac{i}{2}\, \int du \left[ {\dot f}^{+a}(t, u) A^-_a(t,u) + \chi^a(t,u)
\dot {\bar \chi}_a(t, \tilde{u})\right]
+  {\rm total\ derivative} \, .
\ee
This Lagrangian involves only bi-fermion terms
[note that they are  also present in the  term $\sim {\dot f}A$
including the field $A^-_a(t,u)$ that satisfies
the constraint \p{A-}], while 4-fermion terms are absent.
\end{proof}

The same statement was proved in \cite{DI}, proceeding from the explicit expression for the torsion.

When ${\cal L}$ is not quadratic, the torsion form is not closed in most cases.
 But in some cases it can happen to be closed. For example, in the linear
case ${\cal L}^{+3a} = 0$  the metric coincides for $n=1$ with the 4-dimensional
flat metric multiplied by a certain conformal factor $G(x)$. The torsion form is
\cite{CKT-OKT}
\be
C \ \sim \ \varepsilon_{MNPQ}\, \partial_Q G(x) \, dx^M \wedge dx^N \wedge dx^P \, .
\ee
If $\partial_M \partial_M  G(x) = 0$, this form is closed. The general
criterions of closedness of the torsion in terms of the
potentials ${\cal L}^{+3 a}$ and ${\cal L}$ were given in \cite{DI}.

The results outlined above are illustrated in Table ~\ref{tabl:tab1}.
\begin{table}[h!]
\caption{HK and HKT geometries.}
\label{tabl:tab1}
\begin{center}
\renewcommand{\arraystretch}{2}
\begin{tabular}{|c|c|c|}
\cline{2-3}
\multicolumn{1}{c|}{}
& ${\cal L} = q^{+a} q^-_a$ & ${\cal L} \neq q^{+a} q^-_a$ \\
\hline
\multirow{2}{*}{${\displaystyle {\cal L}^{+3a}  =
\Omega^{ab} \,\frac {\partial {\cal L}^{+4}}{\partial q^{+b}}}$} & HK  &   \\
&$C=0$& weak HKT  \\
\cline{1-2}
\multirow{2}{*}{${\displaystyle {\cal L}^{+3a}  \neq \Omega^{ab}
 \,\frac {\partial {\cal L}^{+4}}{\partial q^{+b}}}$} & strong HKT   &
$dC \neq 0$ (generically)  \\
&$dC = 0$&  \\
\hline
\end{tabular}\\
\end{center}
\end{table}

The vielbeins for the different HKT manifolds belonging to a
 particular  family described above are interrelated
by the transformations \p{e->eR}. As was mentioned, these
 transformations involve
$2n^2 - n$ relevant parameters. For $n=1$, there is only one
such parameter --- a conformal factor
by which a metric can be multiplied. The families including
a hyper-K\"ahler metric as a member (in the four-dimensional case, these are the manifolds
 conformally equivalent to hyper-K\"ahler ones)
play a distinguished role. Let us call the HKT metrics belonging to
such families {\it reducible}
and the metrics not related to any hyper-K\"ahler metric by a transformation
\p{e->eR} {\it irreducible}.
One can prove the following noteworthy theorem:

\begin{thm}
Irreducible HKT metrics exist.
\end{thm}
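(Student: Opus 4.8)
The plan is to establish existence by exhibiting a single member of the construction that is not reducible. By the discussion preceding the statement, in the simplest case $n=1$ the residual freedom \p{e->eR} in passing between members of a family reduces to an overall conformal rescaling of the metric, so ``reducible'' there means ``(locally) conformal to a hyper-K\"ahler metric''. Hence it suffices to (i) produce a four-dimensional HKT metric that genuinely arises from an action \p{S-lin} with a nonlinear $({\bf 4},{\bf 4},{\bf 0})$ multiplet, and (ii) prove that no conformal factor turns it into a hyper-K\"ahler metric.

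For step (i) the natural candidate is the Delduc--Valent metric of Ref.~\cite{DV}. One should check that it indeed fits our scheme, i.e. that it follows from some pair $({\cal L}^{+3a},{\cal L})$ obeying the constraints \p{D++q=F}, \p{Bos}--\p{A-} with ${\cal L}^{+3a}$ \emph{not} a gradient \p{L3=dL4} and not removable by an analytic field redefinition of the type \p{harm-diff}; this identification was carried out in \cite{DI} (Sect.~7 there). Equivalently, one could start in harmonic superspace from an explicit charge-$+3$ potential of this non-gradient type, solve \p{Bos}--\p{A-} by the iterative procedure of Sect.~5.2, and read off the metric, the Delduc--Valent family being its closed-form instance.

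For step (ii) I would invoke Theorem~7 together with the fact (recalled after Definition~6) that on a hyper-K\"ahler manifold the Obata connection coincides with the Levi--Civita connection. By Theorem~7 the Obata connection, and hence every Obata curvature invariant, is the same for all metrics of the family fixed by ${\cal L}^{+3a}$, which for $n=1$ is a conformal class. If this family contained a hyper-K\"ahler member, its Obata curvature would be the Riemann curvature of a four-dimensional hyper-K\"ahler, hence Ricci-flat and self-dual, metric, and this would pin down the common Obata curvature of the whole family. It therefore suffices to evaluate one suitable (conformally sensitive) curvature quantity of the Delduc--Valent metric --- for instance the Ricci-type contraction of its Obata curvature, or equivalently a Bach-type / Kozameh--Newman--Tod conformal obstruction --- and to observe that it is incompatible with Ricci-flatness of any conformally related metric. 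This finite computation was performed in \cite{Valent} and \cite{Papa}, where precisely this non-conformal-to-Einstein property of the Delduc--Valent metric was established. Thus the corresponding HKT geometry cannot belong to a family containing a hyper-K\"ahler metric; it is irreducible, and the theorem follows.

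The hard part is step (ii): one must exclude \emph{all} conformal rescalings simultaneously, which cannot be done with any single-metric curvature invariant and requires a genuinely conformal obstruction (a Bach-type tensor or the conformal-to-Einstein conditions), together with the explicit verification that the chosen example violates it. That computation is the technical core, and it is the part already supplied by \cite{Valent,Papa}.
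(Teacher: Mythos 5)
Your proposal is correct in outline and follows the same skeleton as the paper: both arguments exhibit the Delduc--Valent metric \p{metr-DV} obtained from the non-gradient constraints \p{cons-DV} with $\rho\neq 0$, both reduce irreducibility for $n=1$ to the statement that no real conformal rescaling \p{popytka} turns the metric into a hyper-K\"ahler one, and both ultimately defer the decisive computation to \cite{Valent,Papa}. Where you genuinely diverge is in the mechanism for step (ii). The paper exploits Eq.~\p{tildeI=I}: the quaternionic complex structures are the \emph{same} for every member of the family, so a hypothetical hyper-K\"ahler member would have to make these specific, explicitly known structures covariantly constant with respect to the Levi--Civita connection of the rescaled metric, Eq.~\p{nab-I=0}. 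That is a first-order system which \emph{determines} the conformal factor $G$ (following \cite{Callan}); the obstruction is then simply that the resulting $G$ comes out complex, so no real solution exists. You instead propose a curvature-level obstruction: by Theorem~7 the Obata curvature is common to the whole family, for a hyper-K\"ahler member it would coincide with the Riemann curvature of a Ricci-flat metric, so a nonvanishing Obata Ricci or a Bach/Kozameh--Newman--Tod conformal-to-Einstein obstruction would finish the proof. This is legitimate and in one respect more invariant (the Obata Ricci is manifestly family-independent, and you rightly note that no single-metric invariant can exclude all conformal factors at once), but it is weaker where it leans on conformal-to-Einstein conditions, since conformal-to-hyper-K\"ahler is strictly stronger and the paper's route uses the fixed complex structures to avoid that gap. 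Two small inaccuracies: the embedding of the Delduc--Valent metric into this scheme is the content of Sect.~5 of the present paper applied to \p{cons-DV} (the metric itself being Eq.~(24) of \cite{DV}), not of Sect.~7 of \cite{DI}; and what \cite{Valent,Papa} actually supply, as used here, is the complexity of the solved-for conformal factor rather than a Bach-type computation, so your write-up attributes to those references a calculation slightly different from the one they perform.
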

This is a nontrivial statement. Even though the ``stem member''
 of a given HKT family with quadratic
${\cal L}$ may have a nontrivial ${\cal L}^{+3a}$ not given by \p{L3=dL4}
and the corresponding manifold is not hyper-K\"ahler,
it is not obvious that the constraints cannot be brought to a form which satisfy
\p{L3=dL4}  by a variable change
$q^{+a} \to    q^{\prime +a}$ [after which ${\cal L}$ is not quadratic any more --- cf.
the discussion around \p{harm-diff}].
There are many such metrics which seem to be irreducible,
but are in fact reducible in disguise.

\begin{proof}
To prove the theorem, it is sufficient to indicate at
least one example of an irreducible metric.
Such a metric was constructed in Ref. \cite{DV}.
Consider a 4-dimensional model ($n=1$) with
${\cal L} = q^{+a}q^-_a$ and the constraints
\be
\lb{cons-DV}
D^{++} q^{+1} \ \equiv \ {\cal L}^{(+3)1} &=&   (q^{+1})^2 q^{+2} (\lambda + i\rho) \nn
D^{++} q^{+2} \ \equiv \ {\cal L}^{(+3)2}  &=& -  q^{+1} (q^{+2})^2  (\lambda - i\rho)
\ee
with  real $\lambda, \rho$. When $\rho = 0$, we go back to the Taub-NUT system, but if
$\rho \neq 0$, $\partial_a {\cal L}^{+3a} \neq 0$,  the constraints \p{cons-DV} are not
expressed as in \p{L3=dL4} and we are dealing with a nontrivial
HKT manifold. Using a general technique described in Sect. 5,
one can derive its metric \footnote{We quote Eq.~(24) of Ref.~\cite{DV} where we set $\gamma_0 = 1$.}:
\be
\lb{metr-DV}
4d\tau^2 \ =\ V^{-1}(s) (d\Psi + \omega)^2 + V(s) \Gamma \, ,
\ee
where
\be
V(s) \ =\ \frac 1s + \lambda, \qquad \omega \ =\ \cos
\theta d\phi - \frac {\rho(1 +\lambda s)}{1 + \rho^2 s^2}ds\, , \nn
\Gamma \ =\ \frac {ds^2}{(1 + \rho^2 s^2)^2} +
\frac {s^2}{1 + \rho^2 s^2} (d\theta^2 + \sin^2 \theta d\phi^2) \, .
\ee
In the limit $\rho \to 0$, this metric goes over to the Taub-NUT metric \p{ds-TN}.
For $\rho \neq 0$, the Bismut connection of this model involves a nontrivial torsion.
The explicit expression for the torsion form is \cite{DV}
\be
\lb{torsion-DV}
C \ =\ \frac {\rho s (1 + \lambda s)}{(1 + \rho^2 s^2)^2}\,
ds \wedge \sin \theta d\theta \wedge d\phi \, .
\ee
In accord with Theorem 9, this form is closed.

Now, if the HKT metric \p{metr-DV} were reducible,   a certain conformal transformation
\be
\lb{popytka}
d\tilde{\tau}^2 \ =\ G(s, \Psi, \theta, \phi) d\tau^2
\ee
making the metric $d\tilde{\tau}^2$  hyper-K\"ahler would exist.
Equivalently, the vielbeins would be conformal
to the hyper-K\"ahler ones, with a real conformal factor $\sqrt{G}$.
Then the ordinary Levy-Civita covariant derivatives
of the complex structures [the latter do not change under such conformal
transformations --- see Eq.\p{tildeI=I}]
with the Christoffel symbols following from  \p{popytka}
must vanish:
\be
\lb{nab-I=0}
\nabla_M \, I_N^{\ P}  =  \nabla_M \, J_N^{\ P}  =  \nabla_M \, K_N^{\ P}
\ =\ 0 \, .
\ee
This condition can be solved to determine the conformal factor $G$ \cite{Callan},
but it turns out
\footnote{We address the readers to the papers \cite{Valent,Papa} for details.}
that in the particular case of the Delduc-Valent metric \p{metr-DV}, this procedure gives
a complex function $G(s, \Psi, \theta, \phi)$ and a complex ``hyper-K\"ahler metric''.
Real solutions to the constraints \p{nab-I=0} do not exist.

\end{proof}

There is an interesting question that has not been completely clarified yet. In this paper,
we explained in relatively simple terms that the harmonic action \p{S-lin}
with the constraints \p{D++q=F} lead to a HKT geometry.
One can conjecture that {\it any} HKT metric may be described in this way, but we did
not prove that.

As we mentioned at the end of Sect. 3, we know that any hyper-K\"ahler metric {\it is}
determined by the harmonic prepotential ${\cal L}^{+4}$. This was derived without
resorting to supersymmetry,
but solving the constraint that the Riemann curvature form $R_A^{\ B}$ belongs to
$sp(n)$. A similar program was carried out for strong HKT manifolds (with closed torsion form)
in \cite{DeKaSo}. By
solving the constraint that the curvature form $\hat{R}_A^{\ B}$ of the {\it Bismut}
connection belongs to
$sp(n)$, the authors showed that any strong HKT geometry is described by the prepotential  ${\cal L}^{+3a}$
(up to a difference in notation). It would be interesting
to generalize this result to the weak HKT geometry and show that any HKT metric is
derived from the
data including ${\cal L}^{+3 a}$ and arbitrary (not necessarily quadratic) ${\cal L}$.

It would be also very interesting to build up a supersymmetric description of generic
bi-HKT manifolds, not only those
that were described by the linear multiplets. One may guess
  \cite{DI} that this would require extending the framework of ${\cal N}=4, d=1$
harmonic superspace to the bi-harmonic superspace \cite{IvNied}.

\bigskip
\section*{Acknowledgements}

We are indebted to Fran\c{c}ois Delduc for the collaboration at the initial stage of this project and
to Gueo Grantcharov for useful discussions.  S.F. \& E.I. acknowledge support from the RFBR Grant No. 18-02-01046,
Russian Science Foundation Grant No. 16-12-10306,
Russian Ministry of Education and Science, project No. 3.1386.201 and a grant
of the IN2P3-JINR Programme. They would like to thank SUBATECH, Universit\'{e} de Nantes,
for the warm hospitality in the course of this study.

\section*{Appendix A: One-dimensional harmonic superspace}
\def\theequation{A.\arabic{equation}}
\setcounter{equation}0

We give here only the basic formulas that we use in the main text. \footnote{We mostly follow the conventions of  Ref. \cite{IvLe},
where more details can be found, but  the sign of
 $t$ and $t_A$ in Eqs. \p{tA}, \p{D++} and \p{D--} below is opposite  compared to that in
 Refs. \cite{IvLe,DI} and coincides with the convention used in  Ref.  \cite{Maxim1}
  and the review \cite{FIL12}. The latter convention is more convenient, giving a more natural
sign for the   fermion kinetic term and more natural quantization prescription \cite{FIL09}.}
Consider the case ${\cal N} = 4$. We trade the ordinary
${\cal N}{=}\,4$, $d{=}\,1$ supercoordinates $\left(t,\theta_i, \bar\theta^i \right)$, $\bar\theta^i=(\theta_i)^\ast$, $i=1,2$,
by the supercoordinates
\begin{equation}\label{harm-scoord}
\left(t\,,\theta^\pm\,, \bar\theta^\pm\, , u^\pm_i \right)\,,
\end{equation}
where additional commuting variables $u^\pm_i$ are the $SU(2)/U(1)$ harmonics:
\begin{equation}\label{harm-1}
u^+{}^i u^-_i=1\,,\qquad (u^+{}^i)^\ast=u^-_i\,,
\end{equation}
and new Grassmann coordinates are the harmonic projections of $\theta_i, \bar \theta^i$,
\begin{equation}\label{harm-theta}
\theta^\pm=\theta^i u^\pm_i\,,\qquad \bar\theta^\pm=\bar\theta^i u^\pm_i\,.
\end{equation}
It is convenient to define a generalized ``tilde'' conjugation $A\to\widetilde A$ which is a combination
of usual complex conjugation and antipodal reflection on the sphere $SU(2)/U(1)$.
This transformation acts as
\begin{equation}\label{harm-tilde}
\widetilde{\theta^\pm}=\bar\theta^\pm\,,\qquad \widetilde{\bar\theta^\pm}=-\theta^\pm\,,\qquad
\widetilde{u^\pm_i}=u^\pm{}^i\,.
\end{equation}
We introduce then the analytic time\, \footnote{In most formulas written in the main text, the index $A$ is not displayed, however.}
\be
\lb{tA}
t_A = t + i(\theta^+ \bar \theta^- + \theta^- \bar \theta^+)\,,
\ee
which is invariant under \p{harm-tilde}.

We need also the harmonic covariant derivatives defined according to
\be
\lb{D0}
D^0 &=&  \partial^0 +   \theta^+ \frac {\partial}{\partial \theta^+ }
+  \bar\theta^+ \frac {\partial}{\partial \bar\theta^+} -  \theta^- \frac {\partial}{\partial \theta^- } -
\bar \theta^- \frac {\partial}{\partial \bar \theta^- }\,, \\
D^{++} &=& \partial^{++}  + 2i\theta^+ \bar \theta^+ \frac {\partial}{\partial t_A} +
\theta^+ \frac {\partial}{\partial \theta^- } +  \bar \theta^+ \frac {\partial}{\partial \bar\theta^- } \,,
\lb{D++} \\
D^{--} &=& \partial^{--} + 2i\theta^- \bar \theta^- \frac {\partial}{\partial t_A} +
\theta^- \frac {\partial}{\partial \theta^+ } +  \bar \theta^- \frac {\partial}{\partial \bar\theta^+ } \,,
\lb{D--}
\ee
where
\be
\lb{part-pmpm}
\partial^0 \ =\ u_i^+ \frac {\partial}{\partial u_i^+ } -  u_i^- \frac {\partial}{\partial u_i^- }\, , \qquad
\partial^{++} \ =\ u^+_i \frac {\partial}{\partial u_i^-} \, , \qquad
\partial^{--} \ =\ u^-_i \frac {\partial}{\partial u_i^+} \, .
\ee

All the superfields that we deal with have a definite {\it harmonic charge}
$D^0$. For example, $q^{+a}$ carries harmonic charge 1, ${\cal L}^{+4}$
carries harmonic charge 4, etc.
The harmonic derivatives satisfy the algebra
\be
\lb{harm-DEP-alg}
[D^{++}, D^{--}] \ =\ D^0 \, , \qquad [D^0, D^{\pm \pm} ] \ =\ \pm D^{\pm \pm} \, ,
\ee
the same as for the ``short'' derivatives \p{part-pmpm}.

The basic convenience of the harmonic approach stems from the fact that one can formulate the theory in terms of
{\it analytic superfields} living in analytic superspace
\begin{equation}\label{harm-scoord-AS}
\left(t_A\,,\theta^+\,, \bar\theta^+\, , u^\pm_i \right) \ \equiv\ (\zeta, u) \, .
\end{equation}
Correspondingly, their $\theta$ expansion is much shorter than for generic superfields that depend in addition on $\theta^-$ and $\bar \theta^-$.

Note that, though the basic superfields that we use are analytic, the Grassmann measure in Eq.~\p{S-lin} and the subsequent formulas
refers to the full superspace,
\be
\lb{Gr-measure}
d^4\theta \ =\ d\theta^+ d\bar \theta^+ d\theta^- d\bar \theta^- \,, \quad \int d^4\theta (\theta^+\bar\theta^+\theta^-\bar\theta^-) =1\,.
\ee
The harmonic integrals are normalized to $\int du = 1$.

We define ${\cal N}{=}\,8$ harmonic superspace (that we use it in Sect. 3) in a similar way.  The odd coordinates $\theta_{i\alpha}$
carry now the extra
index $\alpha = 1,2$ and $\bar \theta^{i\alpha} \stackrel{\rm def}= (\theta_{i\alpha})^*$. We consider the harmonic projections
of these coordinates
with respect to the index $i$,
$\theta^{\pm}_{\alpha}=\theta^{i}_{\alpha}u^{\pm}_i$ .
In analogy to  \p{harm-scoord-AS} we define the analytic harmonic superspace involving four Grassmann coordinates
$\theta^{+}_{\alpha}$, $\bar\theta^{+\alpha}$. The expressions for the analytic time $t_A$  and harmonic covariant derivatives
have the same form as \p{tA} and
\p{D0}--\p{D--}, but  involve an extra summation over $\alpha$. The Grassmann integration measure  over the analytic superspace
is defined as in \cite{harm},
\be
\int  d^2 \theta^+ d^2 \bar \theta^+\, (\theta^+)^2 (\bar \theta^+)^2 =1\,, \quad
(\theta^+)^2 := \theta^+_\alpha\bar \theta^{+ \alpha}\,,
(\bar\theta^+)^2 := \bar\theta^{+ \alpha}\bar \theta^{+}_{\alpha}\,.
\ee

\setcounter{equation}{0}

\section*{Appendix B: Canonical form of complex structures}
\def\theequation{B.\arabic{equation}}
\setcounter{equation}0

Here we show how to reduce  generic tangent space quaternionic complex structures $(I^p)_{AB}$ to the constant matrices \p{I-HK}, using
the tangent space gauge freedom $O(4n)$ and the defining quaternionic conditions \p{quatern}.

We will use the spinor notation, in which the relation \p{quatern} amounts to the following one
\be
(I^p)_{ai}^{\;bj} (I^q)_{bj}^{\;ck} = -\delta^{pq}\delta^c_a\delta^k_i  + \varepsilon^{pqu}(I^u)_{ai}^{\;ck}\,.\lb{QAspin}
\ee
Due to the antisymmetry property, $(I^p)_{ai\;bj} = - (I^p)_{bj\;ai},$ we can parametrize  $(I^p)_{ai\;bj}$ as
\be
(I^p)_{ai\;bj} = \varepsilon_{ij} {\cal F}^p_{(ab)} + {\cal B}^p_{[ab]\,(ij)}\,,\lb{FandB}
\ee
where, for the moment, ${\cal F}^p$ and ${\cal B}^p$ are arbitrary functions. The complex structures are transformed
under the general tangent space $O(4n)$
gauge rotations as
\be
\delta(I^p)_{ai\;bj} = \Lambda_{ai}^{\;a'i'} (I^p)_{a'i'\; bj} + \Lambda_{bj}^{\;b'j'} (I^p)_{ai \;b'j'}\,.
\ee
Here,
\be
\Lambda_{ai}^{\;a'i'} = \varepsilon^{i'k}\Omega^{a' b}\Lambda_{ai\, bk}\,, \quad \Lambda_{ai\, bk} = -\Lambda_{bk\, ai}
:= \varepsilon_{ik} \Lambda_{(ab)} + \Lambda_{[ab]\,(ik)}
\ee
and $\Lambda_{(ab)}, \Lambda_{[ab]\,(ik)}$ are arbitrary gauge parameters, representing, respectively,
$Sp(n) \subset O(4n)$ [$n(2n + 1)$ parameters] and the coset
$O(4n)/Sp(n)$ [$3n(2n-1)$ parameters]. For the functions defined in \p{FandB}, these infinitesimal transformations
amount to the following ones
\be
&& \delta {\cal F}^p_{(ab)} = \big[\Lambda_{(ad)}\Omega^{dc}{\cal F}^p_{(cb)} + (a \leftrightarrow b)\big]
-\frac12 \big[ \Lambda_{[ad]\,(il)}\Omega^{dc} {{\cal B}^p_{[cb]}}{\,}^{(il)} + (a \leftrightarrow b)\big], \nn
&& \delta {\cal B}^p_{[ab]\,(ij)} = \big[\Lambda_{(ad)} \Omega^{dc} {\cal B}^p_{[cb]\,(ij)}
+ \Lambda_{[ad]\,(ij)}\Omega^{dc}{\cal F}^p_{(cb)} -(a \leftrightarrow b)\big]  \nn
&& \qquad\qquad - \,\frac12\big[\Lambda_{[ad]\,(i}^{\qquad l)}\Omega^{dc} {\cal B}^p_{[cb]\;(lj)}
+\Lambda_{[ad]\,(j}^{\qquad l)}\Omega^{dc} {\cal B}^p_{[cb]\;(li)} -
(a \leftrightarrow b)\big].
\ee

The natural assumption is that $(I^p)_{ai\;bj}$ have the flat limit in which $(I^p)_{ai\;bj} \,\Rightarrow\, i \Omega_{ab} (\sigma^p)_{ij}$.
Extracting this constant part from ${\cal B}^p_{[ab]\,(ij)}$,
\be
{\cal B}^p_{[ab]\,(ij)} \quad \Rightarrow \quad i \Omega_{ab} (\sigma^p)_{ij} + \tilde{{\cal B}}^p_{[ab]\,(ij)},
\ee
and looking at the inhomogeneous part of the transformation of $\tilde{{\cal B}}^p_{[ab]\,(ij)}\,$,
\be
\delta\tilde{{\cal B}}^p_{[ab]\,(kj)} = i \big[ \Lambda_{[ab]\, (kl)} (\sigma^p)^l_j + (k \leftrightarrow j)\big] + \ldots\,,\lb{InHom}
\ee
we observe that the $O(4n)/Sp(2n)$ gauge transformations are capable to gauge away
some components of ${\cal B}^p_{[ab]\,(kj)}\,$.
Expanding
$$
{\cal B}^p_{[ab]\,(kj)} = {\cal B}^{p q}_{[ab]}\,(\sigma^q)_{(kj)}\,,
$$
it can be shown that one can, e.g., gauge away ${\cal B}^{1 2}_{[ab]}, {\cal B}^{1 3}_{[ab]}$ and ${\cal B}^{3 1}_{[ab]}$,
after which the
$O(4n)/Sp(n)$ gauge freedom gets fully exhausted and we end up with the subgroup $Sp(n)$ as the only residual
tangent space gauge group.\footnote{It is worth to note that one
cannot choose the gauge like ${\cal B}^{1q}_{[ab]} = 0$ or, e.g., ${\cal B}^{2q}_{[ab]} = 0$;
using the gauge transformations \p{InHom}, one can remove two components from
${\cal B}^{p_0q}_{[ab]}$, $p_0$ being fixed, and one component from ${\cal B}^{pq}_{[ab]}\,, \;p \neq p_0\,$.}

After this total use of the tangent space gauge freedom, there still remain ${\cal F}^p_{(ab)}, {\cal B}^{1 1}_{[ab]},
{\cal B}^{3 2}_{[ab]}, {\cal B}^{3 3}_{[ab]}$ and the whole set
${\cal B}^{2 q}_{[ab]}$. Now we are going to show that all of them can be eliminated by exploiting the quaternionic algebra \p{QAspin}.

We start from
\be
&&(I^1)_{ai\;bj} = \varepsilon_{ij} {\cal F}^1_{[ab]} + \big[i\Omega_{ab} + {\tilde{\cal B}}^{11}_{[ab]}\big] (\sigma^1)_{ij}\,, \lb{I1} \\
&&(I^3)_{ai\;bj} = \varepsilon_{ij} {\cal F}^3_{[ab]} + i\Omega_{ab}(\sigma^3)_{ij} + {\tilde{\cal B}}^{31}_{[ab]}(\sigma^1)_{ij}
+ {\tilde{\cal B}}^{32}_{[ab]}(\sigma^2)_{ij}\,. \lb{I3}
\ee
The relation \p{QAspin} with $p=q = 1$ implies
\be
&& 2i {\cal F}^1_{(ac)} - {\cal F}^1_{(a}{\,}^{\;b)}{\tilde{\cal B}}^{11}_{[b c]}
- {\cal F}^1_{(c}{\,}^{\;b)}{\tilde{\cal B}}^{11}_{[b a]} = 0\,, \lb{F1} \\
&& {\cal F}^1_{(a}{\,}^{\;b)}{\cal F}^1_{(bc)} - 2i{\tilde{\cal B}}^{11}_{[a c]}
+ {\tilde{\cal B}}^{11}_{[a}{\,}^{\;b]}{\tilde{\cal B}}^{11}_{[b c]} = 0\,. \lb{B11}
\ee
Eq. \p{F1} gives ${\cal F}^1_{(ac)} = 0$, then \p{B11} yields ${\tilde{\cal B}}^{11}_{[b c]} = 0$, whence
\be
(I^1)_{ai\;bj} = i\Omega_{ab}(\sigma^1)_{ij}\,. \lb{I1fin}
\ee

Next, making use of the gauge-fixed form \p{I3} for $I^3$,  we exploit \p{QAspin} with $p= 1, q=3\,$.
This relation yields ${\cal F}^2_{(ac)} = {\cal F}^3_{(ac)} =
{\tilde{\cal B}}^{31}_{[a c]} = {\tilde{\cal B}}^{21}_{[a c]} = {\tilde{\cal B}}^{22}_{[a c]} = 0\,,
{\tilde{\cal B}}^{23}_{[a c]} = -{\tilde{\cal B}}^{32}_{[a c]}$, thus
fixing $I^3$ and $I^2$ as
\be
(I^2)_{ai\;bj} = i\Omega_{ab}(\sigma^2)_{ij} + {\tilde{\cal B}}^{23}_{[a b]}(\sigma^3)_{ij}\,, \quad
(I^3)_{ai\;bj} = i\Omega_{ab}(\sigma^3)_{ij} - {\tilde{\cal B}}^{23}_{[a b]}(\sigma^2)_{ij}\,. \lb{I2I3fin}
\ee
Now it is straightforward to check that the complex structures \p{I1fin} and \p{I2I3fin}
satisfy the whole set of relations \p{QAspin} under the single condition
\be
{\tilde{\cal B}}^{23}_{[a d]} \Omega^{db}{\tilde{\cal B}}^{23}_{[b c]} = 0\,. \lb{Condfin}
\ee
Obviously, ${\tilde{\cal B}}^{23}_{[a d]} = 0$ is a solution of \p{Condfin}, and it yields
just the canonical constant form \p{Ip} for the tangent space complex structures. Actually ${\tilde{\cal B}}^{23}_{[a d]} = 0$
is the only solution which ensure the complex structures to form a triplet with respect to the global $SU(2)$ acting on the doublet
indices $i, k$.

The above proof is quite analogous to the statement that the complex structure for the $2n$ dimensional K\"ahler manifold
in the tangent-space representation can be reduced to the canonical constant form with non-zero holomorphic and anti-holomorphic
entries by fully fixing the gauge tangent space freedom $O(2n)/U(n)$ (see, e.g., \cite{harm}).  The specificity
of the hyper-K\"ahler case is that,
besides fixing the gauge freedom $O(4n)/Sp(n)$, one should essentially use the quaternionic algebra \p{QAspin} in order to reduce
the corresponding triplet of the complex structures to the canonical form   \p{Ip}.
\setcounter{equation}{0}

\section*{Appendix C: More on the Obata connection}
\def\theequation{C.\arabic{equation}}
\setcounter{equation}0

Here we derive the form of Obata connection in the manifestly $SU(2)$ covariant framework of Ref. \cite{DI}.
Note that this issue, like the one in Appendix D, were never discussed before.
We will use some results of \cite{DI} without explicit derivation. Also we stick to the conventions of the present
paper concerning the sign of vielbeins.

We will start with giving the tangent-space form of the Bismut connection
\be
\hat{\Gamma}_{\u{i} \u{a}, \u{k} \u{b}\, \u{l} \u{c}} = {\Gamma}_{\u{i} \u{a}, \u{k} \u{b} \,\u{l} \u{c}} +
\frac12 C_{\u{i} \u{a}\, \u{k} \u{b}\, \u{l} \u{c}}\,,  \quad  C_{\u{i} \u{a} \,\u{k} \u{b}\, \u{l} \u{c}} =
\varepsilon_{\u{i}\u{l}}\,\nabla_{\u{k}\u{a}}\, G_{[\u{c} \u{b}]} +
\varepsilon_{\u{k}\u{l}}\,\nabla_{\u{i}\u{b}} \,G_{[\u{a} \u{c}]}\,, \lb{BiCon}
\ee
where
\be
\nabla_{\u{i}\u{a}}\, G_{[\u{c} \u{b}]}  = \int du \,(M^{-1})^{a}_{\u{a}}(M^{-1})^{c}_{\u{c}}(M^{-1})^{b}_{\u{b}}\,
\big(\partial_{- a}{\cal F}_{cb}\,u^-_{\u{i}} + {\cal D}_{+ a} {\cal F}_{cb}\,u^+_{\u{i}}\big) \lb{DefnablaG}
\ee
and
\be
&& {\cal D}_{+ a} {\cal F}_{cb} = \nabla_{+ a}{\cal F}_{cb} + E^{-d}_{ac}{\cal F}_{db} + E^{-d}_{ab}{\cal F}_{cd}\,, \quad
 \nabla_{+ a} = \partial_{+a} + E^{-2 d}_a \partial_{-b}\,, \nn
&& {\cal D}^{++}  E^{-d}_{ac} =  E^{+d}_{ac}\,, \quad E^{+d}_{ab} = \partial_{+a}\partial_{+b}\,{\cal L}^{+ 3 d}\,. \lb{DefE-daac}
\ee
The derivations $\nabla_{\u{i}\u{a}}\, G_{[\u{c} \u{b}]} $ satisfy the cyclic identity
\be
\nabla_{\u{i}\u{a}}\, G_{[\u{b} \u{c}]}  + {\rm cycle} (\u{a}, \,\u{c}, \, \u{b}) = 0\,, \lb{Cyclic}
\ee
from which it is easy, e.g.,  to prove the total antisymmetry  of $C_{\u{i} \u{a}\, \u{k} \u{b}\, \u{l} \u{c}}$ in \p{BiCon} with
respect to the permutations of the tangent space index pairs.

Let us now deform  $C_{\u{i} \u{a}\, \u{k} \u{b}\, \u{l} \u{c}}$  in the following way
\be
C_{\u{i} \u{a}\, \u{k} \u{b}\, \u{l} \u{c}} \; \Rightarrow \; \tilde{C}_{\u{i} \u{a}\, \u{k} \u{b}\, \u{l} \u{c}} :=
C_{\u{i} \u{a}\, \u{k} \u{b}\, \u{l} \u{c}}  - 2 \varepsilon_{\u{i} \u{l} }\, \nabla_{\u{k}\u{c}}\, G_{[\u{a} \u{b}]}\,.\lb{ObataA}
\ee
Using the identity \p{Cyclic}, it is straightforward to check that
\be
\tilde{C}_{\u{i} \u{a}\, \u{k} \u{b}\, \u{l} \u{c}} = \tilde{C}_{\u{i} \u{a}\, \u{l} \u{c}\, \u{k} \u{b}}\,,
\ee
so this tensor defines a new symmetric connection
\be
\tilde{\Gamma}_{\u{i} \u{a}, \u{k} \u{b}\, \u{l} \u{c}} = {\Gamma}_{\u{i} \u{a}, \u{k} \u{b} \,\u{l} \u{c}} +
\frac12 \tilde{C}_{\u{i} \u{a}, \u{k} \u{b}\, \u{l} \u{c}}\,.  \lb{ObataB}
\ee
which proves to be just the Obata connection as the quaternionic complex structures turn out to be covariantly constant
with respect to it. Note that the tensor $\tilde{C}$ is defined up to the addition
\be
\sim \varepsilon_{\u{k} \u{l} }\, \nabla_{\u{i}\u{a}}\, G_{[\u{b} \u{c}]}\,, \lb{extraCond}
\ee
which is symmetric with respect to the permutation $\u{k}\u{b}\, \Leftrightarrow \,\u{l}\u{c}$ in itself. However,
such a structure is ruled out by the requirement of the covariant constancy of the quaternionic complex structures
with respect to $\tilde{\Gamma}$ \p{ObataB}.

Let us show that the quaternionic complex structures are indeed covariantly constant with respect to \p{ObataB}.
It will be convenient to write the triplet of constant complex structures in the tangent space representation as
\be
I_{(\u{k}  \u{l})}{}^{\;\;\u{i} \u{a}}_{\u{j} \u{b}} = \frac{i}2 \delta^{\u{a}}_{\u{b}}
\big(\varepsilon_{\u{k} \u{j}}\,\delta^{\u{i}}_{\u{l}}  + \varepsilon_{\u{l} \u{j}}\,\delta^{\u{i}}_{\u{k}}\big)\,.\lb{klCompl}
\ee
The condition of the covariant constancy of this complex structure with respect to any affine connection $\hat{\tilde{\Gamma}}$
in the tangent space representation amounts to the following general form of $\hat{\tilde{\Gamma}}$ \footnote{The second term in
\p{GenSol} is none other than the properly restricted spin connection.}
\be
 \hat{\tilde{\Gamma}}^{\u{i} \u{a}}_{\u{k} \u{b}\, \u{l} \u{c}}
 = e_{\u{k} \u{b}}^{{k} {b}}\partial_{kb} e^{\u{i} \u{a}}_{lc} e^{lc}_{\u{l} \u{c}}
+ \delta^{\u{i}}_{\u{l}} \,{\cal F}^{\u{a}}_{\u{k} \u{b} \u{c}}\,.\lb{GenSol}
\ee
In particular, for Bismut connection
\be
{\cal F}^{\u{a}}_{\u{k} \u{b} \u{c}} = G^{[\u{a}\, \u{d}]}\nabla_{\u{k}\u{c}}\, G_{[\u{d}\, \u{b}]} - D^{\;\;\u{a}}_{\u{k} \;\u{b}\, \u{c}}\,,
\ee
where
\be
D^{\;\;\u{a}}_{\u{k} \;\u{b}\, \u{c}} = e^{lb}_{\u{k} \,\u{b}}\,\partial_{lb} M^{\u{a}}_{d} (M^{-1})^d_{\u{c}} +
u^+_{\u{k}} \,E^{- \u{a}}_{\u{b}\, \u{c}}\,.\lb{DefinD}
\ee
Taking into account that
\be
 \tilde{\Gamma}^{\u{i} \u{a}}_{\u{k} \u{b}\, \u{l} \u{c}} = \varepsilon^{\u{i}\u{j}}\,G^{[\u{a}\,\u{d}]}\,
\tilde{\Gamma}_{\u{j} \u{d},  \u{k} \u{b}\, \u{l} \u{c}} = \hat{\Gamma}^{\u{i} \u{a}}_{\u{k} \u{b}\, \u{l} \u{c}} -
\delta^{\u{i}}_{\u{l}}\,G^{[\u{a}\,\u{d}]}\,\nabla_{\u{k} \u{c}}\,G_{[\u{d}\,\u{b}]}\,,
 \ee
 we see that the complex structures  \p{klCompl}, as well as their world-index  images,
 are covariantly constant with respect to the connection $\tilde{\Gamma}$
 \be
 \tilde{\cal D}_{ia} I_{(\u{k} \u{l})}{\,}^{\;\;jb}_{tc} =0\,.
 \ee
 So the symmetric connection just defined obeys the full set of the properties required for the Obata connection and can be identified with
 the latter\footnote{The structure \p{extraCond} does not match with the general form \p{GenSol} and so breaks the covariant constancy
 condition.}. By {\bf Lemma 4}, it is unique and so should reduce to \p{Obata} in the special frame lacking manifest $SU(2)$ covariance.
 It is also instructive to represent this
 connection in the general form \p{GenSol}
 \be
 \tilde{\Gamma}^{\u{i} \u{a}}_{\u{k} \u{b}\, \u{l} \u{c}} = e_{\u{k} \u{b}}^{{k} {b}}\partial_{kb} e^{\u{i} \u{a}}_{lc} e^{lc}_{\u{l} \u{c}}
 - \delta^{\u{i}}_{\u{l}} \,D^{\;\;\u{a}}_{\u{k} \;\u{b}\, \u{c}}\,.
 \ee

 As was already mentioned, the metric   is not covariantly constant
 with respect to the Obata connection, as opposed to the Levi-Civita or Bismut connections. Indeed, it is easy to find
 \be
  e_{\u{i} \u{a}}^{{i} {a}}  e_{\u{k} \u{b}}^{{k} {b}}  e_{\u{l} \u{c}}^{{l} {c}} \big(\tilde{\cal D}_{ia}\,g_{kb,\,lc}\big) =
  \varepsilon_{\u{k} \u{l}}\,\nabla_{\u{i} \u{a}}\,G_{[\u{b}\,\u{c}]} \neq 0\,.\lb{noncovar}
 \ee
 Note that the hyper-K\"ahler  case corresponds to the condition $\nabla_{\u{i} \u{a}}\,G_{[\u{b}\,\u{c}]} =0$ \cite{DI}
 \footnote{In the hyper-K\"ahler case, ${\cal F}_{ab} = \Omega_{ab}\,, \; E^{-d}_{ac} = \Omega^{db}E^{-}_{(bac)}\,, \;
 E^{+d}_{ac} = \Omega^{db}E^{+}_{(bac)}$ and $E^{+}_{(bac)} = \partial_{+b}\partial_{+a}\partial_{+c}\,{\cal L}^{+ 4}$, which implies
 $\partial_{-a}{\cal F}_{bc} = {\cal D}_{+a}{\cal F}_{bc} = 0$ and so the vanishing of the r.h.s. of \p{DefnablaG} .}
 under which the torsion
 in \p{BiCon}  and the r.h.s. of  \p{noncovar} vanish and the Obata connection gets identical
 to the Levi-Civita connection.
\setcounter{equation}{0}

\section*{Appendix D: ${\cal N}=4$ transformations of fermions}
\def\theequation{D.\arabic{equation}}
\setcounter{equation}0

Our last topic is the derivation of the transformation law of the central basis harmonic-independent
fermionic fields under ${\cal N}=4$ supersymmetry.

The  ${\cal N}=4$ supersymmetry transformation laws of the harmonic-dependent  fermionic fields $\chi^a$ and $\bar{\chi}^a$ are
as follows \cite{DI}
\be
&& \delta \chi^a = 2i\bar{\epsilon}^- \dot{f}^{+ a} - \bar{\epsilon}^+ \big[ 2i \big(\dot{f}^{-a} - E^{-2a}_b \dot{f}^{+ b}\big)
- E^{-a}_{bc}\bar{\chi}^b\chi^c\big], \nonumber \\
&&  \delta \bar\chi^a =  - 2i\epsilon^- \dot{f}^{+ a} + {\epsilon}^+ \big[ 2i \big(\dot{f}^{-a} - E^{-2a}_b \dot{f}^{+ b}\big)
- E^{-a}_{bc}\bar{\chi}^b\chi^c\big]. \lb{chi1}
\ee

After some algebra, for the harmonic-independent fermionic variables $\psi^{\u{i}\u{a}} = (\psi^{\u{a}}, \bar\psi^{\u{a}})$ defined in
\p{combining-psi} we  obtain the following transformation rule
\be
\delta \psi_{\u{i}}^{\;\;\u{a}} = -2i \,\epsilon_{\u{i}\u{j}}\,\dot{x}^{kd}\, e^{\u{j}\u{a}}_{kd} -
\epsilon_{\u{k}\u{j}}\, D^{\u{j}\u{a}}_{\;\;\u{b}\u{c}}\,\psi^{\u{k}\u{b}}\,\psi_{\u{i}}^{\;\;\u{c}}\,, \lb{Prom}
\ee
where $\epsilon_{\u{i}\u{j}} = \varepsilon_{\u{j}\u{k}}\epsilon^{\u{k}}_{\u{i}}, \;\epsilon^{\u{k}}_{\u{1}}
\equiv \epsilon^{\u{k}}, \,\epsilon^{\u{k}}_{\u{2}}
\equiv \bar\epsilon^{\u{k}}\,.$ Next, we pass to $\psi^{ia} = e^{ia}_{\u{k}\u{b}}\,\psi^{\u{k}\u{b}}$ and find the ${\cal N}=4$
transformation of  $\psi^{ia}$ as
\be
\delta \psi^{ia} = \delta x^{kd} \,\partial_{kd}e^{ia}_{\u{k}\u{b}}\,\psi^{\u{k}\u{b}} + e^{ia}_{\u{k}\u{b}}\,\delta \psi^{\u{k}\u{b}}\,,
\ee
where $\delta x^{kd}$ is given by \p{delx=epsi-matr}.
At this step, it is convenient to split the transformation parameter into the singlet and traceless triplet parts as in \p{eps-kl},
$\epsilon^{\u{k}}_{\u{i}} = i\epsilon_0 \delta^{\u{k}}_{\u{i}} + \epsilon^p (\sigma^p)^{\u{k}}_{\u{i}}$. After some work, we obtain for
the singlet transformation
\be
\delta_0 \psi^{ia} =  -2\epsilon_0 \, \dot{x}^{ia} - i\epsilon_0\, e^{ia}_{\u{l}\u{b}}\big( \partial_{[jc} e^{\u{l}\u{b}}_{kd]} -
e^{\u{t}\u{a}}_{[jc}e^{\u{l}\u{g}}_{kd]}\,D^{\;\;\u{b}}_{\u{t}\;\u{a}\u{g}}\big)\psi^{jc}\psi^{kd}.
\ee
The expression within the brackets vanishes due to the identity (5.32) in Ref. \cite{DI} \footnote{This identity and some
other identities listed in \cite{DI} can be derived
from the definitions of vielbeins, $\partial_{kb}f^{+ a} M_{a}^{\u{c}} = e^{\u{k}\u{c}}_{kb} u^+_{\u{k}}$,
$(\partial_{kb}f^{- a} - E^{-2a}_c\partial_{kb}f^{+ c}) M_{a}^{\u{d}}  = e^{\u{k}\u{d}}_{kb} u^-_{\u{k}}\,,$ and the definition \p{DefinD}.}.
Then
\be
\delta_0 \psi^{ia} =  -2\epsilon_0 \, \dot{x}^{ia}\,,\lb{N1ferm}
\ee
thus confirming the property that  $\psi^{ia}$ and $x^{ia}$ for each index pair $ia$ are components of ${\cal N}=1$
multiplet\footnote{To achieve
the literal  correspondence with the ${\cal N}=1$ transformation rule \p{epsilon0comp} one is led to rescale the time variable in
\p{N1ferm} as $ t \rightarrow  2t$.}.

After some efforts, the triplet part of the variation can be cast in the following nice form
\be
\delta_3\psi^{ia} = 2 \epsilon^{p} (I^{p})^{\;\;\;ia}_{kd}\dot{x}^{kd} -i \epsilon^{p} (I^p)_{[jd}^{\;\;kc}\,\tilde{\Gamma}^{ia}_{tb] \,kc}\,
\psi^{jd}\psi^{tb}\,. \lb{Tripl}
\ee
Here, $\tilde{\Gamma}^{ia}_{tb \,kc}$ is Obata connection discussed in Appendix C. We see that this geometric object essentially enters
the  nonlinear ${\cal N}=4$ transformation laws of the harmonic-independent fermionic fields in the general ${\cal N}=4$ supersymmetric
HKT model. Presumably, it is not difficult to check that in the special coordinate frames the transformation law \p{Tripl}
for each value of $p$  coincides with \p{delXN2-comp} which follows from the ${\cal N}=1$ superfield formalism and lacks
manifest $SU(2)$ symmetry. We leave it for  an inquisitive reader.

\end{document}